\newtheorem{theorem}{Theorem}[section]
\newtheorem{corollary}[theorem]{Corollary}
\newtheorem{definition}{Definition}[section]
\newtheorem{lemma}[theorem]{Lemma}
\newtheorem{proposition}[theorem]{Proposition}
\numberwithin{equation}{section} 
\def\@seccntformat#1{\@ifundefined{#1@cntformat}%
	{\csname the#1\endcsname\quad}
	{\csname #1@cntformat\endcsname}
}
\newif\ifShowComments
\def\strutdepth{\dp\strutbox}
\def\druk#1{\strut\vadjust{\kern-\strutdepth
        {\vtop to \strutdepth{%
                \baselineskip\strutdepth\vss
                        \llap{\hbox{#1}\quad}\null}}}}
\title{\bf
A Bimodal Model for Extremes Data
}
\author[1]{Cira E. G. Otiniano \thanks{ciragotiniano@gmail.com}}
\author[1]{Bianca Sousa  \thanks{bianca.unb@gmail.com} }
\author[1]{Roberto Vila \thanks{rovig161@gmail.com}}
\author[2]{Marcelo Bourguignon \thanks{m.p.bourguignon@gmail.com}}
\affil[1]{Departamento de Estat\'istica, Universidade de Bras\'ilia, 70910-900, Bras\'ilia, Brazil}
\affil[2]{Departamento de Estat\'istica, Universidade Federal do Rio Grande do Norte, 59078-970, Natal/RN, Brazil}
\begin{document}
\maketitle

\begin{abstract}
{

In extreme values theory, for a sufficiently large block size, the maxima distribution is approximated by the generalized
extreme value (GEV) distribution. The GEV distribution is a family of continuous probability distributions, which has wide applicability in several areas including hydrology, engineering, science, ecology and finance. However, the GEV distribution is not suitable to model extreme bimodal data.
In this paper, we propose an extension of the GEV distribution that incorporate an additional
parameter. The additional parameter introduces bimodality and to vary tail weight, i.e.,
this proposed extension is more flexible than the GEV distribution.
Inference for the proposed distribution were performed under the likelihood
paradigm. A Monte Carlo experiment is conducted to evaluate the performances of these
estimators in finite samples with a discussion of the results. Finally,
the proposed distribution is applied to environmental data sets, illustrating
their capabilities in challenging cases in extreme value theory.

}
\end{abstract}
\smallskip
\noindent
{\small {\bfseries Keywords.} {Bimodality $\cdot$ Environmental data $\cdot$  Generalized extreme value distribution $\cdot$ Maximum likelihood}}
\\
{\small{\bfseries Mathematics Subject Classification (2010).} {MSC 60E05 $\cdot$ MSC 62Exx $\cdot$ MSC 62Fxx.}}


%
\section{Introduction}
\noindent

The  generalized extreme value (GEV) distribution is widely used to model extreme events in several areas such as finance, insurance, hydrology, bioinformatics, among others. In journals of statistics and applied areas, a huge amount of works with applications of the GEV distribution can be found. Theory and applications of extreme value distributions can be found in the books by Kotz and Nadarajah (2000)
, Coles (2001)
, Beirlant et al. (2004)
,  Haan and Ferreira (2010) 
, Embrechts et al. (2013), Longin (2016)
, Scheirer (2017) 
, among others.

A random variable $Y$ has a Generalized Extreme Value (GEV) distribution with shape parameter $\xi$, location parameter $\mu\in\mathbb{R}$, and scale paramer $\sigma>0$  denoted by
$X\sim F(\cdot;\xi,\mu,\sigma)$, if its probability density function (PDF) is given by
\begin{eqnarray*}
	f_{}(y;\xi,\mu,\sigma)
	=
	\begin{cases}
		\displaystyle
		\frac{1}{\sigma}\Big[ 1+\xi \Big(\frac{y-\mu}{\sigma} \Big)\Big]^{(-1/\xi)-1}\,
		\exp\biggl\{ -\Big[ 1+\xi \Big(\frac{y-\mu}{\sigma} \Big)\Big]^{-1/\xi} \biggr\}, & \xi\neq 0,
		\\[0,4cm] \noindent
		\displaystyle
		\frac{1}{\sigma}\exp\biggl\{-\Big(\frac{y-\mu}{\sigma} \Big)-\exp\Big[-\Big(\frac{y-\mu}{\sigma} \Big)\Big]\biggr\},
		& \xi=0,
	\end{cases}
	\quad
\end{eqnarray*}
valid for $y>\mu-\sigma/\xi$ in the case $\xi>0$, and $y<\mu-\sigma/\xi$ in the case $\xi<0$.

The cumulated distribution function (CDF) of the GEV distribution is given by
\begin{eqnarray}\label{cdfgev1}
	F_{}(y;\xi,\mu,\sigma)
	=
	\begin{cases}
		\displaystyle
		\exp\biggl\{ -\Big[ 1+\xi \Big(\frac{y-\mu}{\sigma} \Big)\Big]^{-1/\xi} \biggr\}, & \xi\neq 0,
		\\[0,4cm]
		\displaystyle
		\exp\biggl\{-\exp\Big[-\Big(\frac{y-\mu}{\sigma} \Big)
		\Big]  \biggr\}, & \xi=0.
	\end{cases}
\end{eqnarray}
When $\xi=0$ the pdf \eqref{cdfgev1} is also known as the Gumbel distribution.

Here  a GEV random variable with $\xi \neq 0$  and standar scale we denote by $Y \sim F_{\rm G}(\cdot; \xi, \mu) =F_{}(\cdot;\xi,\mu,1)$. In this case, the PDF and CDF functios are given, respectively, by
\begin{eqnarray}\label{pdfgev2}
	f_{\rm G}(y;\xi,\mu)=	\left[ 1+\xi (y-\mu)\right]^{(-1/\xi)-1}\,
	\exp\left\{ -\left[ 1+\xi (y-\mu)\right]^{-1/\xi} \right\}
\end{eqnarray}
and
\begin{eqnarray*}
	F_{\rm G}(y;\xi,\mu)=	\exp\left\{ -\left[ 1+\xi (y-\mu)\right]^{-1/\xi} \right\}.
\end{eqnarray*}

In extreme value modeling, in special, environmental data (see Section 7)
data with bimodality has appeared more and more. The statistical modeling of this type of data requires distributions that capture bimodality.
In this context, Nascimento et al. (2016) 
provided some new extended models to the GEV distribution as a baseline function and the transmuted GEV distribution  introduced by Aryal and Tsokos (2009) 
, showing advantages compared with the standard GEV distribution. 
Eljabri and Nadarajah (2017) studied the Kumaraswamy GEV distribution.
However, all the models cited above are
not suitable for capturing bimodality.
In this sense, the mixture of GEV distributions and the mixture of extreme value distributions has been an alternative ( Otiniano et al. (2016) 
). The difficulty with these models is in the estimation process, as six sub-models must be considered.

This paper presents a new model for extremes, based on transformation of the
standard GEV, with the hope it yields a ``better fit'' in certain extremes analysis
(see Sect. \ref{sec:5}). The inclusion of additional parameter in the GEV role is to
introduce bimodality, that is, the new distribution is more flexible
than the GEV distribution, and without computational complications for the estimation of its parameters in the case $\xi\neq 0$. 

In Sect. \ref{sec:2}, we define the BGEV model. In Sect. \ref{sec:3},
we derive the main properties of the bimodal GEV distribution.
In Sect. \ref{sec:4}, we provide graphical illustrations.
In Sect. \ref{sec:5}, inference procedure is carried out under the
likelihood paradigm.
Already, in Sect. \ref{sec:6}
discusses some simulation results for the estimation method.
In Sect. \ref{sec:7}, two illustrative applications
in environmental data sets are investigated.
Conclusions are addressed in Sect. \ref{sec:8}.

\section{The Bimodal GEV distribution}
\label{sec:2}
\noindent

In this section,  we introduce a bimodal distribution of extreme value (BGEV)
as Rathei and Swamme (2007) 
.

We define a random variable with BGEV distribution and parameters $\xi\neq 0, \mu, \sigma, \delta$,
denoted $X\sim F_{\rm BGEV}(\cdot; \xi,\mu,\sigma, \delta)$,
if its PDF is given by
\begin{align}\label{pdfbgev1}
	f_{\rm BGEV}(x;\xi,\mu,\sigma, \delta)&=f_{\rm G}\big(T_{ \sigma,\delta}(x);\xi, \mu\big)\, T_{ \sigma,\delta}'(x)
	\nonumber
	\\[0,1cm]
	&=
	\sigma(\delta+1)|x|^\delta \big[ 1+\xi (\sigma x|x|^\delta-\mu)\big]^{(-1/\xi)-1}
	\exp\left\{ -\big[ 1+\xi (\sigma x|x|^\delta-\mu)\big]^{-1/\xi} \right\},
\end{align}
where
\begin{align}\label{trans}
T_{\sigma,\delta}(x)
=
\sigma x|x|^\delta \,, \quad x\in\mathbb{R},  \ \  \delta\geqslant -1,   \ \ \sigma> 0
\end{align}
is an invertible transformation with derivatives
\begin{align}\label{t-lin}
T_{ \sigma,\delta}'(x)= \sigma(\delta+1)|x|^\delta\,,
\quad
T_{\sigma,\delta}''(x)= {\rm sign}(x)\, \sigma (\delta+1)\delta|x|^{\delta-1}\,,
\end{align}
and
\[
T_{ \sigma,\delta}^{(k)}(x)
=
\big[{\rm sign}(x)\big]^{k-1}\, \sigma \, \Bigg[\prod_{i=-1}^{k-2}(\delta-i)\Bigg] |x|^{\delta-(k-1)}\,,
\quad
k\geqslant 2\,,
\]
where ${\rm sign}(\cdot)$ denotes the sign function.

\smallskip

Since  $T_{ \sigma,\delta}$ is non-decreasing for $ \delta> -1 $,  then, its inverse function, denoted by $T_{ \sigma,\delta}^{-1}$, exists and is given by
\begin{align}\label{tinv}
T_{\sigma,\delta}^{-1}(x)={\rm sign}(x)
\biggl( \frac{|x|}{\sigma}\biggr)^{1/(\delta+1)}.
\end{align}

The PDF (\ref{pdfbgev1}) is valid for $x>T_{ \sigma,\delta}^{-1}(\mu-1/\xi)$ in the case $\xi>0$, and for
$x<T_{ \sigma,\delta}^{-1}(\mu-1/\xi)$ in the case $\xi<0$.

\smallskip
Note that, for $ \xi> 0 $, the function  $ f _{\rm BGEV} $ is  a PDF, because
\begin{align*}
\int_{T_{\sigma,\delta}^{-1}(\mu-1/\xi)}^{\infty} f_{\rm BGEV}(x;\xi,\mu,\sigma, \delta) \, {\rm d}x
=
\int_{\mu-1/\xi}^{\infty} f_{\rm G}\big(T_{ \sigma,\delta}(x);\xi, \mu \big)\, {\rm d}T_{\sigma,\delta}(x)
=
1\,.
\end{align*}
Similarly, it is verified that, for $\xi<0$, $f_{\rm BGEV}$ is a PDF.

\smallskip

The expression \eqref{pdfbgev1} is equivalent to
\[
f_{\rm BGEV}(x;\xi,\mu,\sigma, \delta)
=
(F_{\rm G}\circ T_{ \sigma,\delta})'(x )\,,
\]
where
\begin{equation}\label{cdfbgev1}
(F_{\rm G}\circ T_{ \sigma,\delta})(x; \xi, \mu)=F_{\rm BGEV}(x; \xi,\mu,\sigma, \delta)
\end{equation}
is the CDF of $X$.

When  $\delta=0$  the  function \eqref{pdfbgev1} is of type  \eqref{cdfgev1};  $F_{\rm BGEV}(x; \xi,\mu,\sigma, 0)= F_{\rm G}(x; \xi,\mu/\sigma,1/\sigma)$.


\section{Some properties of the BGEV distribution}
\label{sec:3}
\noindent

In this section, some mathematical properties as monotonicity,  bimodality property, stochastic representation, moments, quantiles and  tail behavior of the BGEV distribution are discussed.
\subsection{Monotonicity}
\noindent

Let $m_\xi$ be the unique mode for the GEV distribution \eqref{pdfgev2}.
To state the next result we define the following quantities:
$x^{\min}_\xi=\min\{0,T_{\sigma,\delta}^{-1}(m_\xi)\}$ and
$x_\xi^{\max}=\max\{0,T_{ \sigma,\delta}^{-1}(m_\xi)\}$.
\begin{proposition}
The following monotonicity properties it hold:
\begin{itemize}
\item[\rm 1)] If $\xi> 0$ and $\delta<0$, then the {\rm BGEV} PDF  is increasing for each $x<x^{\min}_\xi$;
\item[\rm 2)] If $\xi> 0$ and $\delta>0$, then the {\rm BGEV} PDF  is decreasing for each $x<x^{\min}_\xi$;
\item[\rm 3)] If $\xi< 0$ and $\delta<0$, then the {\rm BGEV} PDF  is decreasing for each $x>x^{\max}_\xi$;
\item[\rm 4)] If $\xi< 0$ and $\delta>0$, then the {\rm BGEV} PDF  is increasing for each $x>x^{\max}_\xi$.
\end{itemize}
\end{proposition}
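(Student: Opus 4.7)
The plan is to reduce each item to a sign analysis of $f'_{BGEV}(x)$. From the product form in \eqref{pdfbgev1}, $f_{BGEV}(x) = f_G(T_{\sigma,\delta}(x))\, T'_{\sigma,\delta}(x)$, so the chain/product rule yields
\[
f'_{BGEV}(x) \;=\; f'_G(T(x))\,[T'(x)]^2 \;+\; f_G(T(x))\,T''(x).
\]
On the interior of the support with $x\neq 0$, both $f_G(T(x))>0$ and $[T'(x)]^2>0$, so the sign of $f'_{BGEV}(x)$ is dictated by the combined signs of $f'_G(T(x))$ and $T''(x)$. Because $m_\xi$ is the unique mode of $f_G$, $f'_G(y)>0$ for $y<m_\xi$ and $f'_G(y)<0$ for $y>m_\xi$; since $T_{\sigma,\delta}$ is strictly increasing for $\delta>-1$, this translates to $\mathrm{sign}(f'_G(T(x))) = \mathrm{sign}(T^{-1}(m_\xi)-x)$, while \eqref{t-lin} gives $\mathrm{sign}(T''(x)) = \mathrm{sign}(x)\,\mathrm{sign}(\delta)$.

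Items~1 and~3 follow at once, because both contributions carry the same sign. In item~1, the hypothesis $x<x^{\min}_\xi=\min\{0,T^{-1}(m_\xi)\}$ forces $x<0$ and $x<T^{-1}(m_\xi)$; with $\delta<0$, $\mathrm{sign}(x)\,\mathrm{sign}(\delta)=(-)(-)=+$, hence $f'_G(T(x))>0$ and $T''(x)>0$. Both summands of $f'_{BGEV}(x)$ are positive, so $f_{BGEV}$ is increasing on this half-line. Item~3 is symmetric: $x>x^{\max}_\xi$ forces $x>0$ and $x>T^{-1}(m_\xi)$, giving $f'_G(T(x))<0$ and, with $\delta<0$, also $T''(x)<0$, so $f'_{BGEV}(x)<0$.

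Items~2 and~4 are the main obstacle, because $\delta>0$ flips the sign of $T''$ while leaving that of $f'_G(T(x))$ unchanged, so the two summands in $f'_{BGEV}(x)$ have opposite signs. The claim then reduces to showing that the $f_G\,T''$ term dominates $f'_G\,[T']^2$ throughout the stated half-line. I would pass to the logarithmic derivative
\[
\frac{f'_{BGEV}(x)}{f_{BGEV}(x)} \;=\; \frac{f'_G(T(x))}{f_G(T(x))}\,T'(x) \;+\; \frac{T''(x)}{T'(x)},
\]
use the explicit log-derivative of $f_G$ read from \eqref{pdfgev2} together with the clean identity $T''(x)/T'(x)=\delta\,\mathrm{sign}(x)/|x|$, and then compare growth rates on the relevant half-line. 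The delicate point, and the chief difficulty, is that both terms blow up as $x\to 0$, so the comparison must be uniform across the whole interval running from the support boundary up to $x^{\min}_\xi$ (or from $x^{\max}_\xi$ out to the support boundary).
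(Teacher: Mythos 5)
Your derivative decomposition $f_{\rm BGEV}'(x)=f_{\rm G}'(T(x))[T'(x)]^2+f_{\rm G}(T(x))\,T''(x)$ handles items 1) and 3) correctly, and in substance it is the paper's own argument in different clothing: the paper observes that on the relevant half-line $f_{\rm G}\circ T_{\sigma,\delta}$ and $T_{\sigma,\delta}'$ are two nonnegative functions that are both increasing (item 1) or both decreasing (item 3), so their product is monotone. Either route is fine for those two items.

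For items 2) and 4) you have located the genuine difficulty --- the two summands have opposite signs --- but the domination you propose to establish does not exist, so the gap cannot be closed along these lines (nor along any others). Take $\xi=1$, $\mu=0$, $\sigma=1$, $\delta=1$, so that the support is $x>-1$, the GEV mode is $m_\xi=2^{-1}-1=-1/2$, and $x^{\min}_\xi=T_{\sigma,\delta}^{-1}(-1/2)=-1/\sqrt{2}$. Then $f_{\rm BGEV}(x)=2|x|\,(1+x|x|)^{-2}\exp\{-(1+x|x|)^{-1}\}$ tends to $0$ as $x\downarrow -1$ and is strictly positive on $(-1,-1/\sqrt{2})$, hence cannot be decreasing on that interval; numerically it rises from $0$ to roughly $0.77$ there. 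The same boundary effect defeats item 4) for $-1<\xi<0$, where the density vanishes at the finite upper endpoint of the support. So items 2) and 4) are false as stated, and your instinct that they do not follow from a sign count alone is exactly right. Note that the paper's proof is no better on this point: it proves item 1) and asserts that the other items ``follow the same reasoning,'' but for items 2) and 4) one factor is increasing where the other is decreasing, so the product-of-monotone-functions argument does not apply there either.
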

\begin{proof}
Since $T_{ \sigma,\delta}'(x)= \sigma(\delta+1)|x|^\delta$, for $\delta<0$, $T_{\sigma,\delta}'(x)$ is decreasing (resp. increasing) for each $x>0$ (resp. $x<0$). For $\delta>0$, $T_{\sigma,\delta}'(x)$ is increasing (resp. decreasing) for each $x>0$ (resp. $x<0$).
On the other hand, since $m_\xi$ is the mode of the GEV distribution \eqref{pdfgev2},  $f_{\rm G}\big(T_{ \sigma,\delta}(x);\xi, \mu\big)$ is increasing (resp. decreasing) for each $x<x^{\min}_\xi$ (resp. $x>x^{\max}_\xi$) and $\xi>0$ (resp. $\xi<0$).
Since
$T_{ \sigma,\delta}'(x)$ is increasing and nonnegative for all $x<0$, and $f_{\rm G}\big(T_{ \sigma,\delta}(x);\xi, \mu\big)$ is  increasing  for each $x<x^{\min}_\xi$, $\xi> 0$ and $\delta<0$,  from definition \eqref{pdfbgev1} of BGEV density,
we have that $f_{\rm BGEV}(x;\xi,\mu,\sigma, \delta)=f_{\rm G}\big(T_{ \sigma,\delta}(x);\xi, \mu\big)\, T_{ \sigma,\delta}'(x)$ is the product of  two increasing and nonnegative functions.  Thus,
the BGEV PDF  is increasing for each $x<x^{\min}_\xi$. This completes the proof of first item.

The proof of the other items follows the same reasoning as the one of Item 1).
\end{proof}

\subsection{Bimodality property}
\noindent

\begin{proposition}\label{modes}
	The point $x$ is a critical point of {\rm BGEV} density \eqref{pdfbgev1}  if it is solution of the following
	equation:
	\[
	{T_{ \sigma,\delta}''(x)}
	-
	{1+\xi-\big[1+\xi (T_{ \sigma,\delta}(x)-\mu)\big]^{-1/\xi}\over 1+\xi (T_{ \sigma,\delta}(x)-\mu)}\,
	\big[T_{ \sigma,\delta}'(x)\big]^2
	=0,
	\]
	where $T_{ \sigma,\delta}(x)$, $T_{ \sigma,\delta}'(x)$ and $T_{ \sigma,\delta}''(x)$ are given in \eqref{trans} and \eqref{t-lin}.
\end{proposition}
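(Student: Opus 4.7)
The plan is to differentiate the factored form $f_{\rm BGEV}(x)=f_{\rm G}(T_{\sigma,\delta}(x);\xi,\mu)\, T_{\sigma,\delta}'(x)$ from \eqref{pdfbgev1}, set the derivative equal to zero, divide by the strictly positive factor $f_{\rm G}(T_{\sigma,\delta}(x);\xi,\mu)$, and then identify the resulting log-derivative of $f_{\rm G}$ explicitly.

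Applying the product rule together with the chain rule gives
\[
f_{\rm BGEV}'(x)
=
f_{\rm G}'\big(T_{\sigma,\delta}(x);\xi,\mu\big)\,\big[T_{\sigma,\delta}'(x)\big]^{2}
+
f_{\rm G}\big(T_{\sigma,\delta}(x);\xi,\mu\big)\, T_{\sigma,\delta}''(x).
\]
Since $f_{\rm G}>0$ on its support, the equation $f_{\rm BGEV}'(x)=0$ is equivalent to
\[
T_{\sigma,\delta}''(x)
+
\frac{f_{\rm G}'\big(T_{\sigma,\delta}(x);\xi,\mu\big)}{f_{\rm G}\big(T_{\sigma,\delta}(x);\xi,\mu\big)}\,\big[T_{\sigma,\delta}'(x)\big]^{2}
=
0.
\]

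The remaining task is a routine computation: evaluate the logarithmic derivative $(\log f_{\rm G})'(y;\xi,\mu)$ at $y=T_{\sigma,\delta}(x)$. Taking logarithms in \eqref{pdfgev2} yields
\[
\log f_{\rm G}(y;\xi,\mu)
=
\Big(-\tfrac{1}{\xi}-1\Big)\log[1+\xi(y-\mu)]-[1+\xi(y-\mu)]^{-1/\xi},
\]
and differentiating in $y$ (using $(d/dy)[1+\xi(y-\mu)]^{-1/\xi}=-[1+\xi(y-\mu)]^{-1/\xi-1}$) gives
\[
\frac{f_{\rm G}'(y;\xi,\mu)}{f_{\rm G}(y;\xi,\mu)}
=
-\frac{1+\xi-[1+\xi(y-\mu)]^{-1/\xi}}{1+\xi(y-\mu)}.
\]
Substituting $y=T_{\sigma,\delta}(x)$ into this identity and inserting the result into the previous displayed equation produces exactly the stated critical-point equation.

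The only delicate point is the algebraic simplification of $(\log f_{\rm G})'$: one must correctly combine the two terms over the common denominator $1+\xi(y-\mu)$ so that the factor $1+\xi-[1+\xi(y-\mu)]^{-1/\xi}$ emerges in the numerator. Everything else is a mechanical application of the product and chain rules together with the derivative formulas for $T_{\sigma,\delta}$ recorded in \eqref{t-lin}, so no further argument is needed.
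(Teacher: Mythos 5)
Your proof is correct: the product/chain-rule differentiation of $f_{\rm BGEV}=\big(f_{\rm G}\circ T_{\sigma,\delta}\big)\cdot T_{\sigma,\delta}'$, division by the positive factor $f_{\rm G}$, and the logarithmic-derivative computation $f_{\rm G}'/f_{\rm G}=-\big(1+\xi-[1+\xi(y-\mu)]^{-1/\xi}\big)/\big(1+\xi(y-\mu)\big)$ all check out and yield exactly the stated equation. The paper declares this proof ``trivial and omitted,'' and your argument is precisely the routine computation the authors had in mind, so there is nothing to compare beyond noting that you have correctly supplied the omitted details.
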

\begin{proof}
	The proof is trivial and omitted.
\end{proof}

\begin{lemma}\label{Number of critical points}
Let $1/\xi$ be a natural number and $\delta\geqslant 2$.
The PDF of the {\rm BGEV} distribution has at most three real critical points.
\end{lemma}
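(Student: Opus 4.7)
The plan is to transform the critical-point equation of Proposition \ref{modes} into a polynomial equation in a suitable variable and then bound the number of positive real roots by Descartes' rule of signs. Write $n := 1/\xi\in\mathbb{N}$, $c := 1-\xi\mu$, and introduce $v := 1+\xi(T_{\sigma,\delta}(x)-\mu)$. Since $\xi>0$, the quantity $v$ is strictly positive on the support; and since $T_{\sigma,\delta}$ is a strictly increasing bijection of $\mathbb{R}$ (cf.\ \eqref{tinv}), the map $x\mapsto v$ is itself a monotone bijection between the support and $(0,\infty)$.

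For $x\neq 0$, multiplying the equation of Proposition \ref{modes} by $v^{1+1/\xi}=v^{n+1}$ clears the fractional power. Substituting the explicit forms of $T_{\sigma,\delta}'$ and $T_{\sigma,\delta}''$ from \eqref{t-lin} and using the identity $\sigma|x|^{\delta+1}=\mathrm{sign}(x)\,T_{\sigma,\delta}(x)$, one can arrange $\mathrm{sign}(x)$ as a common factor on every term and cancel it, so that \emph{the same} polynomial equation governs both $x>0$ and $x<0$. Replacing $T_{\sigma,\delta}(x)=(v-c)/\xi$ then reduces the identity to
\[
Q(v) \;:=\; D\,v^{n+1} \;-\; E c\,v^n \;-\; (\delta+1)\,v \;+\; (\delta+1)c \;=\; 0,
\]
with $D=\delta+1+\xi>0$ and $E=(\delta+1)(1+\xi)>0$. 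For $n\geqslant 2$ the coefficients of $v^{n-1},\dots,v^{2}$ all vanish; the case $n=1$ is a quadratic and can be handled directly.

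A short case analysis by Descartes' rule of signs---nonzero-coefficient sign pattern $(+,-,-,+)$ for $c>0$, $(+,+,-,-)$ for $c<0$, and the explicit factorization $Q(v)=v(Dv^n-(\delta+1))$ for $c=0$---shows that $Q$ has \emph{at most two} positive real roots in every case. Each such root corresponds, via $v\mapsto T_{\sigma,\delta}^{-1}((v-c)/\xi)$, to a unique critical point of $f_{\rm BGEV}$ with $x\neq 0$. The point $x=0$ must be treated separately: since $\delta\geqslant 2>1$, both $T_{\sigma,\delta}'(0)=0$ and $T_{\sigma,\delta}''(0)=0$, so $f_{\rm BGEV}'(0)=0$, and $x=0$ is an additional critical point exactly when it lies in the interior of the support---which, for $\xi>0$, occurs iff $c>0$. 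Combining the two contributions yields at most $2+1=3$ critical points in total. The main technical step is the symmetric cancellation of $\mathrm{sign}(x)$ in the reduction to $Q(v)$; once that polynomial is in hand, the Descartes analysis is nearly automatic.
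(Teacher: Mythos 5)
Your proof is correct, and at the level of strategy it is the same as the paper's: substitute $y=1+\xi(T_{\sigma,\delta}(x)-\mu)$, reduce the critical-point equation of Proposition \ref{modes} to a polynomial in $y$, bound the positive roots by Descartes' rule, and account for $x=0$ separately (which is a critical point for $\delta\geqslant 2$ because $T'_{\sigma,\delta}(0)=T''_{\sigma,\delta}(0)=0$). The substantive difference is in the polynomial itself. The paper reduces the equation to $|x|^{-2}T_{\sigma,\delta}(x)\big[\delta y^{1+1/\xi}-(\xi+1)y^{1/\xi}+1\big]=0$, whose bracket does not involve $\mu$; your reduction keeps the factor $(\delta+1)T_{\sigma,\delta}(x)=(\delta+1)(v-c)/\xi$ and arrives at the four-term polynomial $Q(v)=(\delta+1+\xi)v^{n+1}-(\delta+1)(1+\xi)c\,v^{n}-(\delta+1)v+(\delta+1)c$ with $c=1-\xi\mu$. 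Redoing the algebra from Proposition \ref{modes} confirms your $Q$: for instance, with $\xi=1/2$, $\mu=0$, $\sigma=1$, $\delta=2$ one gets $Q(v)=3.5v^{3}-4.5v^{2}-3v+3$, which has two positive roots and hence three critical points, consistent with the bimodal shape of $f_{\rm BGEV}(\cdot;0.5,0,1,2)$ in Fig.~\ref{F2}, whereas the paper's bracket $2y^{3}-1.5y^{2}+1$ is strictly positive on $(0,\infty)$ and would predict unimodality there. So your version repairs an algebraic slip in the published reduction, at the price of the three-way case analysis on ${\rm sign}(c)$, which you carry out correctly (sign patterns $(+,-,-,+)$, $(+,+,-,-)$, and the explicit factorization for $c=0$ all give at most two sign changes). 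Two minor remarks: for the stated upper bound of three you do not actually need to decide whether $x=0$ lies in the support, nor whether a positive root of $Q$ could coincide with $v=c$ (it cannot, since $Q(c)=-\delta\xi c^{\,n+1}\neq 0$ for $c\neq 0$); both checks only matter if one wants to decide when the bound is attained.
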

\begin{proof}
Let  $y=1+\xi (T_{ \sigma,\delta}(x)-\mu)$.
Since $\xi$ is a positive number, $y> 0$. By using definitions of $T_{ \sigma,\delta}(x)$, $T_{ \sigma,\delta}'(x)$ and $T_{ \sigma,\delta}''(x)$ in \eqref{trans} and \eqref{t-lin}, a simple algebraic manipulation shows that the equation of Proposition \ref{modes} is equivalent to
\begin{align*}
\vert x\vert^{-2}T_{ \sigma,\delta}(x)
\big[\delta y^{1+1/\xi}-(\xi+1)y^{1/\xi}+1\big]=0, \quad y>0.
\end{align*}
For $\delta\geqslant 2$, $x=0$ is a critical point of $f_{\rm BGEV}(x;\xi,\mu,\sigma,\delta)$.
So the number of positive roots of the polynomial $p(y)=\delta y^{1+1/\xi}-(\xi+1)y^{1/\xi}+1$, when $1/\xi$ a natural number, determines the number of remaining roots of the BGEV PDF.

By Descartes' rule of signs (Griffiths 1947 
; Xue 2012 
 ), the polynomial $p(y)$ has two sign changes (the sequence signs is $+$, $-$, $+$), meaning that this polynomial has two or zero positive roots.
 This completes the proof.
\end{proof}
\begin{theorem}[Uni- or bimodality]
If $1/\xi\geqslant 2$ is an even natural number and $\delta\geqslant 2$, then
the PDF of the {\rm BGEV} distribution is uni- or bimodal.
\end{theorem}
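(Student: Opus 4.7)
The plan is to combine the at-most-three critical-points bound of Lemma \ref{Number of critical points} with the vanishing of $f_{\rm BGEV}$ at both endpoints of its support to cap the number of local maxima at two. Since $1/\xi\geqslant 2$ is an even natural number (in particular $1/\xi\in\mathbb{N}$) and $\delta\geqslant 2$, the hypotheses of the Lemma are satisfied and $f_{\rm BGEV}$ has at most three real critical points. Because $\xi>0$, the support equals $(a,+\infty)$ with $a=T_{\sigma,\delta}^{-1}(\mu-1/\xi)$.

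I would first verify the boundary behaviour. As $x\to a^+$, $1+\xi(T_{\sigma,\delta}(x)-\mu)\to 0^+$, so the factor $\exp\{-[1+\xi(T_{\sigma,\delta}(x)-\mu)]^{-1/\xi}\}$ in \eqref{pdfbgev1} tends to $0$ faster than the polynomial singularity $[1+\xi(T_{\sigma,\delta}(x)-\mu)]^{-1/\xi-1}$ blows up, giving $f_{\rm BGEV}(x)\to 0$. As $x\to+\infty$, that same polynomial factor decays like $x^{-(\delta+1)(1+1/\xi)}$, which outpaces the $|x|^\delta$ factor because $(\delta+1)(1+1/\xi)>\delta$, so $f_{\rm BGEV}(x)\to 0$ there as well. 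Together with $f_{\rm BGEV}\geqslant 0$, continuity on $(a,\infty)$, and non-triviality, this forces at least one positive interior maximum.

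The concluding step is a pigeonhole-style count: if $f_{\rm BGEV}$ possessed $n$ distinct local maxima $x_1<\cdots<x_n$, then on each closed sub-interval $[x_i,x_{i+1}]$ the continuous function attains its infimum at some interior point (the endpoints being maxima), and that interior point is necessarily a local minimum, hence a critical point. This yields at least $n+(n-1)=2n-1$ distinct critical points in $(a,+\infty)$. Combined with the Lemma's bound, $2n-1\leqslant 3$ gives $n\leqslant 2$, proving uni- or bimodality.

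The main obstacle I anticipate is the bookkeeping around the critical point $x=0$, which lies in the support whenever $\mu<1/\xi$ and is, for $\delta\geqslant 2$, a local minimum with $f_{\rm BGEV}(0)=0$ (because of the $|x|^\delta$ factor). It must be identified as one of the at-most-three critical points and recorded as a minimum rather than a mode; this is consistent with the bimodal picture, in which the two maxima sit on opposite sides of $0$ when $p(y)=\delta y^{1+1/\xi}-(\xi+1)y^{1/\xi}+1$ has two positive roots. The even/natural hypothesis on $1/\xi$ enters only through Lemma \ref{Number of critical points} (via Descartes' rule), so the real content of the argument is the boundary analysis together with the critical-point count above.
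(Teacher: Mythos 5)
Your argument is correct and follows essentially the same route as the paper: both rest on the critical-point bound of Lemma \ref{Number of critical points} together with the vanishing of $f_{\rm BGEV}$ at the endpoints of its support. You simply make explicit the final counting step (between any two local maxima there must lie an interior local minimum, so $2n-1\leqslant 3$) that the paper leaves implicit after enumerating the roots of $p$, and you correctly flag that $x=0$ is a minimum rather than a mode.
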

\begin{proof}	
If $1/\xi$ is even, the polynomial $p(y)$ defined in Lemma \ref{Number of critical points} has odd degree.
Let $p(x)$ be the extension of polynomial $p(y)$ to $x$ in $\mathbb{R}$. Again, by Descartes' rule of signs  the polynomial $p(-x)$ has one sign change (the sequence signs is $-$, $-$, $+$), meaning that this one has one negative root. But as a sub-product of the proof of Lemma \ref{Number of critical points} we get that $p(x)$ has two or zero positive roots.
Since $p(x)$ has odd degree and the complex roots are given in pairs, we only have two possibilities: (a) $p(x)$ has  one negative root and zero positive roots or (b) $p(x)$ has one negative root and two positive roots.  From case (a) it follows that $p(y)$, $y>0$, has no positive roots, and by Lemma \ref{Number of critical points}, $x=0$ is the only critical point of BGEV density. Already, from case (b) it follows that $p(y)$, $y>0$, has exactly two positive roots, and therefore, the BGEV PDF 
has exactly three real critical points: $x=0$, $x=T_{ \sigma,\delta}^{-1}(\mu+(y_1-1)/\xi)$ and $x=T_{ \sigma,\delta}^{-1}(\mu+(y_2-1)/\xi)$.

Finally, since $\lim_{x\to\pm\infty}f_{\rm BGEV}(x;\xi,\mu,\sigma,\delta)=0$, the uni- or bimodality of the BGEV density follows.
\end{proof}

\subsection{Stochastic representation of the {\rm BGEV} distribution}
\noindent

\begin{proposition}
	Let $X\sim F_{\rm BGEV}(\cdot;\xi,\mu,\sigma, \delta)$.
	\noindent
	\begin{itemize}
		\item[\rm 1)] If $X\sim F_{\rm BGEV}(\cdot;\xi,\mu,\sigma, \delta)$, then $Y=T_{\sigma,\delta}(X)\sim F_{\rm G}(\cdot; \xi, \mu)$.
		\item[\rm 2)] If $Y\sim F_{\rm G}(\cdot; \xi, \mu)$, then $X=T_{ \sigma,\delta}^{-1}(Y)\sim F_{\rm BGEV}(\cdot; \xi,\mu,\sigma, \delta)$.
		\item[\rm 3)] If $X\sim F_{\rm BGEV}(\cdot;\xi,\mu,\sigma, \delta)$ and $c\neq 0$ a constant, then $X\sim F_{\rm BGEV}(\cdot;\xi,\vert c\vert\mu, c^2\sigma, \delta)$.
	\end{itemize}
\end{proposition}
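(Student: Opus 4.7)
The plan is to prove the three items by applying the change-of-variables formula to the defining factorization $f_{\rm BGEV}(x;\xi,\mu,\sigma,\delta)=f_{\rm G}(T_{\sigma,\delta}(x);\xi,\mu)\,T_{\sigma,\delta}'(x)$, together with the CDF composition identity \eqref{cdfbgev1} and the explicit inverse \eqref{tinv}. Since $T_{\sigma,\delta}$ is non-decreasing for $\delta>-1$, this change of variable is licit and short arguments suffice for Items 1 and 2.

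For Item 1, I would set $Y=T_{\sigma,\delta}(X)$ and compute the CDF of $Y$ via
\begin{align*}
P(Y\le y)&=P(X\le T_{\sigma,\delta}^{-1}(y))=F_{\rm BGEV}(T_{\sigma,\delta}^{-1}(y);\xi,\mu,\sigma,\delta)\\
&=(F_{\rm G}\circ T_{\sigma,\delta})(T_{\sigma,\delta}^{-1}(y);\xi,\mu)=F_{\rm G}(y;\xi,\mu),
\end{align*}
using monotonicity of $T_{\sigma,\delta}$, the definition of $F_{\rm BGEV}$ in \eqref{cdfbgev1}, and $T_{\sigma,\delta}\circ T_{\sigma,\delta}^{-1}={\rm id}$. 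Item 2 follows by reversing the substitution: with $X=T_{\sigma,\delta}^{-1}(Y)$ one obtains $P(X\le x)=P(Y\le T_{\sigma,\delta}(x))=F_{\rm G}(T_{\sigma,\delta}(x);\xi,\mu)=F_{\rm BGEV}(x;\xi,\mu,\sigma,\delta)$ by chaining the same identities in the opposite direction. In both items, if one prefers a density-level argument, a one-line change of variables applied to \eqref{pdfbgev1} yields the same conclusion, with the Jacobian $T_{\sigma,\delta}'$ cancelling against $(T_{\sigma,\delta}^{-1})'$.

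Item 3 is the step I expect to be most delicate. My plan is to substitute the new parameters directly into the PDF \eqref{pdfbgev1} and verify $f_{\rm BGEV}(x;\xi,|c|\mu,c^2\sigma,\delta)=f_{\rm BGEV}(x;\xi,\mu,\sigma,\delta)$, either by finding a compensating change of variable in $x$ that matches simultaneously the prefactor $\sigma(\delta+1)|x|^\delta$ and the argument $\sigma x|x|^\delta-\mu$ appearing inside the GEV kernel, or equivalently by using Items 1 and 2 to reduce the claim to an identity for the GEV law $F_{\rm G}(\cdot;\xi,\cdot)$ under the corresponding rescaling by $c$. The delicate point is that the shift from $\mu$ to $|c|\mu$ and the rescaling from $\sigma$ to $c^2\sigma$ must be compatible under a single transformation, and I anticipate the bulk of the work being careful algebra verifying this compatibility, together with checking that the support condition $x>T_{c^2\sigma,\delta}^{-1}(|c|\mu-1/\xi)$ matches $x>T_{\sigma,\delta}^{-1}(\mu-1/\xi)$ in the $\xi>0$ case (with the symmetric check for $\xi<0$).
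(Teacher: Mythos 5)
Your treatment of Items 1) and 2) is exactly the paper's argument: the same chain $\mathbb{P}(Y\leqslant y)=\mathbb{P}(X\leqslant T_{\sigma,\delta}^{-1}(y))=F_{\rm BGEV}(T_{\sigma,\delta}^{-1}(y);\xi,\mu,\sigma,\delta)=(F_{\rm G}\circ T_{\sigma,\delta})(T_{\sigma,\delta}^{-1}(y);\xi,\mu)=F_{\rm G}(y;\xi,\mu)$, with Item 2) obtained by reversing the substitution. Nothing to add there.

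Item 3) is where the gap is, and it is twofold. First, your primary route --- substituting the new parameters into \eqref{pdfbgev1} and verifying $f_{\rm BGEV}(x;\xi,|c|\mu,c^2\sigma,\delta)=f_{\rm BGEV}(x;\xi,\mu,\sigma,\delta)$ pointwise --- cannot succeed, because that identity is false for $|c|\neq 1$: already at $\delta=0$, $\mu=0$, $\sigma=1$, $\xi=1$, $c=2$ one has $f_{\rm BGEV}(0;1,0,4,0)=4e^{-1}\neq e^{-1}=f_{\rm BGEV}(0;1,0,1,0)$. The statement as printed (the same $X$ carrying two parameterizations) cannot be read as an identity of densities; the paper's own proof makes clear that a transformation of the random variable is involved. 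Second, your fallback route (``reduce the claim to an identity for the GEV law under rescaling by $c$'') is indeed the paper's route, but you leave it entirely unexecuted, whereas the whole content of the paper's proof of Item 3) lies in naming the two specific ingredients that make it a one-liner: the scaling relation $|c|\,T_{\sigma,\delta}(x)=T_{|c|\sigma,\delta}(x)$ for the transformation, and the standard GEV fact that $Y\sim F_{\rm G}(\cdot;\xi,\mu,1)$ implies $cY\sim F_{\rm G}(\cdot;\xi,c\mu,c)$, combined with Items 1) and 2). Without identifying these, your plan for Item 3) is a statement of intent rather than a proof, and the ``careful algebra'' you defer is precisely the part that needs to be exhibited (and, if carried out honestly, would also force you to confront the mismatch in how the statement is phrased).
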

\begin{proof}
	If $X\sim F_{\rm BGEV}(\xi,\mu,\sigma, \delta)$, by \eqref{cdfbgev1}, follows that
	\begin{eqnarray*}
		\mathbb{P}(Y\leqslant y)=\mathbb{P}(X\leqslant T_{ \sigma,\delta}^{-1}(y))
		=F_{\rm BGEV}(T_{\sigma,\delta}^{-1}(y); \xi,\mu,\sigma, \delta )
		=(F_{\rm G}\circ T_{\sigma,\delta})(T_{ \sigma,\delta}^{-1}(y);  \xi, \mu)
		=F_{\rm G}(y; \xi, \mu).
	\end{eqnarray*}
	This prove the Item 1). Analogously, the proof of second item follows.
	
	Already, the proof of Item 3) follows by combining the Item 1), the relation $\vert c\vert T_{\sigma,\delta}(x)=T_{\vert c\vert\sigma,\delta}(x)$, the following well-known fact:
	If $Y\sim F_{\rm G}(\cdot; \xi, \mu)=F_{\rm G}(\cdot; \xi, \mu,1)$, then $cY\sim F_{\rm G}(\cdot; \xi, c\mu,c)$ for  $c\neq 0$; with Item 2).
\end{proof}
\subsection{Moments}
\noindent

Here we give a closed analytical formula for the  $k$th moment  of a random variable with {\rm BGEV} distribution.
\begin{proposition}\label{prop-cit}
	If $X\sim F_{\rm BGEV}(\xi,\mu,\sigma, \delta)$ and $\xi< {1}/{k}$, then
	 \begin{align*}
	 \mathbb{E}(X^{k(\delta+1)})=
	 \begin{cases}
	 \displaystyle
	  \frac{1}{\xi^k \sigma^k} \sum_{i=0}^{k} \binom{k }{i} (\xi\mu-1)^{k-i}  \Gamma (1-\xi{i}), &  \mu-\frac{1}{\xi}>0,
	 \\[0,7cm]
	 	 \displaystyle
	\frac{(-1)^{k(\delta+1)}}{\xi^k \sigma^k} \sum_{i=0}^{k} \binom{k }{i} \left(\xi\mu-1\right)^{k-i} \gamma\big(1-\xi i ; (1-\xi \mu)^{-1/\xi}\big)
	\\
	\displaystyle
	\hspace*{0.67cm}
	+ \frac{1}{\xi^k \sigma^k} \sum_{i=0}^{k} \binom{k }{i} \left(\xi\mu-1\right)^{k-i} \Gamma\big(1-\xi i ; (1-\xi \mu)^{-1/\xi}\big), &  \mu-\frac{1}{\xi}<0,
	 \end{cases}
	 \end{align*}
	where
	$\gamma(a; x)= \int_{x}^{\infty} t^{a-1}e^{-t}\, {\rm d}t$ and
	$\Gamma(a; x)= \int_{0}^{x} t^{a-1}e^{-t}\, {\rm d}t$
	are the incomplete gamma functions.
\end{proposition}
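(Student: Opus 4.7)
The plan is to reduce the computation to an integral involving the standard GEV density by invoking the stochastic representation proved in the previous proposition, namely $Y := T_{\sigma,\delta}(X) \sim F_{\rm G}(\cdot;\xi,\mu)$. Using the explicit inverse $T_{\sigma,\delta}^{-1}(y) = {\rm sign}(y)(|y|/\sigma)^{1/(\delta+1)}$ from \eqref{tinv}, I would write
$$
X^{k(\delta+1)} = \bigl[{\rm sign}(Y)\bigr]^{k(\delta+1)}\,\frac{|Y|^k}{\sigma^k},
$$
so that $\mathbb{E}(X^{k(\delta+1)})$ becomes $\sigma^{-k}$ times an integral of $|y|^k$ (weighted by the appropriate sign factor on the negative part) against $f_{\rm G}(y;\xi,\mu)$ over the GEV support $(\mu-1/\xi,\infty)$.

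The next step is the standard GEV change of variables $t = [1+\xi(y-\mu)]^{-1/\xi}$. This sweeps the Jacobian into the exponential and turns the GEV density into the gamma-type kernel $e^{-t}\,{\rm d}t$, while $y = (t^{-\xi}+\xi\mu-1)/\xi$. Applying Newton's binomial theorem gives
$$
y^k = \frac{1}{\xi^k}\sum_{i=0}^{k}\binom{k}{i}(\xi\mu-1)^{k-i}\,t^{-\xi i},
$$
so the moment reduces to a finite linear combination of integrals of the form $\int t^{-\xi i}e^{-t}\,{\rm d}t$. Convergence of each such integral near $t=\infty$ requires $1-\xi i>0$ for all $i\le k$, which is precisely the hypothesis $\xi<1/k$.

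Two cases then split according to the sign of $\mu-1/\xi$. When $\mu-1/\xi>0$, the support of $Y$ sits in $(0,\infty)$, hence ${\rm sign}(Y)=1$ a.s., and the $t$-integrals run over all of $(0,\infty)$ producing the complete gamma values $\Gamma(1-\xi i)$; this recovers the first branch. When $\mu-1/\xi<0$, I would split the $y$-integral at $y=0$, which under the substitution corresponds to the cut-point $t_0=(1-\xi\mu)^{-1/\xi}$: the $y>0$ part corresponds to $t\in(0,t_0)$ and yields, in the paper's convention, the lower-incomplete gammas $\Gamma(1-\xi i;t_0)$ with coefficient $1$; the $y<0$ part corresponds to $t\in(t_0,\infty)$ and carries the sign factor $(-1)^{k(\delta+1)}$ coming from $[{\rm sign}(Y)]^{k(\delta+1)}$, producing the upper-incomplete gammas $\gamma(1-\xi i;t_0)$.

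The routine ingredients are the change of variables and the binomial expansion. The main bookkeeping obstacle I expect is on the negative part of the support in the second case: one must combine the sign factor $[{\rm sign}(Y)]^{k(\delta+1)}$ coming from inverting $T_{\sigma,\delta}$ with the identity $|y|^k=(-1)^k y^k$ so that the binomial expansion of $(\xi\mu-1+t^{-\xi})^k$ in the transformed variable lines up exactly with the signs of the two branches stated in the proposition. Once these signs are reconciled and the two $t$-integrals on $(0,t_0)$ and $(t_0,\infty)$ are identified with $\Gamma(1-\xi i;t_0)$ and $\gamma(1-\xi i;t_0)$ in the paper's (reversed) convention, the closed forms in both branches follow immediately.
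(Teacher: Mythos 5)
Your proposal is correct and follows essentially the same route as the paper: transform to $Y=T_{\sigma,\delta}(X)\sim F_{\rm G}$, substitute $t=[1+\xi(y-\mu)]^{-1/\xi}$ (the paper does this in two steps, $w^{-1}=1+\xi(y-\mu)$ followed by $z=w^{1/\xi}$), expand by the binomial theorem, and split at $y=0$ in the case $\mu-1/\xi<0$, with the cut-point $t_0=(1-\xi\mu)^{-1/\xi}$ and the paper's reversed incomplete-gamma conventions correctly identified. The sign bookkeeping you flag on the negative branch is indeed the only delicate point, and your reading of it matches the paper's treatment.
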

\begin{proof}
	For $\xi>0$, by definition of expectation and the expression  \eqref{pdfgev2},  we have
	\begin{eqnarray}\label{mbgevc1}
	\mathbb{E}(X^{k(\delta+1)})&=&\int_{T_{\sigma,\delta}^{-1}(\mu-1/\xi)}^{\infty} x^{k(\delta+1)} f_{\rm BGEV}(x;\xi,\mu,\sigma, \delta) \,{\rm d}x\nonumber
	\\[0,1cm]
	&=&\int_{T_{\sigma,\delta}^{-1}(\mu-1/\xi)}^{\infty} x^{k(\delta+1)} f_{\rm G}\big(T_{ \sigma,\delta}(x);\xi, \mu \big)\, {\rm d}T_{\sigma,\delta}(x)\nonumber
	\\[0,1cm]
	&=&	\int_{\mu-1/\xi}^{\infty}  \big[T^{-1}_{\sigma, \delta}(y)\big]^{k(\delta+1)} f_{\rm G}\big(y;\xi, \mu \big)\, {\rm d}y.
	\end{eqnarray}
	When replacing the expression \eqref{tinv} of $T_{\sigma,\delta}^{-1}$ in \eqref{mbgevc1} we obtain two cases:
	 \\
	 (i) If  $\mu-{1}/{\xi}>0$,
	\begin{eqnarray*}
	\mathbb{E}(X^{k(\delta+1)})
	= \frac{1}{\sigma^k}\int_{\mu-1/\xi}^{\infty} y^{k} f_{\rm G}\big(y;\xi, \mu \big)\, {\rm d}y
	= \frac{1}{\sigma^k}\,\mathbb{E}(Y^k),  \  \  \  \ Y\sim F_{\rm G}(\cdot; \xi, \mu),
	\end{eqnarray*}
	where
		\begin{eqnarray*}
		\mathbb{E}(Y^k)=\sum_{i=1}^{k}\binom{k}{i}\Big(\mu-\frac{1}{\xi}\Big)^{k-i} \frac{1}{\xi^i} \,\Gamma(1-\xi i), \quad  \xi<1/k.
		\end{eqnarray*}
	 \\
	(ii) If  $\mu-{1}/{\xi}<0$,
\begin{eqnarray}\label{mbgevc3}	
\mathbb{E}(X^{k(\delta+1)})
&=& \frac{(-1)^{{k}{(\delta+1)}}}{\sigma^{k}}\int_{\mu-1/\xi}^{0}  y^{k} f_{\rm G}\big(y;\xi, \mu \big)\, {\rm d}y + \frac{1}{\sigma^k}\int_{0}^{\infty}  y^{k} f_{\rm G}\big(y;\xi, \mu \big)\, {\rm d}y
\nonumber
\\[0,1cm]
&=& \frac{(-1)^{{k}{(\delta+1)}}}{\sigma^{k}}\, {I_1}+ \frac{1}{\sigma^k}\, {I_2}.
\end{eqnarray}
	
	Using the  PDF \eqref{pdfgev2} and the  substitution $w^{-1}=1+\xi (y-\mu)$, we obtain
	\begin{eqnarray}\label{I1c1}
	I_1	&=& \frac{1}{\xi^{k+1}} \int_{(1-\xi \mu)^{-1}}^{\infty} [w^{-1}+(\xi \mu-1)]^{k} \  w^{\frac{1}{\xi}-1} {\rm e}^{-w^{1/\xi}}\, {\rm d}w.
	\end{eqnarray}
	To solve the integral of \eqref{I1c1} we  use the  Newton's binomial formula, thus
	\begin{eqnarray}\label{I1c2}
	I_1&=&\frac{1}{\xi^{k+1}}\sum_{i=1}^{k}\binom{k}{i} \left(\xi\mu-1\right)^{k-i} \int_{(1-\xi \mu)^{-1}}^{\infty}  w^{\frac{1}{\xi}-i-1} {\rm e}^{-w^{1/\xi}}\, {\rm d}w
	\nonumber
	\\[0,1cm]
	&=& \frac{1}{\xi^{k}}\sum_{i=1}^{k}\binom{k}{i} \left(\xi\mu-1\right)^{k-i} \int_{(1-\xi \mu)^{-1/\xi}}^{\infty}  z^{-\xi i} {\rm e}^{-z}\, {\rm d}z,
		\end{eqnarray}
	where in the last line we used the new substitution $z=w^{1/\xi}$.	
		
	Similarly we obtain
	\begin{eqnarray}\label{I2}
	I_2&=&
	\frac{1}{\xi^{k}}\sum_{i=1}^{k}\binom{k}{i}
	\left(\xi\mu-1\right)^{k-i}
		\int_{0}^{(1-\xi \mu)^{-1/\xi}}  z^{-\xi i} {\rm e}^{-z}\, {\rm d}z.
	\end{eqnarray}
	
	The proof is completed by expressing as integrals of \eqref{I1c2} and \eqref{I2} in terms of the incomplete gamma functions and then updating the equation \eqref{mbgevc3}.
	
	For $ \xi <0 $ the proof is similar.
\end{proof}

\begin{corollary}
	Let 	 $X\sim F_{\rm BGEV}(\xi,\mu,\sigma, \delta)$ and $\delta \leq 0$, then
\begin{align*}
\mathbb{E}(X)=
\begin{cases}
\displaystyle
\frac{1}{(\xi \sigma)^{ \llbracket 1/(\delta+1) \rrbracket}} \sum_{i=0}^{\llbracket 1/(\delta+1) \rrbracket} \binom{ \llbracket 1/(\delta+1) \rrbracket }{i} \left(\xi\mu-1\right)^{\llbracket 1/(\delta+1) \rrbracket-i} \Gamma (1-\xi{i}), & \!\! \mu-\frac{1}{\xi}>0,
\\[0,7cm]
\displaystyle
\frac{(-1)^{k(\delta+1)}}{(\xi \sigma)^{\llbracket 1/(\delta+1) \rrbracket}} \sum_{i=0}^{\llbracket 1/(\delta+1) \rrbracket} \binom{\llbracket 1/(\delta+1) \rrbracket }{i} \left(\xi\mu-1\right)^{\llbracket 1/(\delta+1) \rrbracket-i} \gamma\big(1-\xi i ; (1-\xi \mu)^{-1/\xi}\big)
\\
\displaystyle
+ \frac{1}{(\xi \sigma)^{\llbracket 1/(\delta+1) \rrbracket}} \sum_{i=0}^{\llbracket 1/(\delta+1) \rrbracket} \binom{\llbracket 1/(\delta+1) \rrbracket }{i} \left(\xi\mu-1\right)^{\llbracket 1/(\delta+1) \rrbracket-i}  \Gamma\big(1-\xi i ; (1-\xi \mu)^{-1/\xi}\big), & \!\! \mu-\frac{1}{\xi}<0.
\end{cases}
\end{align*}

\end{corollary}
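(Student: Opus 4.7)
The plan is to recognize that this Corollary is a direct specialization of Proposition \ref{prop-cit}. In that Proposition, a closed formula is given for $\mathbb{E}(X^{k(\delta+1)})$ for any natural $k$ with $\xi<1/k$. To recover $\mathbb{E}(X)$ we simply need to choose the exponent so that $k(\delta+1)=1$, i.e.\ $k = 1/(\delta+1)$.

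First, I would verify that under the hypothesis $\delta\leq 0$ (combined with the standing assumption $\delta\geq -1$) the ratio $1/(\delta+1)$ belongs to $[1,\infty)$, and that the notation $\llbracket 1/(\delta+1)\rrbracket$ in the statement refers to (or is restricted to) those values of $\delta$ for which $1/(\delta+1)$ is a natural number, so that Proposition \ref{prop-cit} is applicable. I would also check the admissibility condition $\xi<1/k$: with $k=1/(\delta+1)$ this becomes $\xi<\delta+1$, which should be assumed implicitly (as in the Proposition, via ``$\xi<1/k$'').

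Next, I would plug $k=\llbracket 1/(\delta+1)\rrbracket$ into the two branches of the formula of Proposition \ref{prop-cit}. In the case $\mu-1/\xi>0$ the integration domain is contained in $(0,\infty)$ so the sign $(-1)^{k(\delta+1)}=(-1)^1=-1$ does not appear and one obtains the first branch of the Corollary. In the case $\mu-1/\xi<0$ one splits the integral at $0$ exactly as in \eqref{mbgevc3}; the factor $(-1)^{k(\delta+1)}$ in front of the incomplete gamma $\gamma(\cdot;\cdot)$ arises from $[T^{-1}_{\sigma,\delta}(y)]^{k(\delta+1)}=\mathrm{sign}(y)^{k(\delta+1)}(|y|/\sigma)^{k}$ on the negative part of the domain. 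Reassembling, one obtains the second branch of the Corollary verbatim.

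The only genuine subtlety — and what I would expect to be the main obstacle — is the meaning of the bracket $\llbracket\cdot\rrbracket$ and the range of $\delta$ for which the formula is literally valid: Proposition \ref{prop-cit} was proved for integer $k$ only, so the Corollary either tacitly restricts $\delta$ to those values with $1/(\delta+1)\in\mathbb{N}$ (e.g.\ $\delta\in\{0,-1/2,-2/3,\ldots\}$), or one has to justify an analytic extension in $\delta$ by showing both sides depend continuously on $\delta$ and agree on a dense enough set. I would handle this by explicitly stating the restriction and simply applying Proposition \ref{prop-cit}; everything else is a substitution with no further work required.
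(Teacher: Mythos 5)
Your proposal is correct and matches the paper's proof, which reads in its entirety: ``The result follows directly of Proposition 3.6 by considering $\delta=(1/k)-1$ and $k\in\mathbb{Z}^{+}$'' --- i.e.\ exactly the specialization $k=1/(\delta+1)\in\mathbb{N}$ that you identify, including the tacit restriction of $\delta$ to values making this an integer. Your additional remarks on the admissibility condition $\xi<\delta+1$ and the origin of the sign factor are more careful than the paper's one-line argument but do not constitute a different route.
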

\begin{proof} The result follows directly of Proposition \ref{prop-cit}   by considering
	$\delta=({1}/{k})-1$ and $k \in \mathbb{Z}^{+}$.
\end{proof}

%

\subsection{Quantiles}
\noindent

In the next section,  through Monte Carlo simulation studies, we test the behavior of the   parameter estimates  of the  BGEV model. The random samples of the population are simulated using the  inverse transformation method  based on quantiles.
The quantile function of a random variable $X\sim F_{\rm BGEV}(\cdot; \xi,\mu,\sigma, \delta)$ is defined by

\begin{equation*}
x_Q= \inf \{x \in \mathbb{R}: Q \leq F_{\rm BGEV}(x; \xi,\mu,\sigma, \delta) \}
\end{equation*}
or equivalent by
\begin{eqnarray*}\label{Q}
	x_Q&=&T_{\sigma,\delta}^{-1}(F_{\rm G}^{-1}(Q)).
\end{eqnarray*}
From \eqref{cdfbgev1}, we have
$F_{\rm G}^{-1}(y)=\mu+{[(-\ln y)^{-\xi}-1]}/{\xi}$. Now, by using the expression \eqref{tinv} of
$T_{\sigma,\delta}^{- 1}$, we get
\begin{eqnarray}\label{Q}
x_Q=
\begin{cases}
\displaystyle
\biggl[\frac{\mu}{\sigma} + \frac{(-\ln Q)^{-\xi}-1}{\sigma \xi}\biggr]^{{1}/{(\delta+1)}}	, & \displaystyle  \mu+  \frac{(-\ln Q)^{-\xi}-1}{ \xi}>0,
\\[0,4cm](-1)^{\frac{2+\delta}{1+\delta}}
\displaystyle
\biggl[\frac{\mu}{\sigma } + \frac{(-\ln Q)^{-\xi}-1}{\sigma \xi}\biggr]^{{1}/{(\delta+1)}}	, &  \displaystyle
\mu+  \frac{(-\ln Q)^{-\xi}-1}{ \xi}<0.
\end{cases}
\end{eqnarray}

\subsection{Tail behavior of BGEV}
\noindent

According to Embrechts et al. (2013) 
, asymptotic estimates of probability models are omnipresent in statistics, insurance mathematics, mathematical finance, and hydrology.
In this sense, the theory of regular variation plays a crucial role. Concepts and properties of regularly varying  can be found  in the  book of Bingham et al. (1987) 
In order to study  the asymptotic behavior of the tail of the BGEV distribution, here we show some of the results of  regular  variation.
\begin{definition} A positive Lebesgue measurable function $h$ on $(0, \infty)$ is regularly varying at $\infty$ of index $\alpha \in \mathbb{R}$ ($h \in \mathcal{R}_{\alpha}$)  if
\[
\lim\limits_{t \rightarrow \infty} \frac{h(tx)}{h(t)}=x^{\alpha}.
\]
\end{definition}
Suppose $F$ is a  distribution function such that $F(x)<1$ for all $x>0$.
Let $\overline{F}(x)=1-F(x)$ and $\overline{F}\in \mathcal{R}_{-\alpha} $ for some $\alpha>0$ , then
\begin{eqnarray}\label{vr}
\overline{F}(x)=o(x^{-\alpha}L(x))
\end{eqnarray}
for some $L \in \mathcal{R}_{0}$, $L_0$  a  slowly varying function.
The expression \eqref{vr} indicates that the tail of the distribution  $ F $ slowly decays, as $ x^{-\alpha} $, that is $ F $ has  heavy (right) tail  with  index $ \alpha $.
Thus, $F$ is Pareto-like distribution.

\begin{proposition}
	Let $X\sim F_{\rm BGEV}(\cdot; \xi,\mu,\sigma, \delta)$  with $\xi>0$ and $\delta>-1$ . Then
	\begin{eqnarray}\label{tail}
	\overline{F}_{\rm BGEV}(x; \xi,\mu,\sigma, \delta)=o\left(  x^{- \frac{\delta +1}{\xi}}\right), \ \ x>0.
	\end{eqnarray}
	\begin{proof}
		The proof follows of the straightforward calculation of
		\[
		\lim\limits_{t \rightarrow \infty}
		\frac{	\overline{F}_{\rm BGEV}(tx; \xi,\mu,\sigma, \delta)}
		{\overline{F}_{\rm BGEV}(t; \xi,\mu,\sigma, \delta)}
		=x^{- \frac{\delta +1}{\xi}},
		\]
		where ${F}_{\rm BGEV}(tx; \xi,\mu,\sigma, \delta)$ is given in \eqref{cdfbgev1}.
	\end{proof}
\end{proposition}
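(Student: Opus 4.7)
The plan is to verify directly the limit asserted in the proof hint, i.e.\ to show that $\overline{F}_{\rm BGEV}(\cdot;\xi,\mu,\sigma,\delta)\in\mathcal{R}_{-(\delta+1)/\xi}$, and then invoke \eqref{vr} to pass from regular variation to the $o$-estimate \eqref{tail}. The whole argument is one clean asymptotic computation built on the stochastic representation $F_{\rm BGEV}=F_{\rm G}\circ T_{\sigma,\delta}$.

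First, for $x>0$ the transformation simplifies to $T_{\sigma,\delta}(x)=\sigma x^{\delta+1}$, so by \eqref{cdfbgev1},
\[
\overline{F}_{\rm BGEV}(x;\xi,\mu,\sigma,\delta)
=1-\exp\Bigl\{-\bigl[1+\xi(\sigma x^{\delta+1}-\mu)\bigr]^{-1/\xi}\Bigr\}.
\]
Since $\xi>0$ and $\delta+1>0$, the quantity $[1+\xi(\sigma x^{\delta+1}-\mu)]^{-1/\xi}$ tends to $0$ as $x\to\infty$, so the elementary asymptotic $1-e^{-u}=u(1+o(1))$ as $u\to 0^+$ gives
\[
\overline{F}_{\rm BGEV}(x;\xi,\mu,\sigma,\delta)
=\bigl[1+\xi(\sigma x^{\delta+1}-\mu)\bigr]^{-1/\xi}\bigl(1+o(1)\bigr).
\]

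Next, I would factor the dominant power of $x$ out of the bracket by writing
\[
1+\xi(\sigma x^{\delta+1}-\mu)
=\xi\sigma x^{\delta+1}\Bigl[1+\frac{1-\xi\mu}{\xi\sigma x^{\delta+1}}\Bigr],
\]
so that $[1+\xi(\sigma x^{\delta+1}-\mu)]^{-1/\xi}\sim(\xi\sigma)^{-1/\xi}x^{-(\delta+1)/\xi}$ as $x\to\infty$. Combining with the previous display,
\[
\overline{F}_{\rm BGEV}(x;\xi,\mu,\sigma,\delta)
\sim(\xi\sigma)^{-1/\xi}\,x^{-(\delta+1)/\xi}
\qquad(x\to\infty).
\]
Plugging this equivalent into the ratio and cancelling the constant $(\xi\sigma)^{-1/\xi}$ yields, for each fixed $x>0$,
\[
\lim_{t\to\infty}\frac{\overline{F}_{\rm BGEV}(tx;\xi,\mu,\sigma,\delta)}
{\overline{F}_{\rm BGEV}(t;\xi,\mu,\sigma,\delta)}
=\frac{(tx)^{-(\delta+1)/\xi}}{t^{-(\delta+1)/\xi}}
=x^{-(\delta+1)/\xi},
\]
which is exactly the regular variation of index $-(\delta+1)/\xi$. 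Applying \eqref{vr} with $\alpha=(\delta+1)/\xi>0$ delivers \eqref{tail}.

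There is essentially no obstacle: once the exponential factor in the GEV CDF is linearised via $1-e^{-u}\sim u$ and the affine expression inside the power is replaced by its leading monomial $\xi\sigma x^{\delta+1}$, the ratio collapses to a pure power. The only point requiring attention is the legitimacy of the linearisation $1-e^{-u}\sim u$, which depends crucially on $\xi>0$ forcing the inner bracket to vanish at $+\infty$; for $\xi<0$ the upper endpoint of the support is finite and the same recipe must be replayed on the lower tail, which is why the proposition restricts to $\xi>0$.
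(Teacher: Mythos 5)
Your proposal is correct and follows essentially the same route as the paper: the paper's proof consists solely of asserting the limit $\lim_{t\to\infty}\overline{F}_{\rm BGEV}(tx;\xi,\mu,\sigma,\delta)/\overline{F}_{\rm BGEV}(t;\xi,\mu,\sigma,\delta)=x^{-(\delta+1)/\xi}$ as a ``straightforward calculation,'' and you simply carry that calculation out via $T_{\sigma,\delta}(x)=\sigma x^{\delta+1}$ for $x>0$, the linearisation $1-e^{-u}\sim u$, and extraction of the leading monomial $\xi\sigma x^{\delta+1}$. (One remark: your sharper intermediate asymptotic $\overline{F}_{\rm BGEV}(x;\xi,\mu,\sigma,\delta)\sim(\xi\sigma)^{-1/\xi}x^{-(\delta+1)/\xi}$ shows the tail is of \emph{exact} order $x^{-(\delta+1)/\xi}$, so the little-$o$ in \eqref{tail} is strictly speaking too strong and should be a big-$O$ or an asymptotic equivalence; this imprecision is inherited from the paper's own formulation of \eqref{vr} and is not introduced by your argument.)
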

From \eqref{tail} we have that ${F}_{\rm BGEV}$ is Pareto-like distribution with tail index ${(\delta +1)}/{\xi}$. The new $  \delta $ parameter affects the weight of the tail.

\section{ Graphical study }
\label{sec:4}
\noindent

The role that the parameters $ \xi $, $ \mu $, $ \sigma $, $ \delta $, and  play in the BGEV distribution were investigated graphically by generating different densities with variations in each parameter. Fig. \ref{F1} shows that  $ \sigma $ is a  scale parameter. According to Fig. \ref {F2},
the shape of the BGEV PDF changed by varying the $\mu $ parameter.  That is, $ \mu $ is a shape parameter. Fig. \ref{F2}   suggests that  depending
on the combinatation of the parameters values $\xi$ and $\mu$ the BGEV density can be unimodal or bimodal. For example,
$ f_{\rm BGEV} (.; 2, \mu, 1, 2) $  is unimodal and $f_{{\rm BGEV}} (.; 0.5, \mu, 1, 2) $  is bimodal.
In the bimodal case, the first mode value is greater than the  second mode  value for $\mu \in [0, 0.25)$, while for  $ \mu > 0.25$ the second mode value is greater than the first mode.
Figures \ref{F3} and \ref{F4} correspond to variations of the new parameter
$ \delta $. When $\delta =0 $ the BGEV coincides with the basic GEV  distribution. Two scenarios were considered. Scenario 1 for  $\xi<0$ (see Fig. \ref{F3}), and  scenario 2 for  $\xi>0$ (see Fig. \ref{F4} ).
In scenario 1 , Fig. \ref{F3}, when $ \delta \leq 0 $ the PDF is unimodal and for $ \delta > 0 $ the PDF is bimodal. In bimodal case, the mode values increases as delta grows.
In scenario 2, the Fig. \ref{F4}  shows  that when delta takes positive or negative values close to zero and  $\xi>1$ the PDF is bimodal. In this case, the mode value also increases as delta increases.
%

\begin{figure}[!htbp]
	\caption{ } 	
	\begin{center}
		\vspace{-0.8cm}	
		\includegraphics[width=0.47\linewidth]{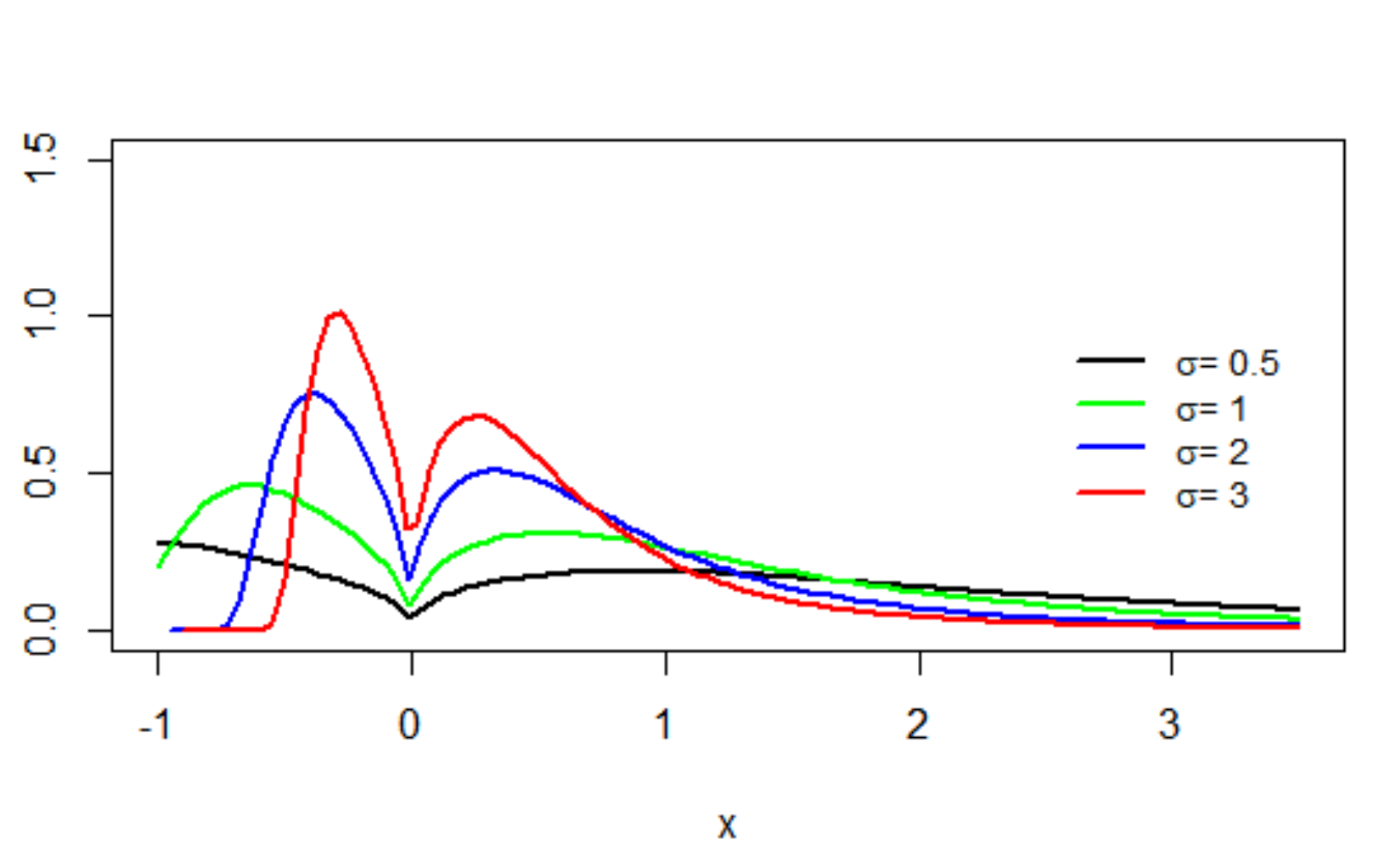}
		\includegraphics[width=0.47\linewidth]{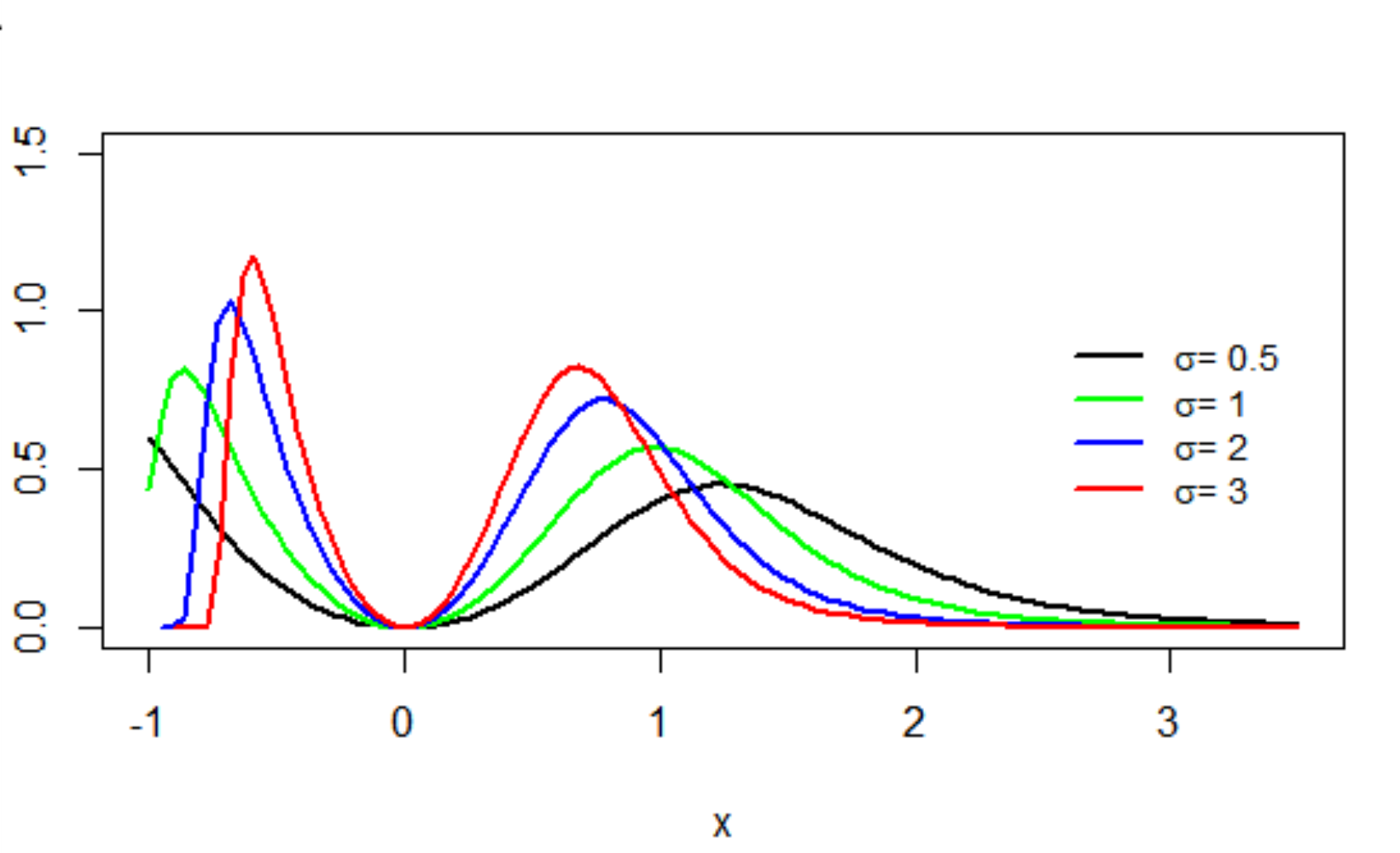}
			\vspace{-0.2cm}
	\end{center} 
	{Left: BGEV PDF, $f_{\rm BGEV}(x; 0.5, 0, \sigma,0.4)$, with $\sigma$ varying; Right: BGEV PDF, $f_{\rm BGEV}(x; 0.5, 0, \sigma,2)$ with  $\sigma$ varying.}
	\label{F1}
\end{figure}
%
\begin{figure}[!htbp]
	\caption{} 	
	\begin{center}
		\vspace{-1cm}	 	
		\includegraphics[width=0.47\linewidth]{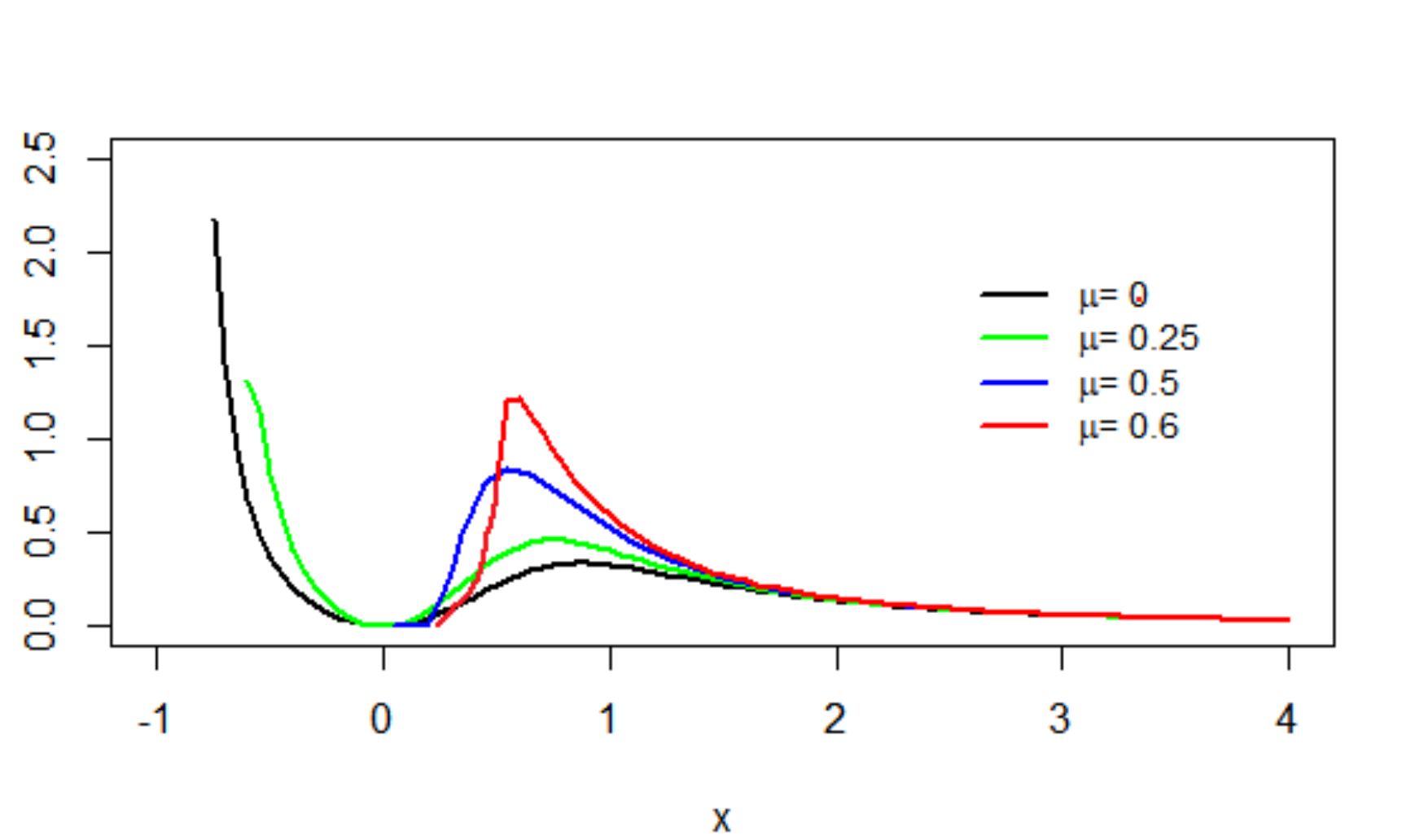}
		\includegraphics[width=0.47\linewidth]{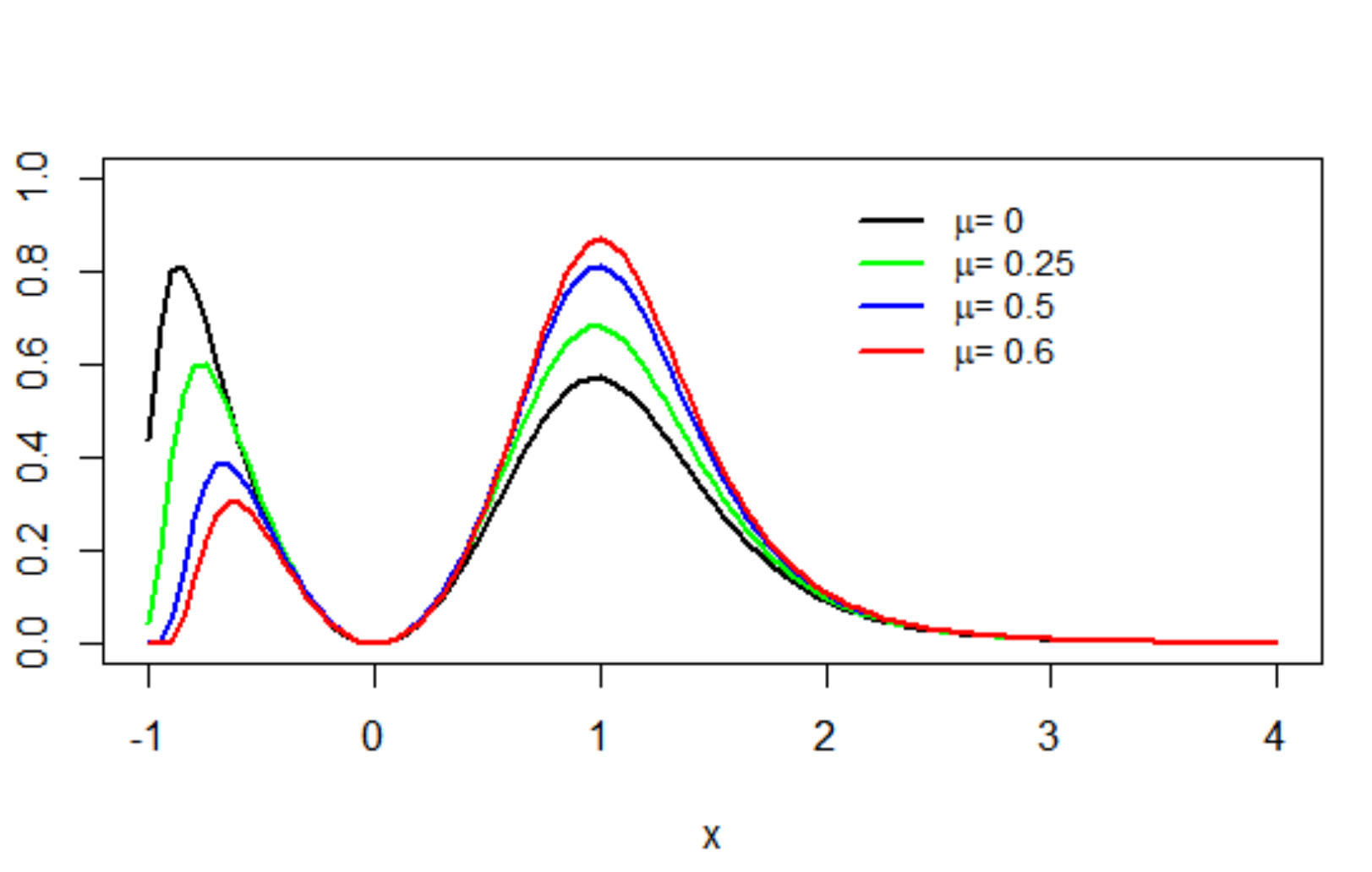}
		\end{center}
\vspace{-0.2cm}
	{Left: BGEV PDF, $f_{\rm BGEV}(x; 2, \mu, 1, 2)$, with  $\mu$ varying;  Right: BGEV PDF, $f_{\rm BGEV}(x; 0.5, \mu, 1, 2)$, with  $\mu$ varying.}
	\vspace{-2.5cm}
	\label{F2}
\end{figure}
%
%
\begin{figure}[!htbp]
	\caption{} 	
\begin{center}
		\vspace{-1.09cm}	 	
		\includegraphics[width=0.47\linewidth]{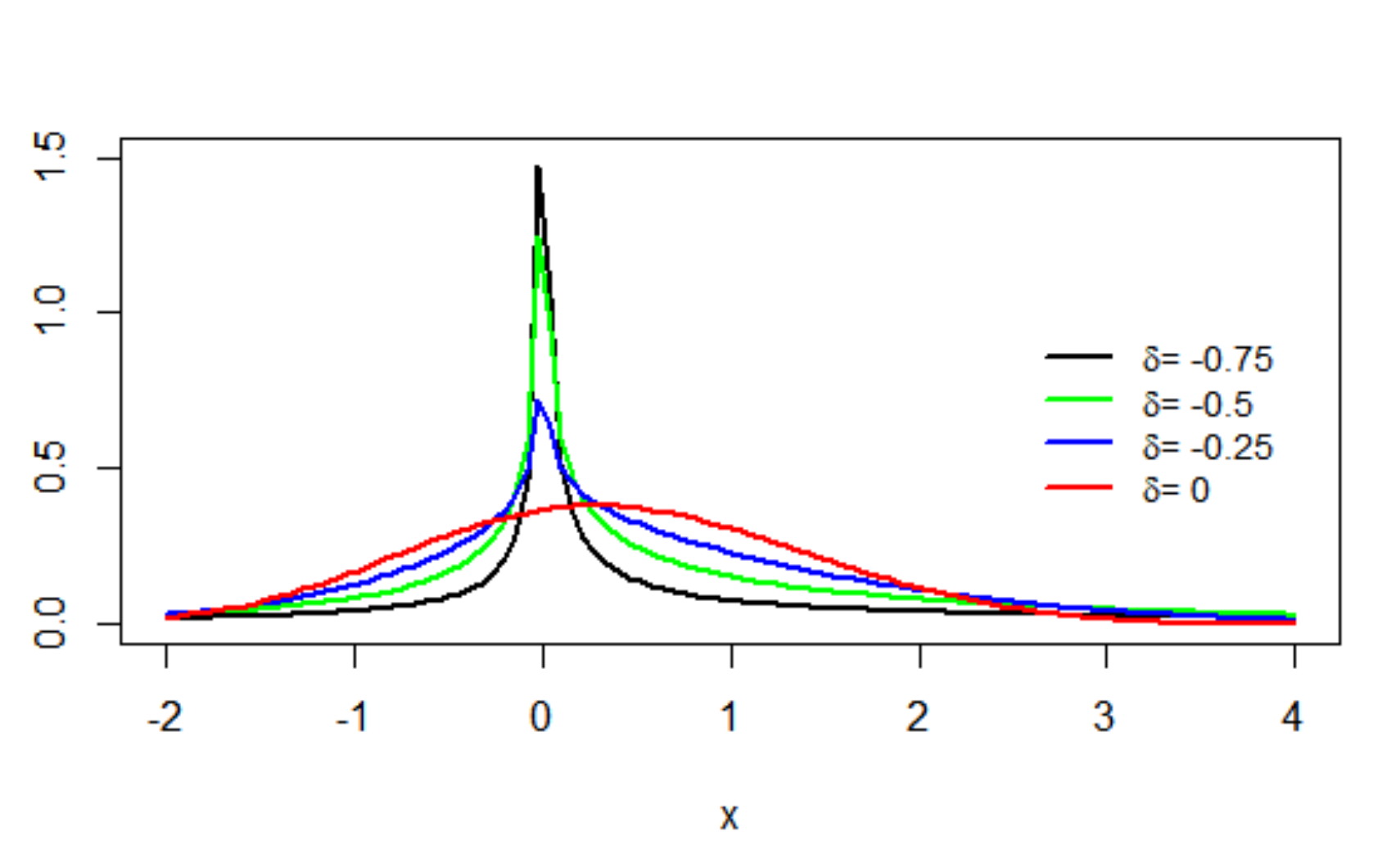}
		\includegraphics[width=0.47\linewidth]{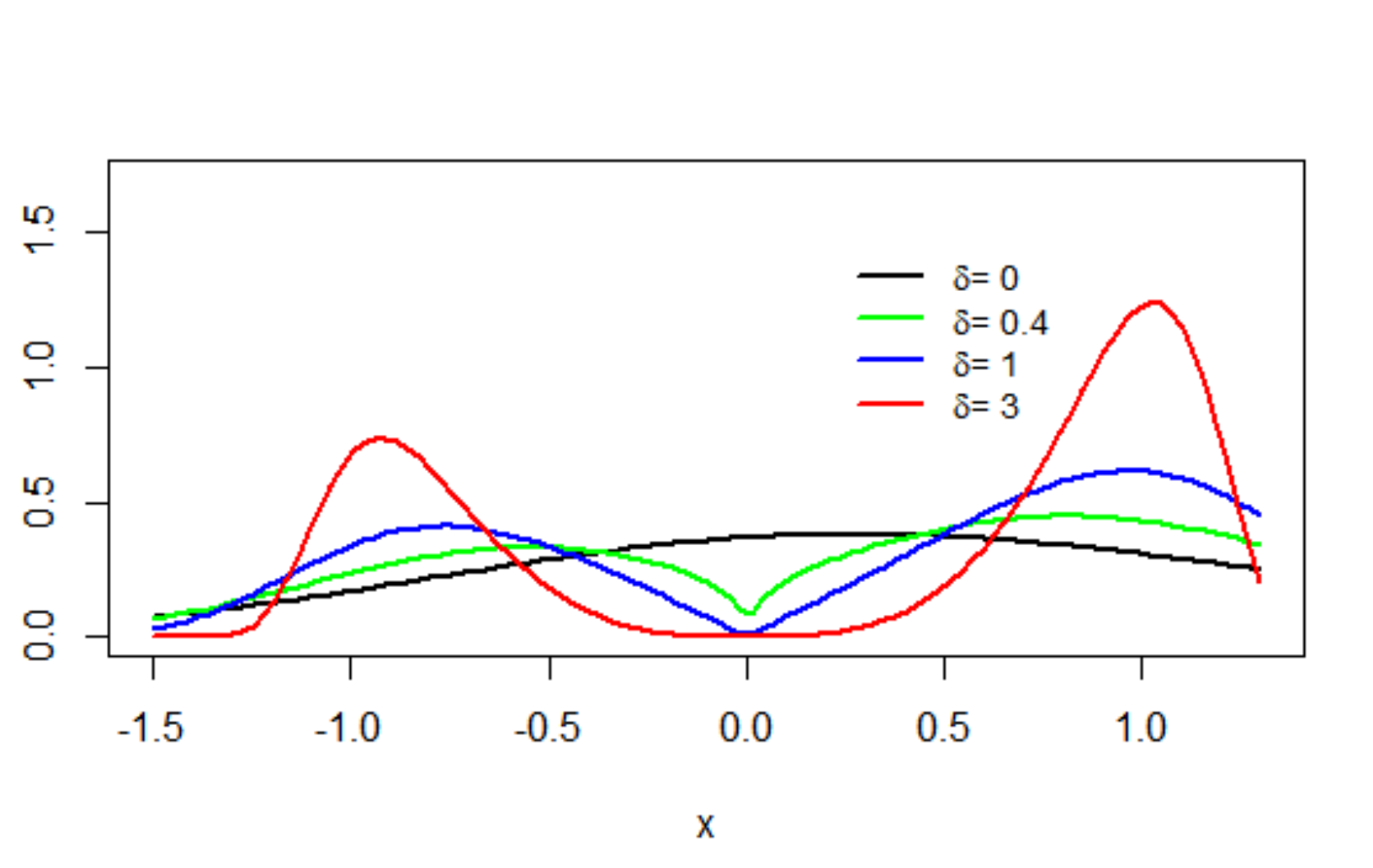}
		\includegraphics[width=0.47\linewidth]{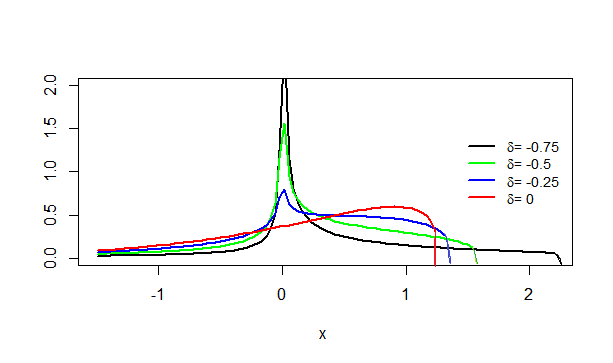}
		\includegraphics[width=0.47\linewidth]{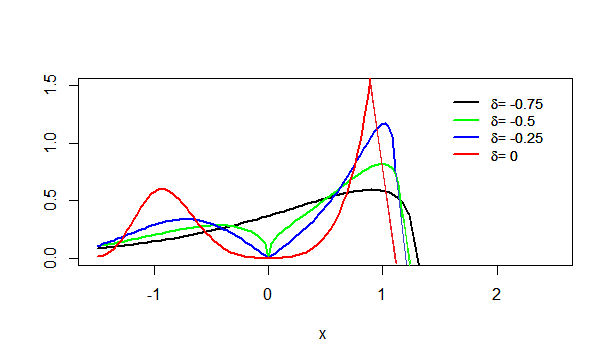}
		\vspace{-0.7cm}
\end{center}
	{Top left and right: BGEV PDF, $f_{\rm BGEV}(x; -0.25, 0, 1, \delta)$ with  $\delta$ varying;  Bottom left and right: BGEV PDF, $f_{\rm BGEV}(x; -1, 0, 1, \delta)$ with  $\delta$ varying.}
	\vspace{-0.5cm}
	\label{F3}
\end{figure}
%

\begin{figure}[!htbp]
	\caption{} 	
\begin{center}
		\vspace{-0.9cm}	 	
		\includegraphics[width=0.47\linewidth]{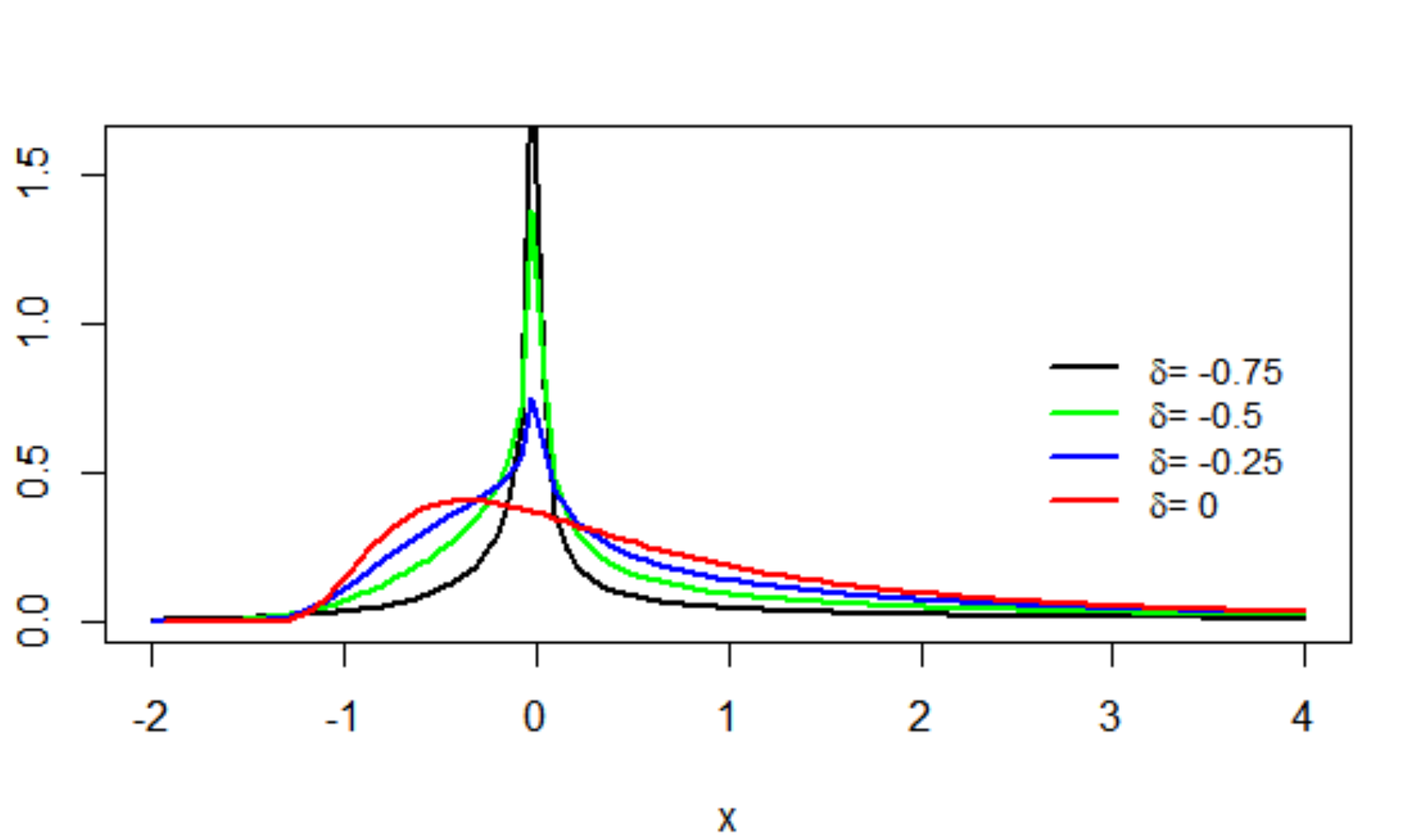}
		\includegraphics[width=0.47\linewidth]{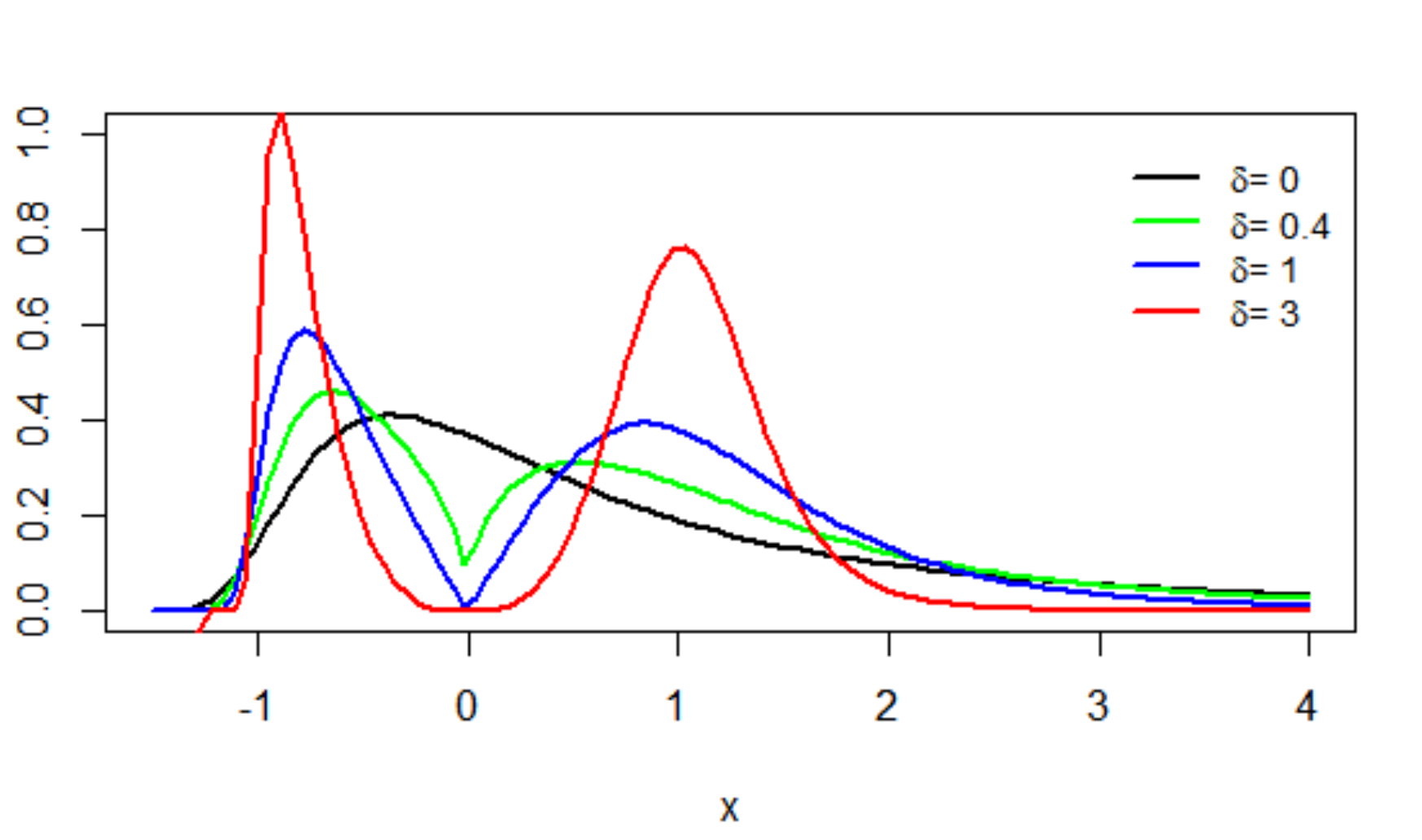}
		\includegraphics[width=0.47\linewidth]{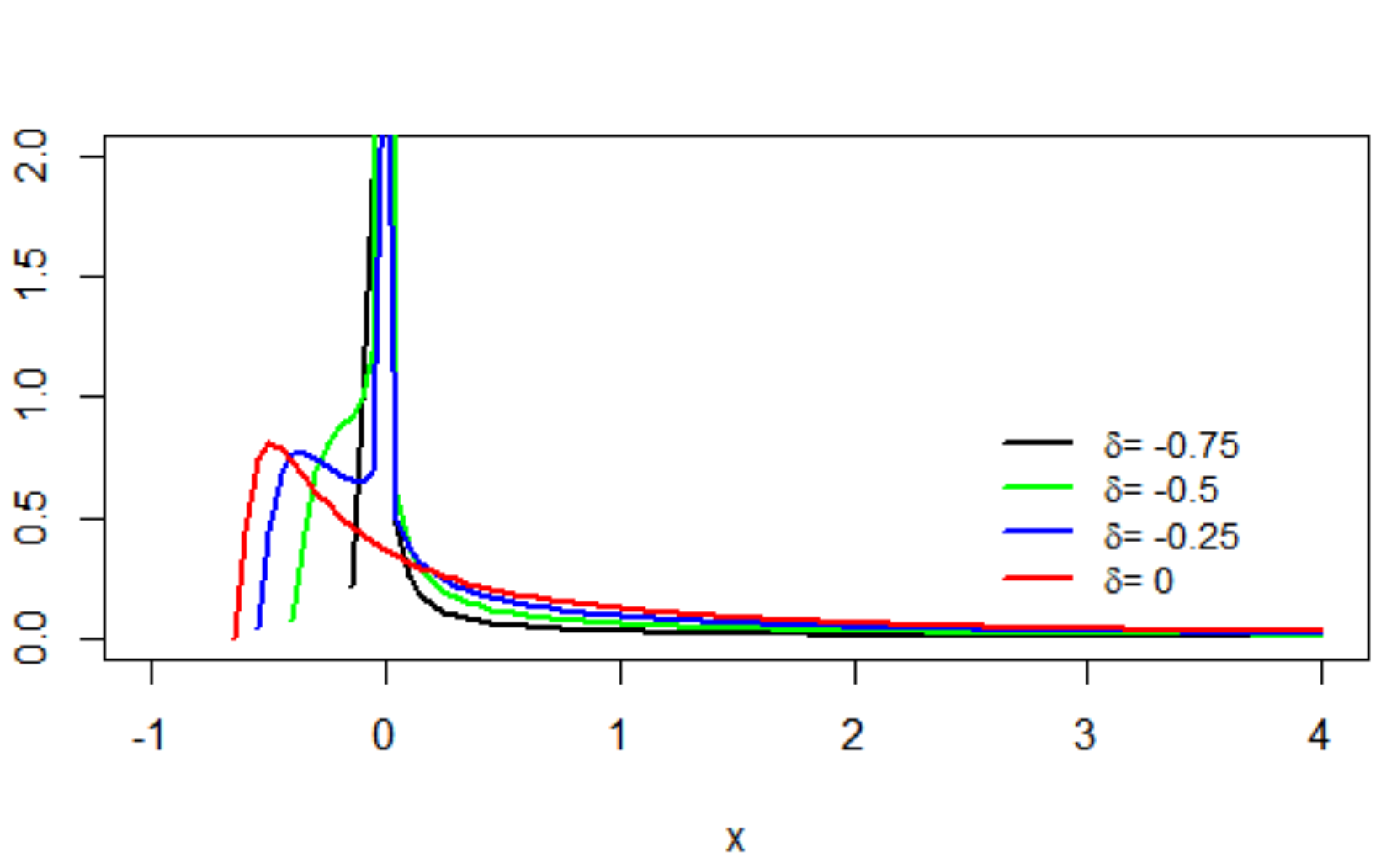}
		\includegraphics[width=0.47\linewidth]{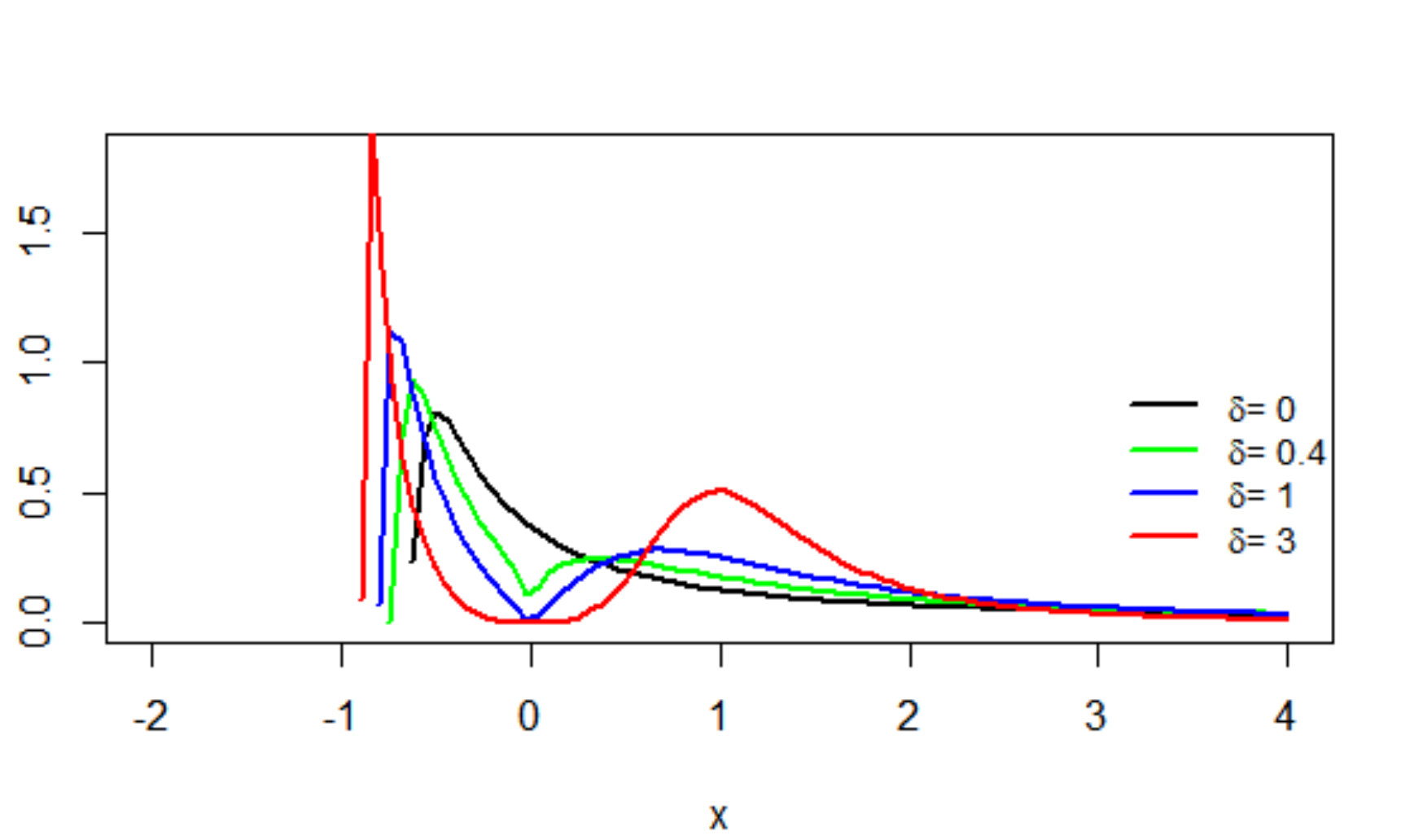}
\end{center}
	{Top left and right: BGEV PDF, $f_{\rm BGEV}(x; 0.5, 0, 1, \delta)$ with  $\delta$ varying; Bottom left and right: BGEV PDF, $f_{\rm BGEV}(x; 1.5, 0, 1, \delta)$ with  $\delta$ varying.}
	\vspace{-2.5cm}
	\label{F4}
\end{figure}

\newpage
\section{ Maximum likelihood estimation}\label{mle}
\noindent
\label{sec:5}

In this section we present the expressions to be calculated to obtain the maximum likelihood estimators for the parameters of a BGEV distribution.

Let $ X \sim F_{\rm BGEV} $ be a random variable with PDF $ f _{\rm BGEV} (x; \Theta) $ defined in \eqref{pdfbgev1}, where $\Theta=(\xi,\mu,\sigma, \delta)$. Let $ (X_1, \dots, X_n) $ be a random sample of X and
$x= (x_1, \dots, x_n) $ the corresponding observed values of the random sample $ (X_1, \dots, X_n) $.
The  log-likelihood function  for $\Theta=(\xi,\mu,\sigma, \delta)$ is given by
\begin{align}\label{fv1}
\ell_n(\Theta;x)
=
n\ln\sigma+n\ln(\delta+1)
+\sum_{i=1}^{n}
\Big[
\delta \ln\vert x_i\vert-\Big(1+{1\over \xi}\Big)
\ln\Psi_i(\Theta)
-
\Psi_i^{-1/\xi}(\Theta)
\Big],
\end{align}
where
$
\Psi_i(\Theta)=1+\xi\big(\sigma x_i|x_i|^{\delta}-\mu\big),\ i=1, \ldots, n.
$
Note that, for all $i=1,\ldots,n$,
\begin{align*}
\begin{array}{lllll}
&\frac{\partial \Psi_i(\Theta)}{\partial \mu}
=
-\xi,
&\frac{\partial \Psi_i(\Theta)}{\partial \sigma}
=
\xi x_i|x_i|^{\delta},
\\[0,2cm]
&\frac{\partial \Psi_i(\Theta)}{\partial \delta}
=
\xi\sigma x_i|x_i|^{\delta} \ln|x_i|,
&\frac{\partial \Psi_i(\Theta)}{\partial \xi}
=
\sigma x_i|x_i|^{\delta}-\mu.
\end{array}
\end{align*}
Then, the maximum likelihood  estimates of $\mu,\sigma,  \delta, \xi $ are the solutions of the following system of equations
\begin{eqnarray} \label{xi}
\begin{array}{llllll}
\frac{\partial \ell_n(\Theta;x)}{\partial \mu}
&=&
\mu
\sum_{i=1}^{n}
\Omega_i(\Theta)=0,
\\[0,2cm]
\frac{\partial \ell_n(\Theta;x)}{\partial \sigma}
&=&
\frac{n}{\sigma}
-
\sum_{i=1}^{n}
x_i|x_i|^{\delta}
\Omega_i(\Theta)
=0,
\\[0,2cm]
\frac{\partial \ell_n(\Theta;x)}{\partial \delta}
&=&
\frac{n}{\delta+1}
+
\sum_{i=1}^{n}
[
\ln|x_i|
+	
\sigma x_i|x_i|^{\delta} \ln|x_i| \,
\Omega_i(\Theta)
]
=0,
\\[0,2cm]
\frac{\partial \ell_n(\Theta;x)}{\partial \xi}
&=&
\xi^{-2}
\sum_{i=1}^{n}
(\sigma x_i|x_i|^{\delta}-\mu)
\left\{
\ln\Psi_i(\Theta)
 -
\xi
\Omega_i(\Theta)
\right\}
=0,
\end{array}
\end{eqnarray}
where
$
\Omega_i(\Theta)
=\Psi_i^{-1}(\Theta)
[
1+\xi
-
\Psi_i^{-{1}/{\xi}}(\Theta)
],\ i=1,\ldots,n.
$
Since
\begin{align*}
\begin{array}{lllll}
{\partial\Omega_i(\Theta)\over \partial \theta}
=
-\xi^{-1} \Psi_i^{-1}(\Theta) \big[
\xi \Omega_i^{-1}(\Theta) - \Psi_i^{-1/\xi}(\Theta)
\big]
\frac{\partial \Psi_i(\Theta)}{\partial \theta}, \quad \theta\in \{\xi,\mu,\sigma, \delta\},
\end{array}
\end{align*}
the second-order partial derivatives of the log-likelihood function  $\ell_n(\Theta;x)$ are given by
%
\begin{align}\label{sec-der-1}
\begin{array}{lllll}
\frac{\partial^2 \ell_n(\Theta;x)}{\partial \mu^2}
=
\sum_{i=1}^{n}
\{
\Omega_i(\Theta)
+
\mu
\Psi^{-1}_i(\Theta)
[\xi \Omega_i(\Theta)-\Psi_i^{-{1}/{\xi}}(\Theta)]
\},
\\[0,2cm]
\frac{\partial^2 \ell_n(\Theta;x)}{\partial \sigma^2}
=
-\frac{n}{\sigma^2}
+
\sum_{i=1}^{n}
x_i^2|x_i|^{2\delta}
\Psi^{-1}_i(\Theta)
[\xi \Omega_i(\Theta)-\Psi_i^{-{1}/{\xi}}(\Theta)],
\\[0,2cm]
\frac{\partial^2 \ell_n(\Theta;x)}{\partial \delta^2}
=
-\frac{n}{(\delta+1)^2}
+
\sigma
\sum_{i=1}^{n}
x_i|x_i|^{2\delta}\left(\ln|x_i|\right)^2
\big\{
\Omega_i(\Theta)
-
\sigma x_i
\Psi^{-1}_i(\Theta)
[\xi \Omega_i(\Theta)-\Psi_i^{-{1}/{\xi}}(\Theta)]
\big\},
\\[0,2cm]
\frac{\partial^2 \ell_n(\Theta;x)}{\partial \xi^2}
=
-2\xi^{-3}
\sum_{i=1}^{n}
(\sigma x_i|x_i|^{\delta}-\mu)
[
\ln\Psi_i(\Theta)
 -
\xi \Omega_i(\Theta)
]
\\[0,2cm] \qquad\qquad\quad\
-
\xi^{-2}
\sum_{i=1}^{n}
(\sigma x_i|x_i|^{\delta}-\mu)^2
\big\{
\Omega_i(\Theta) -
\Psi^{-1}_i(\Theta)
[\xi \Omega_i(\Theta)-\Psi_i^{-{1}/{\xi}}(\Theta)+1]
\big\},
\end{array}
\end{align}
and, by Schwarz's theorem or Clairaut's theorem on equality of mixed partials (James 1966)
, the second-order mixed derivatives of $\ell_n(\Theta;x)$ can be written as
\begin{align}\label{sec-der-2}
\begin{array}{lllll}
\frac{\partial^2 \ell_n(\Theta;x)}{\partial \mu\partial \sigma}
=
\frac{\partial^2 \ell_n(\Theta;x)}{\partial \sigma\partial \mu}
=
-
\mu
\sum_{i=1}^{n}
x_i|x_i|^{\delta}
\Psi^{-1}_i(\Theta)
[\xi \Omega_i(\Theta)-\Psi_i^{-{1}/{\xi}}(\Theta)],
\\[0,2cm]
\frac{\partial^2 \ell_n(\Theta;x)}{\partial \mu\partial \delta}
=
\frac{\partial^2 \ell_n(\Theta;x)}{\partial \delta\partial \mu}
=
-
\mu\sigma
\sum_{i=1}^{n}
x_i|x_i|^{\delta} \ln |x_i|
\Psi^{-1}_i(\Theta)
[\xi \Omega_i(\Theta)-\Psi_i^{-{1}/{\xi}}(\Theta)],
\\[0,2cm]
\frac{\partial^2 \ell_n(\Theta;x)}{\partial \mu\partial \xi}
=
\frac{\partial^2 \ell_n(\Theta;x)}{\partial \xi\partial \mu}
=
-\mu\xi^{-1}
\sum_{i=1}^{n}
(\sigma x_i|x_i|^{\delta}-\mu)
\Psi^{-1}_i(\Theta)
[\xi \Omega_i(\Theta)-\Psi_i^{-{1}/{\xi}}(\Theta)],
\\[0,2cm]
\frac{\partial^2 \ell_n(\Theta;x)}{\partial \sigma\partial \delta}
=
\frac{\partial^2 \ell_n(\Theta;x)}{\partial \delta\partial \sigma}
=
-
\sum_{i=1}^{n}
x_i|x_i|^{2\delta} \ln |x_i|
\big\{
\Omega_i(\Theta)
-
\sigma x_i
\Psi^{-1}_i(\Theta)
[\xi \Omega_i(\Theta)-\Psi_i^{-{1}/{\xi}}(\Theta)]
\big\},
\\[0,2cm]
\frac{\partial^2 \ell_n(\Theta;x)}{\partial \sigma\partial \xi}
=
\frac{\partial^2 \ell_n(\Theta;x)}{\partial \xi\partial \sigma}
=
-\xi^{-1}
\sum_{i=1}^{n}
x_i|x_i|^{\delta} \big(\sigma x_i|x_i|^{\delta}-\mu\big)
\Psi^{-1}_i(\Theta)
[\xi \Omega_i(\Theta)-\Psi_i^{-{1}/{\xi}}(\Theta)],
\\[0,2cm]
\frac{\partial^2 \ell_n(\Theta;x)}{\partial \delta\partial \xi}
=
\frac{\partial^2 \ell_n(\Theta;x)}{\partial \xi\partial \delta}
=
-\sigma\xi^{-1}
\sum_{i=1}^{n}
x_i|x_i|^{\delta} \ln|x_i|
\big(\sigma x_i|x_i|^{\delta}-\mu\big)
\Psi^{-1}_i(\Theta)
[\xi \Omega_i(\Theta)-\Psi_i^{-{1}/{\xi}}(\Theta)].
\end{array}
\end{align}

If $X\sim F_{\rm BGEV}$, under certain regularity conditions, the Fisher information matrix (FIM) may also be written as
\begin{align*}
[\mathcal{I}(\Theta)]_{i,j}
=
-\mathbb{E}
\biggl[\frac{\partial^2 \ell_1(\Theta;X)}{\partial \theta_i\partial \theta_j}\,\bigg\vert\, \Theta\biggr],
\quad \theta_i,\theta_j\in \{\xi,\mu,\sigma, \delta\},
\end{align*}
where $\partial^2 \ell_1(\Theta;x)/\partial \theta_i\partial \theta_j$ are given in \eqref{sec-der-1} and \eqref{sec-der-2} by taking $n=1$.
It is clear that the analytical calculations of the FIM above are difficult. But by using the Monte Carlo estimates of the Hessian of the negative log-likelihood function we can form an average of these ones as an estimate of the FIM (Spall 2005).

\section{Numerical Illustrations}
\label{sec:6}
\noindent

The performance of maximum likelihood estimators calculated was tested by Monte Carlo simulation with  45 combinations of  $\xi, \mu$ and $  \delta$. We consider the scale parameter always equal to 1; $ \sigma = 1 $.

We use the following procedure:
\\
(i) Generate random samples of  $X\sim F_{\rm BGEV}(\cdot; \xi,\mu,\sigma, \delta)$ by using the quantile function \eqref{Q}.
 Were generated M=100 random samples (replications) of size N=50, 100,  250, 1000, for each choice of  $\theta$.
\\
(ii) Calculate the maximum likelihood estimates (MLEs) of $ \theta= ( \xi,\mu, \delta)$ using  the  equations \eqref{fv1}- \eqref{xi}. Here, use  the Nelder and Mead' optimization method  implemented in the R statistical program (Nelder and Mead 1965). 
\\
(iii) For the initial value, consider the real parameter value  added by a value of a auniform variable in $(0,1)$.

The estimation results are found in Tables \ref{Tablexi1} - \ref {Tablexim05}. We have five different configurations. In configuration 1,
 the shape parameters  $\xi=1$, $\mu \in \{ 1,0,-1\}$ and $\delta \in\{ 0,2,4\}$. Its mean estimates,  the bias and the mean square error (MSE) of the estimates are present in Table \ref{Tablexi1}. In configuration 2,   $\xi=0.5$, $\mu \in \{ 1,0,-1\}$ and $\delta \in\{ 0,2,4\}$. For this configuration, the  mean estimates,  the bias and the MSE of the estimates are present in Table \ref{Tablexi05}. In cofigurations 3, 4, and 5, $\mu \in \{ 1,0,-1\}$, $\delta \in\{ 0,2,4\}$ and $\xi=0.25, \ -0.25$, and $-0.5$, respectively. The  mean estimates,  the bias and the MSE of the estimates for  these three configurations are shown in the Tables \ref{Tablexi025}, \ref{Tablexim025}, and \ref{Tablexim05}.
 In the five configurations, the bias and MSE of the mean estimates are small. That is, the algorithm has obtained good estimates.
 The bias and the MSE decreased as the sample size was increased from 50 to 1000.

\begin{table}[!htbp]
	\centering
	\caption{Empirical means, bias and mean squared errors (MSE) of the
		estimates of the parameters for $(\xi, \sigma) = (1, 1)$ and some values of $\mu, \delta$ and $n$.}
	\resizebox{\linewidth}{!}{
		\begin{tabular}{cccrrcccrrrrrcccc}
			\hline
			&\multicolumn{6}{c}{Empirical means} && \multicolumn{3}{c}{Bias} && \multicolumn{3}{c}{MSE} \\
			$n$&$\xi$ &  $\widehat{\xi}$  & $\mu$ & $\widehat{\mu}$ &  $\delta$ & $\widehat{\delta}$  && $\widehat{\xi}$ & $\widehat{\mu}$ &  $\delta$  && $\widehat{\xi}$ & $\widehat{\mu}$ &  $\widehat{\delta}$  \\ \hline
			&1 & 1.018& $-$1 &$-$1.017 & 0 & 0.036 && $-$0.018& 0.017&$-$0.036&& 0.049& 0.0214& 0.0171\\
			&1 & 1.038& &$-$1.013& 2 &  2.082 && $-$0.038& 0.013& $-$0.082&& 0.039& 0.0206& 0.1167\\
			&1 &1.034 &$-$1& $-$1.053& 4 & 4.292	&& $-$0.034& 0.053& $-$0.292&& 0.0449& 0.0201 &0.3219\\
			50		&1 &1.085 & 0 &0.007 & 0 &0.048	&&$-$0.085 &$-$0.007 &$-$0.048 &&0.0527 &0.0081& 0.0229\\
			&1 &1.085 &0 &0.009 &2 &2.126 &&$-$0.085 &$-$0.009 &$-$0.126 &&0.0433 &0.007& 0.1972 \\
			&1 &1.049 & 0 &0.028 & 4 &4.152 	&&$-$0.049 &$-$0.028 &$-$0.152 &&0.0546 &0.0078& 0.4478\\
			&1 &1.148 &1 &0.97  &0 &0.096 &&$-$0.148 &0.03 &$-$0.096 &&0.0797 &0.029 &   0.0668\\
			&1 &1.083 &1 & 0.995 & 2 &2.14 &&$-$0.083 & 0.005 &$-$0.14 && 0.0867 & 0.0279& 0.4675\\
			&1 &1.125 &1 &0.973 & 4 &4.452	 &&$-$0.125 & 0.027 &$-$0.452&& 0.0954& 0.0295& 2.3334\\ \hline
			&1 & 1.018& $-$1 &$-$1.002& 0 &0.009&& $-$0.018& 0.002& $-$0.009&& 0.0195& 0.0073& 0.0069\\
			&1 & 1.037& $-$1 &$-$1.007& 2 &2.025&& $-$0.037& 0.007& $-$0.025&& 0.0151& 0.0103& 0.0635\\
			&1 & 1.022&$-$1 &$-$1.007 &4 &4.114 &&$-$0.022 &0.007 &$-$0.114 &&0.019 &0.0092& 0.1591\\
			100		&1 & 1.038& 0 &0.005  &0  &0.024&& $-$0.038& $-$0.005& $-$0.024&& 0.0189& 0.0027& 0.0086\\
			&1 & 1.07 & 0 &$-$0.005 &2 &2.079 &&$-$0.07 &0.005 &$-$0.079 &&0.021& 0.0034& 0.089\\
			&1 & 1.029& 0 &0.005  &4  &4.079 && $-$0.029& $-$0.005& $-$0.079&& 0.0177 &0.0039& 0.2877\\
			&1 & 1.08 & 1 &0.97   &0  &0.071 &&$-$0.08& 0.03& $-$0.071&& 0.034& 0.0142& 0.0295\\
			&1 & 1.076& 1 &0.978  &2  &2.199 &&$-$0.076& 0.022& $-$0.199&& 0.037& 0.0124& 0.2379\\
			&1 & 1.036& 1 &0.995  &4  &4.199 &&0.036& 0.052& $-$0.159&& 0.037& 0.0124& 0.0237\\ \hline
			&1&  0.999& $-$1&$-$1.003 & 0& 0.003 &&  0.013& 0.001 &0.003  &&$-$0.013 &0.0048 &0.0038\\
			&1&  1.006& $-$1& $-$1.002& 2&  2.026&& $-$0.006& 0.002 &$-$0.026 &&0.0046 &0.0029 &0.0243\\
			&1&  1.018& $-$1& $-$1.01 & 4& 4.051 &&$-$0.018 &0.01   &$-$0.051 &&0.0069 &0.0036 &0.0744\\
			250		&1&  1.028& 0 & $-$0.001& 0& 0.019 &&$-$0.028 &0.001  &$-$0.019 &&0.0065 &0.0011 &0.0042\\
			&1&  1.031& 0 & 0.001 & 2& 2.043 &&$-$0.031 &$-$0.001 &$-$0.043 &&0.0078 &0.0012 &0.0318\\
			&1&  1.016& 0 & 0.001 & 4& 4.065 &&$-$0.016 &$-$0.001 &$-$0.065 &&0.0068 &0.0014 &0.1125\\
			&1&  1.044& 1 & 0.983 & 4& 0.039 &&$-$0.044 &0.017  &$-$0.039 &&0.0181 &0.0079 &0.0165\\
			&1&  1.009& 1 & 1.005 & 2& 2.045 &&$-$0.009 &$-$0.005 &$-$0.045 &&0.0138 &0.0067 &0.1171\\
			&1&  1.031& 1 & 0.991 & 4& 4.059 &&$-$0.031 &0.009  &$-$0.059 &&0.0125 &0.0057 &0.2744\\ \hline
			&1    &  1.005  &   1  &   0.996 &  0 & 0.012 &&  $-$0.005  &    0.004  &   $-$0.012&&   0.0028 &   0.0014  &   0.0025      \\
			&1    &  1.011  &   1  &   0.995 &  2 & 2.024 &&  $-$0.011  &    0.005  &  $-$0.024&&  0.0025 &   0.0014  &   0.0213     \\
			&1    &  1.002  &   1  &   1.001 &  4 & 3.998 && $-$0.002  &    $-$0.001  &    0.002&&  0.0029 &   0.0014  &   0.0578      \\
			1000	&1    &  1.01   &   0  &   0     &  0 & 0.005 &&  $-$0.010  &    0.000  &   $-$0.005&&   0.002  &    4e$-$04   &  0.0010      \\
			&1    &  1.011  &   0  &  $-$0.003 &  2 & 2.00  &&  $-$0.011  &     0.003  &   $-$0.001&&  0.0019 &    4e$-$04   &  0.0064       \\
			&1    &  1.008  &   0  &   0.001 &  4 & 4.039 && $-$0.008  &    $-$0.001  &  $-$0.039&&  0.0015 &    4e$-$04   &   0.0216       \\
			&1    &  1.001  &  $-$1  &  $-$1.006 &  0 & 0.006 &&   $-$0.001  &    0.006  &   $-$0.006&&  0.0019 &    9e$-$04   &   6e$-$040     \\
			&1    &  0.998  &  $-$1  &  $-$1.002 &  2 & 2.019 &&   0.002  &    0.002  &  $-$0.019&& 0.0011 &    9e$-$04   &  0.0059        \\
			&1    &  1.001  &  $-$1  &  $-$1.001 &  4 & 4.012 && $-$0.001  &    0.001  &   $-$0.012&& 0.0014 &   6e$-$04   &   0.0140    \\	\hline
		\end{tabular}\label{Tablexi1}
	}
\end{table}

\begin{table}[!htbp]
	\centering
	\caption{Empirical means, bias and mean squared errors (MSE) of the
		estimates of the parameters for $(\xi, \sigma) = (0.5, 1)$ and some values of $\mu, \delta$ and $n$.}
	\resizebox{\linewidth}{!}{
		\begin{tabular}{cccrrcccrrrrrcccc}
			\hline
			&\multicolumn{6}{c}{Empirical means} && \multicolumn{3}{c}{Bias} && \multicolumn{3}{c}{MSE} \\
			
			$n$&$\xi$ &  $\widehat{\xi}$  & $\mu$ & $\widehat{\mu}$ &  $\delta$ & $\widehat{\delta}$  && $\widehat{\xi}$ & $\widehat{\mu}$ &  $\delta$  && $\widehat{\xi}$ & $\widehat{\mu}$ &  $\widehat{\delta}$  \\ 		\hline
			&0.5&  0.501 &$-$1& $-$0.999& 0& 0.032&& $-$0.001& $-$0.001& $-$0.032&& 0.0233& 0.024& 0.0129\\
			&0.5&  0.524 &$-$1& $-$1.032& 2& 2.15 &&$-$0.024 &0.032 &$-$0.15 &&0.0268& 0.025& 0.1033\\
			&0.5&  0.486 &$-$1& $-$1.019& 4& 4.215&& 0.014 &0.019 &$-$0.215 &&0.0254& 0.023& 0.2411\\
			50	&0.5&  0.549 &0 & 0.064 & 0& $-$0.049&& 0.028& $-$0.064& 0.0375&& 0.0154& 0.0282 &0321\\
			&0.5&  0.562 &0 & 0.015 & 2& 2.118 &&$-$0.062& $-$0.015& $-$0.118&& 0.045& 0.0175& 0.1552\\
			&0.5&  0.551 &0 & 0.019 & 4& 4.282 &&$-$0.051& $-$0.019& $-$0.282&& 0.0303& 0.0124& 0.4835\\
			&0.5&  0.541 &1 & 1.029 & 0& 0.057 &&$-$0.041& $-$0.029& $-$0.057&& 0.0725& 0. 0255& 0.0515\\
			&0.5&  0.559 &1 & 1.006 & 2& 2.094 &&$-$0.059& $-$0.006& $-$0.094&& 0.0786& 0.0254& 0.3663\\
			&0.5&  0.563 &1 & 0.998 & 4& 4.379 &&$-$0.063& 0.002 &$-$0.379 &&0.083& 0.0238 &1.7225\\ \hline
			&0.5&  0.503& $-$1& $-$1.011& 0& 0.024&& $-$0.003& 0.011& $-$0.024 &&0.0113& 0.0142& 0.0068\\
			&0.5&  0.503& $-$1& $-$1.02 & 2& 2.09 && $-$0.003& 0.02 &$-$0.09 &&0.0088 &0.0128& 0.0469\\
			&0.5&  0.518& $-$1& $-$1.017& 4& 4.145&& $-$0.018& 0.017& $-$0.145&& 0.0094& 0.011& 0.1269\\
			100	&0.5& 0.525 & 0 &  0.005& 0& 0.025&& $-$0.025& $-$0.005& $-$0.025&& 0.0115& 0.0052& 0.0084\\
			&0.5& 0.511 & 0 &  0.021& 2& 2.067&& $-$0.011& $-$0.021& $-$0.067&& 0.0105& 0.0065& 0.0663\\
			&0.5& 0.52  & 0 &  0.012& 4& 4.136&& $-$0.02 &$-$0.012 &$-$0.136 &&0.0131 &0.0057& 0.2212\\
			&0.5& 0.53  & 1 & 1.012 & 0& 0.011&& $-$0.03 &$-$0.012 &$-$0.011 &&0.0281 &0.01 &0.0195\\
			&0.5& 0.53  & 1 & 0.997 & 2& 2.102&& $-$0.03 &0.003 &$-$0.102 &&0.0519 &0.0125 &0.3862\\
			&0.5& 0.516 & 1 & 1.002 & 4& 4.093&& $-$0.016& $-$0.002& $-$0.093&& 0.024 &0.0084& 0.3418\\ 		\hline
			&0.5&  0.513&  $-$1& $-$1.002&0& 0.011&& 0.002 &$-$0.011& 0.0042&& 0.0033& 0.002& 0.232\\
			&0.5&  0.505& $-$1 &$-$1.009 &2& 2.027&& $-$0.005& 0.009& $-$0.027&& 0.0051& 0.004 &0.0164\\
			&0.5&  0.501& $-$1 &$-$1.009 &4& 4.044&& $-$0.001& 0.009& $-$0.044&& 0.0042& 0.0054& 0.0571\\
			250		&0.5&  0.511&  0 & 0.009 &0& 0.01 &&$-$0.011 &$-$0.009& $-$0.01 &&0.0033 &0.0024 & 0.0035\\
			&0.5&  0.505&  0 & 0.008 &2& 2.035&& $-$0.005& $-$0.008& $-$0.035&& 0.0053& 0.0026& 0.0315\\
			&0.5&  0.514&  0 &$-$0.005 &4& 4.051&& $-$0.014& 0.005 &$-$0.051&& 0.006 &0.003 &0.0654\\
			&0.5&  0.537&  1 & 0.988 &0& 0.023&& $-$0.037& 0.012 &$-$0.023&& 0.0108 &0.0045& 0.0066\\
			&0.5&  0.511&  1 & 1.005 &2& 2.008&& $-$0.011& $-$0.005& $-$0.008&& 0.0109& 0.0047& 0.0639\\
			&0.5&  0.528&  1 & 0.998 &4& 4.062&& $-$0.028& 0.002 &$-$0.062&& 0.0108 &0.0046 &0.1906\\ \hline
			&0.5  &  0.493  &   1  &   1.003 &  0 &$-$0.003&& 0.007  &   $-$0.003  &    0.003&&0.0026 &    0.0015  &   0.0020 \\
			&0.5  &  0.512  &   1  &   0.992 &  2 & 2.041&&  $-$0.012  &    0.008  &  $-$0.041&& 0.0023 &    0.0013  &   0.0167 \\
			&0.5  &  0.506  &   1  &   0.999 &  4 & 4.023&&$-$0.006  &    0.001  &   $-$0.023 && 0.0025 &    0.0011  &    0.0497\\
			1000	&0.5  &  0.507  &   0  &  $-$0.005 &  0 &$-$0.001&& $-$0.007  &    0.005  &    0.001 &&8e$-$04  &    6e$-$04   &    9e$-$04 \\
			&0.5  &  0.501  &   0  &   0.003 &  2 & 2.006&& $-$0.001  &    $-$0.003 &  $-$0.006 && 0.0013 &    7e$-$04   &    0.0061 \\
			&0.5  &  0.505  &   0  &  $-$0.001 &  4 & 3.998&&  $-$0.005  &    0.001  &   0.002 &&9e$-$04  &    7e$-$04   &    0.0135 \\
			&0.5  &  0.501  &  $-$1  &  $-$0.998 &  0 & 0    && $-$0.001  &   $-$0.002  &    0.000 && 0.0011 &    0.0012  &   6e$-$04  \\
			&0.5  &  0.505  &  $-$1  &  $-$1.001 &  2 & 2.004 && $-$0.005  &   0.001   &   $-$0.004 &&  9e$-$04  &    9e$-$04   &   0.0045 \\
			&0.5  &  0.499  &  $-$1  &  $-$0.997 &  4 & 4.007 && 0.001   &  $-$0.003   &   $-$0.007 &&  0.001  &   0.0013  &    0.0149\\
			\hline
		\end{tabular}\label{Tablexi05}
	}
\end{table}


\begin{table}[!htbp]
	\centering
	\caption{Empirical means, bias and mean squared errors (MSE) of the
		estimates of the parameters for $(\xi, \sigma) = (0.25, 1)$ and some values of $\mu, \delta$ and $n$.}
	\resizebox{\linewidth}{!}{
		\begin{tabular}{cccrrcccrrrrrcccc}
			\hline
			&\multicolumn{6}{c}{Empirical means} && \multicolumn{3}{c}{Bias} && \multicolumn{3}{c}{MSE} \\
			
			$n$&$\xi$ &  $\widehat{\xi}$  & $\mu$ & $\widehat{\mu}$ &  $\delta$ & $\widehat{\delta}$  && $\widehat{\xi}$ & $\widehat{\mu}$ &  $\delta$  && $\widehat{\xi}$ & $\widehat{\mu}$ &  $\widehat{\delta}$  \\		\hline
			&0.25&  0.256& $-$1& $-$1.008& 0 & 0.04 && $-$0.045& 0.128& $-$0.04&& 0.0216& 0.0322& 0.0116\\
			&0.25&  0.244& $-$1& $-$1.018& 2 & 2.116&& 0.006 & 0.018& $-$0.116&& 0.0162& 0.0323& 0.0838\\
			&0.25&  0.239& $-$1& $-$1.013& 4 & 4.185&& 0.011 & 0.013& $-$0.185 &&0.0186& 0.0262& 0.2052\\
			50		&0.25&  0.273&  0&  0.016& 0 & 0.056&& $-$0.023& $-$0.016& $-$0.056&& 0.0302& 0.0155& 0.0251\\
			&0.25&  0.285&  0& 0.026 & 2 & 2.131&& $-$0.035& $-$0.026& $-$0.131&& 0.035 &0.0168 &0.2199\\
			&0.25&  0.275&  0& 0.02  & 4 & 4.213&& $-$0.025& $-$0.02 &$-$0.213 &&0.0247 &0.0143 &0.4635\\
			&0.25&  0.267&  1& 1.001 & 0 & 0.037&& $-$0.017& $-$0.001& $-$0.037&& 0.0803& 0.029 &0.0414\\
			&0.25&  0.289&  1& 1.017 & 2 &2.197 &&$-$0.039 &$-$0.017 &$-$0.197 &&0.0827 &0.0283 &0.5167\\
			&0.25&  0.292&  1& 1.009 & 4 &4.309 &&$-$0.042 &$-$0.009 &$-$0.309 &&0.0812 &0.0226 &1.3124	\\ \hline
			&0.25& 0.245& $-$1 &$-$1.021& 0& 0.037&& 0.005& 0.021& $-$0.037 &&0.0096& 0.0199& 0.0054\\
			&0.25& 0.247& $-$1 &$-$1.007& 2& 2.071&& 0.003& 0.007& $-$0.071 &&0.0075& 0.0157& 0.048\\
			&0.25& 0.241& $-$1 &$-$1.004& 4& 4.045&& 0.009& 0.004& $-$0.045 &&0.007 &0.0114 &0.1047\\
			100		&0.25& 0.255&  0 & 0.027& 0& 0.028&& $-$0.005& $-$0.027& $-$0.028&& 0.0104& 0.0074& 0.007\\
			&0.25& 0.262&  0 & 0.01 & 2& 2.067&& $-$0.012& $-$0.01& $-$0.067 &&0.0107 &0.0087& 0.0579\\
			&0.25& 0.263&  0 & 0.007& 4& 4.115&& $-$0.013& $-$0.007& $-$0.115&& 0.0101& 0.0085& 0.1793\\
			&0.25& 0.249&  1 & 1.009& 0& 0.011&& 0.001 &$-$0.009 &$-$0.011 &&0.025 &0.0108 &0.0124\\
			&0.25& 0.272&  1 & 1.098& 2& 2.058&& $-$0.022&  $-$0.058& 0.0325&& 0.0117& 0.1798& 0.234\\
			&0.25& 0.241&  1 & 1.021& 4& 4.052&& 0.009 &$-$0.021& $-$0.052&& 0.0232& 0.0104& 0.404\\ \hline
			&0.25&  0.2255& $-$1& $-$0.994& 0& 0.007&& $-$0.005& $-$0.006& $-$0.007 &&0.0022 &0.0043& 0.0017\\
			&0.25&  0.243 & $-$1& $-$1.003& 2& 2.053&& 0.007 &0.003 &$-$0.053&& 0.0026& 0.0046& 0.0186\\
			&0.25&  0.254 & $-$1& $-$1.014& 4& 4.072&& $-$0.004& 0.014 &$-$0.072&& 0.003& 0.0052& 0.0436\\
			250		&0.25&  0.267 & 0 & 0.002 & 0& 0.02 &&$-$0.017 &$-$0.002 &$-$0.02 &&0.0028& 0.0033& 0.0032\\
			&0.25&  0.255 & 0 & 0.005 & 2& 2.01 &&$-$0.005 &$-$0.005 &$-$0.01 &&0.0034& 0.0026& 0.0229\\
			&0.25&  0.249 & 0 & 0.001 & 4& 4.038 &&0.001 &$-$0.001 &$-$0.038&& 0.0046& 0.0031& 0.0772\\
			&0.25&  0.236 & 1 & 1.01  & 0& $-$0.003&& 0.014& $-$0.01 &0.003 &&0.0079 &0.005 &0.0038\\
			&0.25&  0.241 & 1 & 1.001 &2 & 1.997 &&0.009 &$-$0.001 &0.003 &&0.0068 &0.0037& 0.0469\\
			&0.25&  0.269 & 1 &0.998  &4 & 4.121 &&$-$0.019& 0.002 &$-$0.121 &&0.0087& 0.0049& 0.195\\ 		\hline
			&0.25 &  0.253  &   1  &   1.002 &  0 & 0.002&&$-$0.003  &  $-$0.002  &  $-$0.002&& 0.0015 &    0.001   &   0.0010 \\
			&0.25 &  0.25   &   1  &   0.996 &  2 & 2 &&   0.000  &    0.004  &  0.000  &&  0.002  &    9e$-$04   &   0.0132   \\
			&0.25 &  0.256  &   1  &   0.996 &  4 & 4.018 && $-$0.006  &    0.004  & $-$0.018		&&  0.0018 &    0.0013  &   0.0331 \\
			1000	&0.25 &  0.254  &   0  &  $-$0.001 &  0 & 0.002&& $-$0.004  &    0.001  &  $-$0.002 && 9e$-$04  &   8e$-$04   &    6e$-$04 \\
			&0.25 &  0.249  &   0  &   0.004 &  2 & 2.013 && 0.001  &   $-$0.004  &  $-$0.013 && 8e$-$04  &   8e$-$04   &   0.0066\\
			&0.25 &  0.252  &   0  &  $-$0.002 &  4 & 4.012&& $-$0.002  &    0.002  & $-$0.012  && 8e$-$04  &   6e$-$04   &   0.0199\\
			&0.25 &  0.248  &  $-$1  &  $-$1.001 &  0 & 0.002 && 0.002  &     0.001  & $-$0.002 && 6e$-$04  &   0.0014  &   5e$-$04\\
			&0.25 &  0.25   &  $-$1  &  $-$0.998 &  2 & 2.004 && 0.000  &    $-$0.002  & $-$0.004 && 7e$-$04  &   0.0011  &   0.0037\\
			&0.25 &  0.253  &  $-$1  &  $-$1.002 &  4 & 4.015 && $-$0.003  &    0.002  &  $-$0.015 &&  7e$-$04  &   0.001   &   0.0106\\
			\hline
		\end{tabular}\label{Tablexi025}
	}
\end{table}


\begin{table}[!htbp]
	\centering
	\caption{Empirical means, bias and mean squared errors (MSE) of the
		estimates of the parameters for $(\xi, \sigma) = (-0.25, 1)$ and some values of $\mu, \delta$ and $n$.}
	\resizebox{\linewidth}{!}{
		\begin{tabular}{cccrrcccrrrrrcccc}
			\hline
			&\multicolumn{6}{c}{Empirical means} && \multicolumn{3}{c}{Bias} && \multicolumn{3}{c}{MSE} \\
			
			$n$		&		$\xi$ &  $\widehat{\xi}$  & $\mu$ & $\widehat{\mu}$ &  $\delta$ & $\widehat{\delta}$  && $\widehat{\xi}$ & $\widehat{\mu}$ &  $\delta$  && $\widehat{\xi}$ & $\widehat{\mu}$ &  $\widehat{\delta}$  \\
			\hline
			&$-$0.25&  $-$0.266& $-$1& $-$0.981& 0& 0.055&& 0.016& $-$0.019& $-$0.055&& 0.0071& 0.0269& 0.0197\\
			&$-$0.25&  $-$0.288& $-$1& $-$1.012& 2& 2.151&& 0.038& 0.012& $-$0.151&& 0.0082& 0.0279& 0.0849\\
			&$-$0.25&  $-$0.296& $-$1& $-$0.997& 4& 4.288&& 0.046& $-$0.003& $-$0.288&& 0.0093& 0.0259& 0.2736\\
			50		&$-$0.25&  $-$0.286&  0& 0.002&  0& 0.024&& 0.036& $-$0.002& $-$0.024&& 0.0159& 0.0298& 0.018\\
			&$-$0.25&  $-$0.266&  0& 0.003&  2& 2.094&& 0.016& $-$0.003& $-$0.094&& 0.0125& 0.0243& 0.1833\\
			&$-$0.25 & $-$0.283&  0& 0.031&  4& 4.079 &&0.033& $-$0.031& $-$0.079&& 0.0163& 0.0252& 0.439\\
			&$-$0.25 & $-$0.256		&1& 1.01 &  0&  0.028&& 0.006& $-$0.01& $-$0.028&& 0.0326& 0.0312& 0.0211\\
			&$-$0.25 & $-$0.278&  1& 1.03 &  2& 2.119 &&0.028& $-$0.03 &$-$0.119&& 0.0332& 0.0224& 0.2411\\
			&$-$0.25 & $-$0.257&  1& 1.037&  4& 4.165 &&0.007& $-$0.037 &$-$0.165 &&0.0374& 0.0359& 0.6435		\\ \hline
			&$-$0.25& $-$0.264& $-$1& $-$1.006& 0& 0.027&& 0.014& 0.006& $-$0.027&& 0.0033& 0.0149& 0.0051\\
			&$-$0.25& $-$0.264& $-$1& $-$0.992& 2& 2.056&& 0.014& $-$0.008& $-$0.056&& 0.0032& 0.0169& 0.0435\\
			&$-$0.25& $-$0.272& $-$1& $-$1.002& 4& 4.156&& 0.022& 0.002& $-$0.156&& 0.0041& 0.0116& 0.134\\
			100		&$-$0.25& $-$0.258&  0& 0.002 & 0& 0.028&& 0.008& $-$0.002& $-$0.028&& 0.0053& 0.0138& 0.0095\\
			&$-$0.25& $-$0.271&  0& 0.015 & 2& 2.02 &&0.021 & $-$0.015 &$-$0.02&& 0.0056& 0.014& 0.0709\\
			&$-$0.25& $-$0.255&  0& $-$0.001& 4& 4.095&& 0.005& 0.001 &$-$0.095&& 0.005& 0.0118& 0.2409\\
			&$-$0.25& $-$0.256&  1& 1.018 & 0& 0.004&& 0.006& $-$0.018& $-$0.004&& 0.0115& 0.0118& 0.0093\\
			&$-$0.25& $-$0.275&  1& 1.026 &2 &2.007 &&0.025 & $-$0.026& $-$0.007&& 0.0082& 0.013 &0.0599\\
			&$-$0.25& $-$0.274&  1& 1.015 & 4& 4.02 &&0.024 &$-$0.015 &$-$0.02  &&0.0103& 0.0131& 0.234		\\ \hline
			&$-$0.25& $-$0.261&  $-$1& $-$0.994& 0& 0.012&& 0.011& $-$0.006& $-$0.012 &&9e$-$04& 0.0041& 0.0014\\
			&$-$0.25& $-$0.255&  $-$1& $-$0.985& 2& 2.03&& 0.005 &$-$0.015 &$-$0.03  &&0.0011& 0.0043& 0.0166\\
			&$-$0.25& $-$0.259&  $-$1& $-$1.007& 4& 4.037&& 0.009& 0.007 &$-$0.037 &&0.0012& 0.0058& 0.0458\\
			250		&$-$0.25& $-$0.257&   0& 0.012 & 0& 0.007&& 0.007&$-$0.012 &0.0018 &&0.0045& 0.0108&0.0769\\
			&$-$0.25& $-$0.257&   0& 0.014 & 2& 2.03 &&0.007 &$-$0.014 &$-$0.03 &&0.0021& 0.0052& 0.027\\
			&$-$0.25& $-$0.258&   0& 0.014 & 4& 4.023&& 0.008& $-$0.014 &$-$0.023&& 0.0022& 0.0067& 0.1011\\
			&$-$0.25& $-$0.258&   1& 1.012 & 0& 0.006&& 0.008& $-$0.012 &$-$0.006&& 0.0044& 0.0046& 0.0035\\
			&$-$0.25& $-$0.268&   1& 0.997 & 2& 1.974&& 0.018& 0.003& 0.026&& 0.0031& 0.0038& 0.026\\
			&$-$0.25& $-$0.256&   1& 1.007 & 4& 4.003&& 0.006& $-$0.007& $-$0.003&& 0.0035& 0.0345& 0.0042\\ \hline
			&$-$0.25 & $-$0.257  &   1  &   1.003 &  0 &$-$0.002&& 0.007  &   $-$0.003  &   0.002&& 9e$-$04  &    0.0013  &   9e$-$04\\
			&$-$0.25 & $-$0.248  &   1  &   1.003 &  2 & 2.01&& $-$0.002  &   $-$0.003  &   $-$0.010&& 8e$-$04  &   0.0013  &   0.0061 \\
			&$-$0.25 & $-$0.253  &   1  &   1     &  4 & 4.003&& 0.003  &    0.000  &   $-$0.003&&  0.001  &    0.0012  &  0.0228 \\
			1000	&$-$0.25 & $-$0.251  &   0  &   0     &  0 & 0.002&& 0.001  &     0.000  &  $-$0.002&& 4e$-$04  &    0.0013  &   7e$-$04 \\
			&$-$0.25 & $-$0.252  &   0  &   0     &  2 & 2.004&& 0.002  &     0.000  &  $-$0.004&& 5e$-$04  &   0.0012  &    0.0087 \\
			&$-$0.25 & $-$0.251  &   0  &   0.001 &  4 & 4.003&& 0.001  &   $-$0.001  &   $-$0.003&& 4e$-$04  &     0.0011  &  0.0176 \\
			&$-$0.25 & $-$0.25   &  $-$1  &  $-$1.003 &  0 & 0.003&& 0.000  &   0.003  &   $-$0.003&&  2e$-$04  &    0.003   &    0.0013 \\
			&$-$0.25 & $-$0.253  &  $-$1  &  $-$1     &  2 & 1.998&& 0.003  &   0.000  &    0.002&& 3e$-$04  &   0.0016  &   0.0033\\
			&$-$0.25 & $-$0.254  &  $-$1  &  $-$0.997 &  4 & 4.021&& 0.004  &   $-$0.003  &  $-$0.021 && 2e$-$04  &   0.0012  &   0.0088\\
			\hline
		\end{tabular}\label{Tablexim025}
	}
\end{table}


\begin{table}[!htbp]
	\centering
	\caption{Empirical means, bias and mean squared errors (MSE) of the
		estimates of the parameters for $(\xi, \sigma) = (-0.5, 1)$ and some values of $\mu, \delta$ and $n$.}
	\resizebox{\linewidth}{!}{
		\begin{tabular}{cccrrcccrrrrrcccc}
			\hline
			&\multicolumn{6}{c}{Empirical means} && \multicolumn{3}{c}{Bias} && \multicolumn{3}{c}{MSE} \\
			
			$n$		&$\xi$ &  $\widehat{\xi}$  & $\mu$ & $\widehat{\mu}$ &  $\delta$ & $\widehat{\delta}$  && $\widehat{\xi}$ & $\widehat{\mu}$ &  $\widehat{\delta}$  && $\widehat{\xi}$ & $\widehat{\mu}$ &  $\widehat{\delta}$  \\ \hline
			&$-$0.5& $-$0.553& $-$1& $-$0.955& 0& 0.044&& 0.053& $-$0.045& $-$0.044&& 0.0102& 0.0305& 0.0123\\
			&$-$0.5& $-$0.563& $-$1& $-$0.948& 2& 2.229&& 0.063& $-$0.052& $-$0.229&& 0.0129& 0.0319& 0.1927\\
			&$-$0.5& $-$0.579& $-$1& $-$0.952& 4& 4.25 &&0.079 &$-$0.048 &$-$0.25&& 0.0077& 0.0209& 0.3505\\
			$50$	&$-$0.5& $-$0.545&  0&  0.04 & 0& 0.017&& 0.045 &$-$0.04& $-$0.017&& 0.012& 0.0259& 0.0197\\
			&$-$0.5& $-$0.565&  0&  0.032& 2& 2.005&& 0.065 &$-$0.032& $-$0.005&& 0.0112& 0.0224& 0.1354\\
			&$-$0.5& $-$0.553& $-$1&$-$0.553&  0&  0.0434&& 4.069& 0.053& $-$0.043&& $-$0.069& 0.0127& 0.0275\\
			&$-$0.5& $-$0.562&  1&  1.021& 0& $-$0.012&& 0.062 & $-$0.021& 0.012&& 0.0258& 0.0248& 0.0211\\
			&$-$0.5& $-$0.532&  1&  1.025& 2& 2.067 &&0.032& $-$0.025& $-$0.067&& 0.0247& 0.0286& 0.1658\\ 	\hline	
			&$-$0.5&  $-$0.536& $-$1& $-$0.964& 0& 0.025&& 0.036& $-$0.036& $-$0.025&& 0.0036& 0.0131& 0.0052\\
			&$-$0.5&  $-$0.528& $-$1& $-$0.982& 2& 2.076&& 0.028& $-$0.018& $-$0.076&& 0.003& 0.0114& 0.0457\\
			&$-$0.5&  $-$0.531& $-$1& $-$0.984& 4& 4.182&& 0.031& $-$0.016& $-$0.182&& 0.0027& 0.0093& 0.1574\\
			$100$	&$-$0.5&  $-$0.523&  0&  0.015& 0& 0.015&& 0.023& $-$0.015& $-$0.015&& 0.004& 0.0113& 0.008\\
			&$-$0.5&  $-$0.532&  0&  0.026& 2& 2.053&& 0.032& $-$0.026& $-$0.053&& 0.0042& 0.011& 0.0643\\
			&$-$0.5&  $-$0.513&  0&  0.008& 4& 4.153&& 0.013& $-$0.008& $-$0.153&& 0.0042& 0.0101& 0.2634\\
			&$-$0.5& $-$0.516 &  1&  1.01 & 0& 0.016&& 0.016 &$-$0.01 &$-$0.016&& 0.011 &0.0113& 0.0102\\
			&$-$0.5& $-$0.532 &  1&  1.008& 2& 1.988&& 0.032 &$-$0.008& 0.012 &&0.0095 &0.0116& 0.0731\\
			&$-$0.5& $-$0.527 &  1&  0.983& 4& 3.955&& 0.027 &0.017 &0.045 &&0.0089 &0.0103& 0.1963		\\ \hline
			&$-$0.5 &$-$0.516& $-$1& $-$0.985&  0& 0.018&& 0.016& $-$0.015& $-$0.018&& 7e$-$04& 0.0033& 0.002\\
			&$-$0.5 &$-$0.516& $-$1& $-$0.982&  2& 2.049&& 0.016& $-$0.018& $-$0.049&& 0.0012& 0.0096& 0.0346\\
			&$-$0.5 &$-$0.51 &$-$1 & $-$1.004&  4& 4.096&& 0.01& 0.004& $-$0.096&& 8e$-$04& 0.0035& 0.0767\\
			$250$	&$-$0.5 &$-$0.514& 0 & 0.011 &  0& $-$0.003&& 0.014& $-$0.011& 0.003&& 0.0012& 0.0033& 0.003\\
			&$-$0.5 &$-$0.51 & 0 & 0.007 &  2& 2.025 &&0.01& $-$0.007& $-$0.025&& 0.0015& 0.0042& 0.0352\\
			&$-$0.5 &$-$0.509& 0 &$-$0.001 &  4& 4.009 && 0.009& 0.001& $-$0.009&& 0.0012& 0.0038& 0.0823\\
			&$-$0.5 &$-$0.51 & 1 &1.011  &  0& 0.006 &&0.01& $-$0.011& $-$0.006&& 0.0031& 0.0068& 0.0031\\
			&$-$0.5 &$-$0.513& 1 &0.999  &  2& 1.99 &&0.013& 0.001& 0.01&& 0.0028& 0.004& 0.0283\\
			&$-$0.5 &$-$0.511& 1 &1.007  &  4& 3.995 &&0.011& $-$0.007& 0.005&& 0.0033& 0.0061& 0.0712\\ \hline
			&$-$0.5  & $-$0.505  &   1  &   1.006 &  0 & 0&& 0.005  &   $-$0.006  &    0.000 &&  6e$-$04  &   0.0012  &   8e$-$04\\
			&$-$0.5  & $-$0.502  &   1  &   1.003 &  2 & 2.002&& 0.002  &   $-$0.003  &   $-$0.002&&  5e$-$04  &   0.0014  &   0.0063 \\
			&$-$0.5  & $-$0.505  &   1  &   1     &  4 & 4&&  0.005  &    0.000  &   0.000 && 7e$-$04  &   0.0012  &   0.0226 \\
			$1000$&$-$0.5  & $-$0.501  &   0  &   0.009 &  0 & 0.017&& 0.001  &   $-$0.009  &   $-$0.017&&0.001  &  0.0048  &    0.0114 \\
			&$-$0.5  & $-$0.505  &   0  &   0.005 &  2 & 2.01&& 0.005  &   $-$0.005  &   $-$0.010&&3e$-$04  &  0.0013  &    0.0065  \\
			&$-$0.5  & $-$0.505  &   0  &   0.004 &  4 & 4.005&&  0.005  &   $-$0.004  &   $-$0.005&&4e$-$04  &  0.0011  &    0.0223  \\
			&$-$0.5  & $-$0.505  &  $-$1  &  $-$0.985 &  0 & 0.023&& 0.005  &   $-$0.015  &   $-$0.023&& 4e$-$04  &  0.0049  &   0.0152\\
			&$-$0.5  & $-$0.503  &  $-$1  &  $-$1     &  2 & 2.02&& 0.003  &    0.000  &   $-$0.020 && 2e$-$04  &  0.0026  &   0.0204\\
			&$-$0.5  & $-$0.505  &  $-$1  &  $-$0.996 &  4 & 4.006 && 0.005  &   $-$0.004  &  $-$0.006&&1e$-$04  &  7e$-$04   &   0.0114\\
			\hline
		\end{tabular}\label{Tablexim05}
	}
\end{table}

 \section{Applications to environmental data}
\label{sec:7}
\noindent

In this section, to illustrate the applicability of the model proposed and its advantages, the BGEV distribution was fitted to two data sets of the climate of the Federal District in Brazil. Furthermore, we compared the BGEV distribution with the
GEV distribution. We adopt the MLE method (as discussed in Sect. \ref{mle}) and all the computations were
done using the R software ( R-Team 2020). For the parameters estimates  of the GEV distribution we use  the fExtremes package (Wuertz et al. 2017).

The data series employed were taken from the of INMET (National Institute of Meteorology) at \url{www.inmet.gov.br.
Data Stations - Automatic Stations}. The data here  utilized  correspond to the
wind speed during the period 10/12/2018 to 10/12/2019 and the temperature at the oval point during the period 24/10/2018 to 20/10/2019.
We used the maximum block technique with blocks of size 24 based in daily maximum value.
Furthermore, the transformation $(x - \rm{Mean})/SD$, where SD is the standard deviation of the data, is to used to adjust it in the
BGEV distribution. We applied Ljung–Box’s test (Trapletti 2016) 
for verifying the
null hypothesis of serial independence of the data sets. The test statistics did not reject the null hypothesis at the significance level of $1\%$.
The descriptive statistics are given in Table \ref{statdvv}.
Finally, from Fig. \ref{Figure:hist1}, we can see that the data set has bimodality. More information about the applications can be found in 
Paiva (2020)

\begin{figure}[!htbp]
	\caption{ } 
	\begin{center}
		\vspace{-1.43cm}	
		\subfigure[Temperature data set]{\includegraphics[scale=0.6]{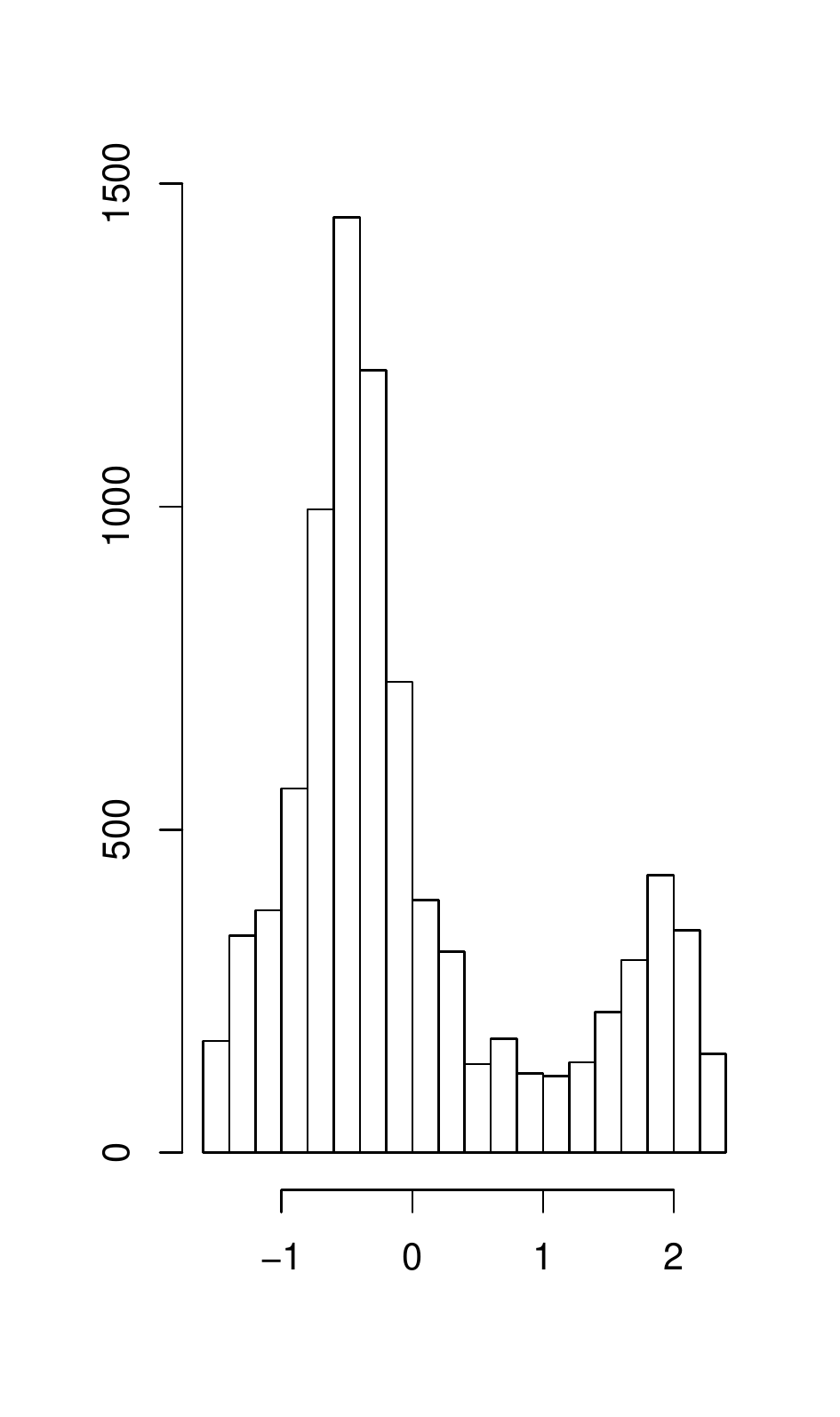}}%
		\subfigure[Wind speed data set]{\includegraphics[scale=0.6]{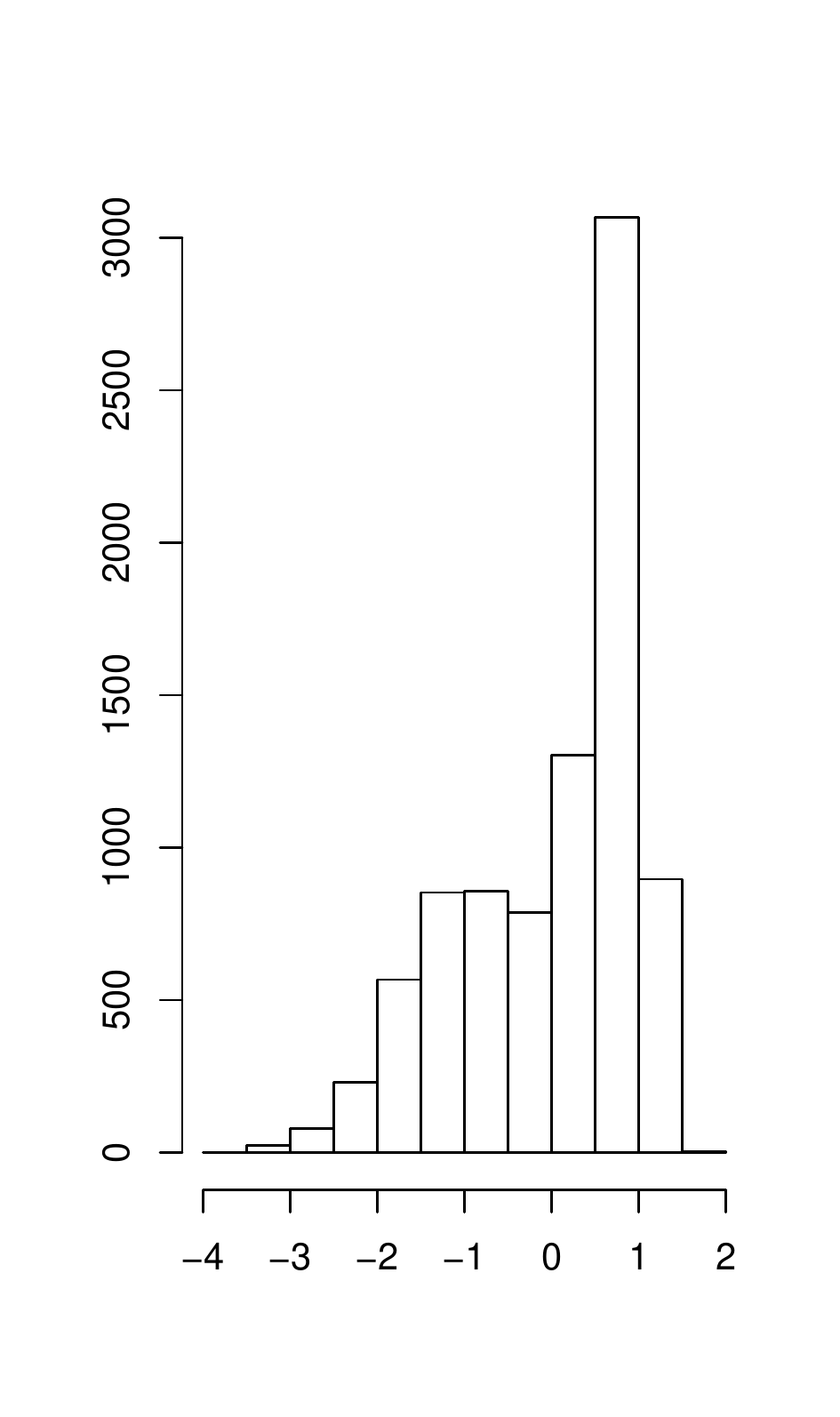}}%
	\end{center}	
	{Histogram of the two data sets.}
	\label{Figure:hist1}
\end{figure}
%

 \begin{table}[!htbp]
 	\centering
 	\caption{Descriptive statistics for the  standardized data sets.}
 \resizebox{\linewidth}{!}{
 	\begin{tabular}{crrrrrr}
 		\hline
 		& Minimum & 1º Quartile & Median & Mean  & 3º Quartile & Maximum  \\
 		\hline
 		Wind speed & $-$2.26 & $-$1.40 & $-$1.08 & $-$0.75 & $-$0.38 & 1.54  \\
 		Temperature at the oval point & $-$3.44& $-$0.43 &0.71 &0.32 & 1.12 &1.88 \\
 		\hline
 	\end{tabular}
 }
 	\label{statdvv}
 \end{table}

The MLEs of the parameters, the value of $-2\ell(\Theta)$, the Kolmogorov-Smirnov (KS) and Anderson-Darling (AD) statistics for the BGEV and GEV models are listed in Table \ref{tabaplic}. In general, the smaller the values of these statistics,
the better the fit. The graphical study performed in Sect. \ref{sec:4}  helped in choosing  of the initial values of the estimates. For the wind speed the initial values were   $\mu = 0, \sigma = 1,  \xi = -0.5$ and $\delta = 0.5$ and for the temperature at the oval point were $\mu = 0, \sigma = 1, \xi = -0.25 $ and $\delta= 0.5$. Since the values are smaller for the BGEV distribution compared with those values of the other
models, the new distribution seems to be a very competitive model for these data sets.
These results illustrate the potentiality of the BGEV model and the importance of the additional parameter.

\begin{table}[!htbp]
\begin{center} \caption{MLEs of the model parameters and the statistics KS, AD and $-2\ell(\Theta)$ for the two data sets.}\label{tabaplic} \vspace{0.2cm}
\begin{tabular}{lrrrrrrrr}
\hline
\multicolumn{5}{c}{Estimates}&&\multicolumn{3}{c}{Statistic}\\
\cline{2-5} \cline{7-9}
Wind speed   &$\widehat\mu$      &$\widehat\sigma$       &$\widehat\xi$       &$\widehat\delta$          & &KS &AD  & $-2\ell(\Theta)$\\
\hline
BGEV    & $-$1.0430  & 0.6516  &  $-$0.3042  &   0.9434    & &  0.05396   &  47.802& 22204.9    \\\\
GEV     & $-$0.4957  & 0.6853  &  0.1330     &  0     & & 0.24109    & 45.179 &  22214.6   \\
\hline
Temperature  &$\widehat\mu$      &$\widehat\sigma$       &$\widehat\xi$       &$\widehat\delta$          & &KS &AD  & $-2\ell(\Theta)$\\
\hline
BGEV    & 0.1804  &  0.8197 & $-$0.4839   & 0.4201      & & 0.0738    & 66.297 &    \\\\
GEV     & $-$0.1955  &  1.0599 & $-$0.6045   &  0     & &  0.3261   & 70.706 &  22111.0  \\
\hline
\end{tabular}
\end{center}
\end{table}

The fitted density (for the BGEV distribution) for the first and second data sets are
displayed in Figures \ref{fitvv} and \ref{fitpov} (together with the QQ plots), respectively.
These results illustrate the potentiality of the BGEV distribution and
the importance of the  additional parameter.
\begin{figure}[H]
	\caption{ } 	
	\begin{center}
		\vspace{-3.5cm}
		\includegraphics[width=0.44\linewidth]{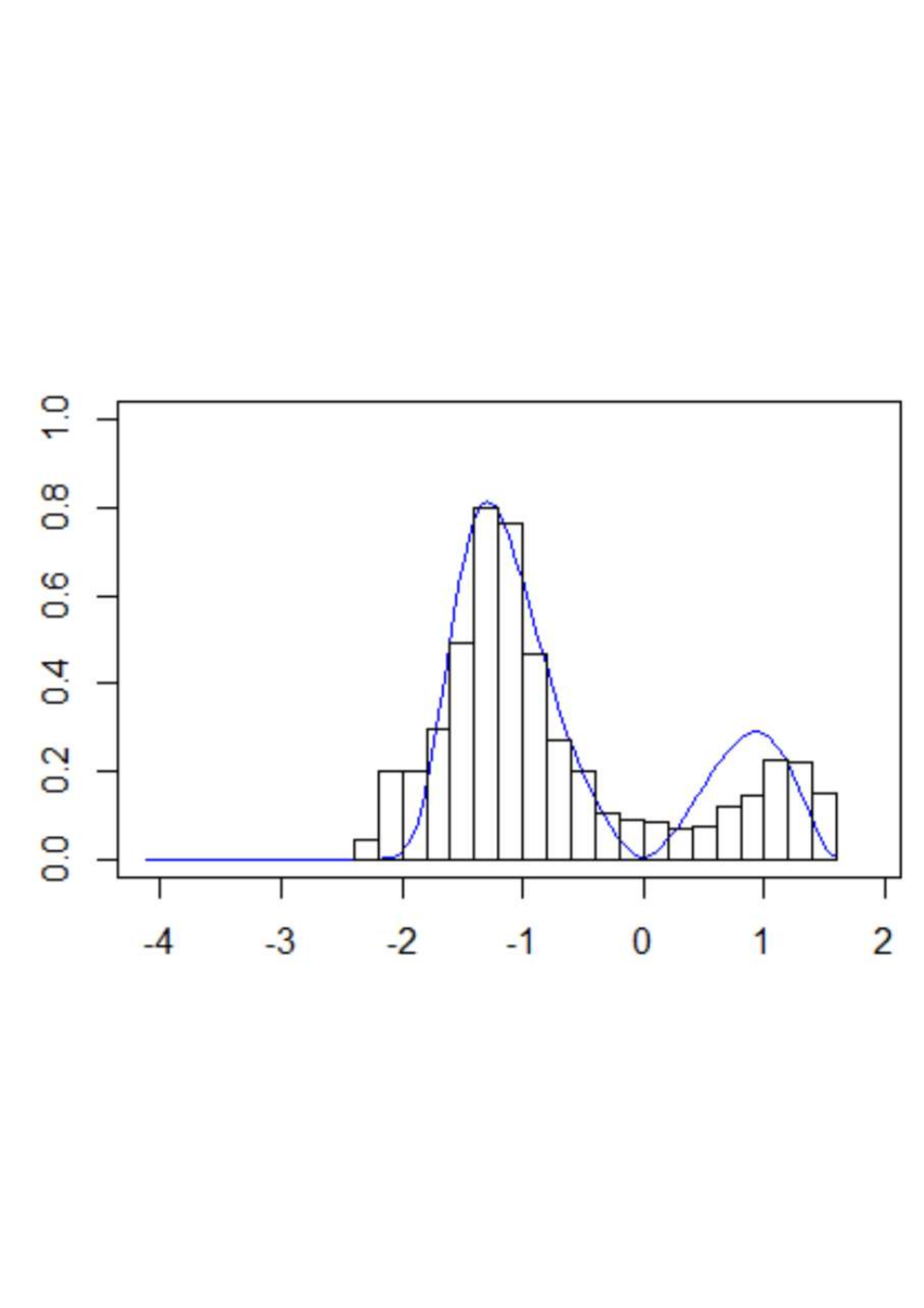}
		\includegraphics[width=0.44\linewidth]{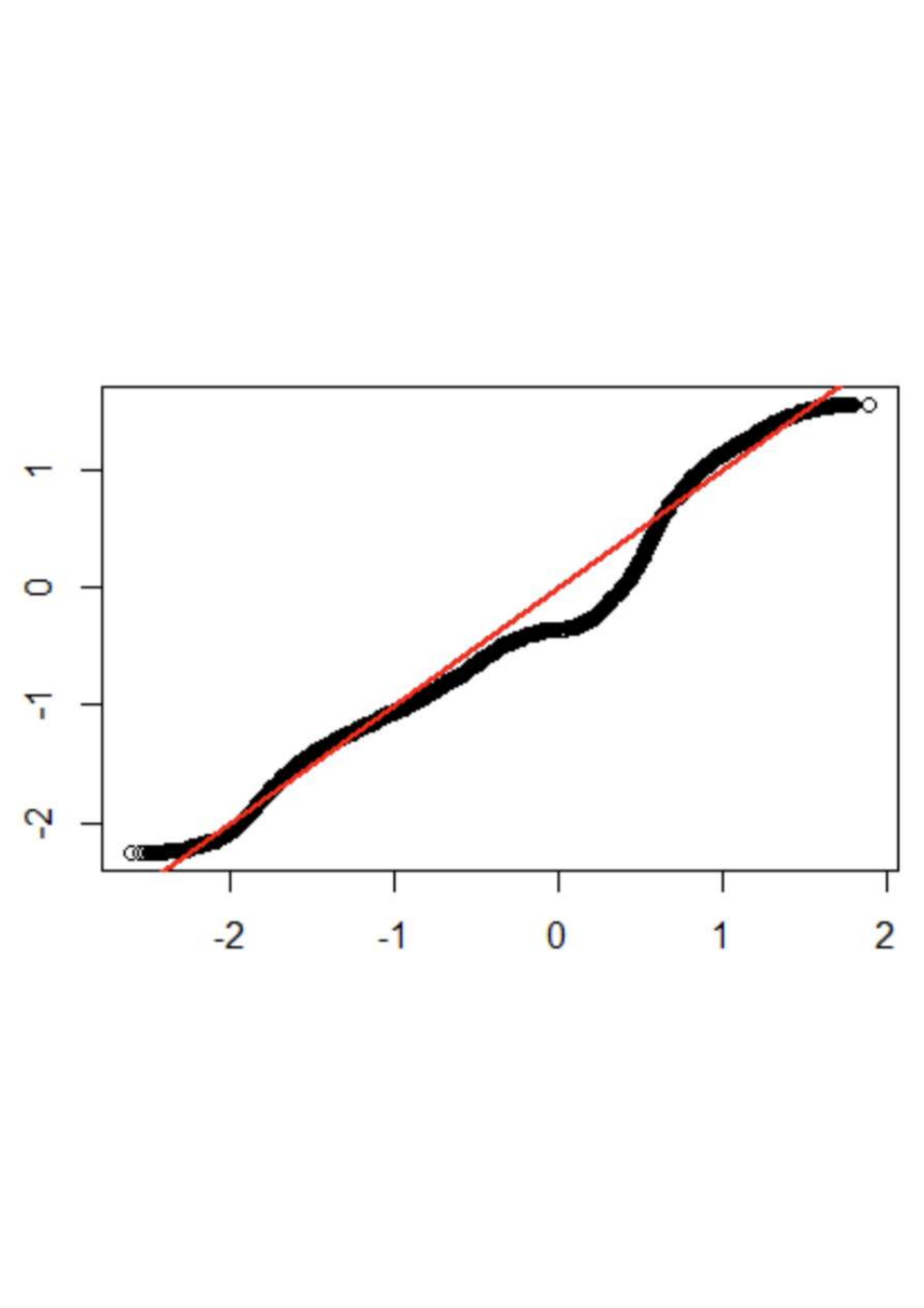}
		\vspace{-2.8cm}
	\end{center} 
	{Histogram with estimated PDF (first panel) and QQ plots (second panel)
		for the BGEV distribution based on the Wind speed data set.}
	\label{fitvv}
\end{figure}
\begin{figure}[H]
	\caption{ } 	
	\begin{center}
				\vspace{-3.5cm}	
		\includegraphics[width=0.4\linewidth]{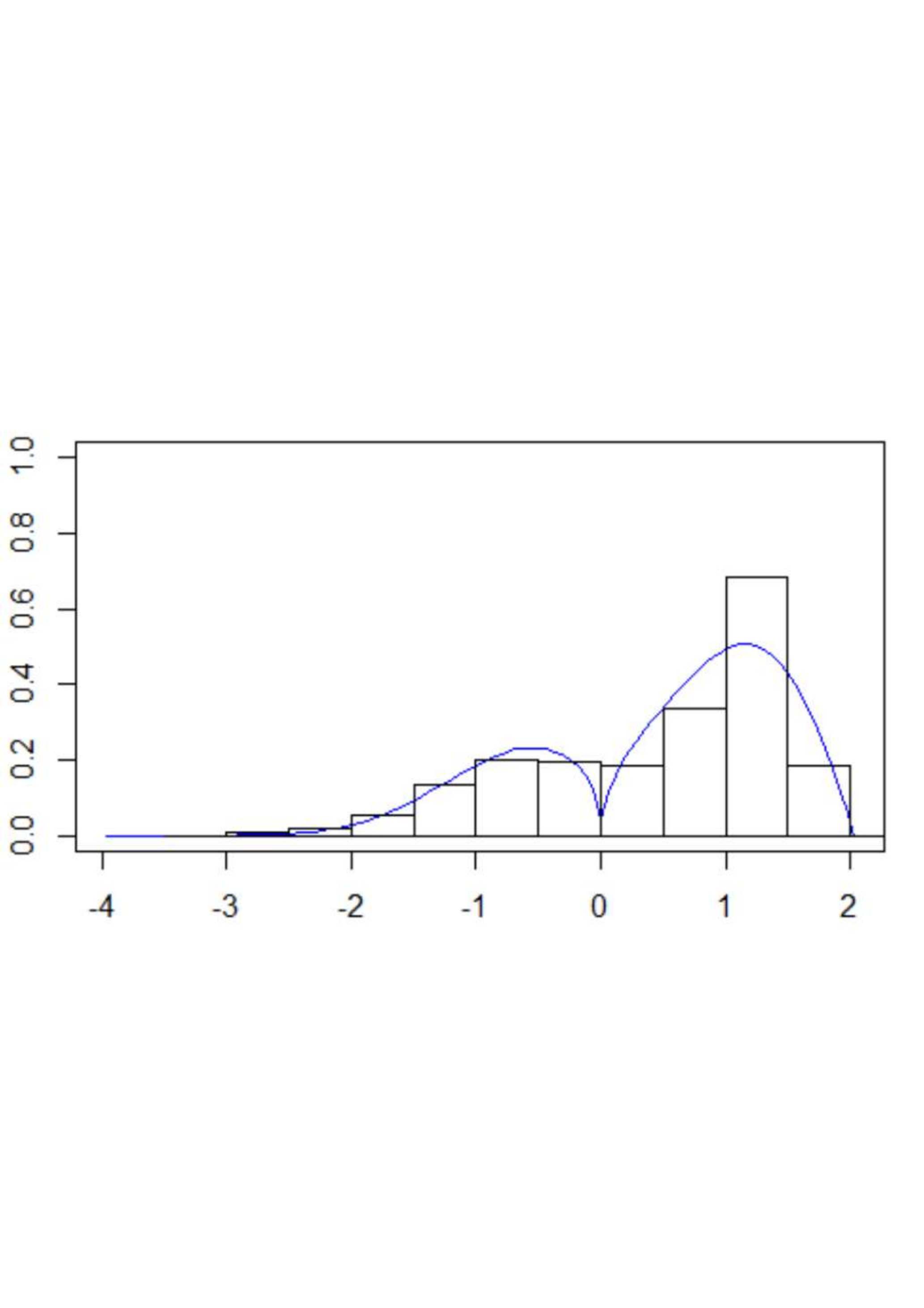}
	\includegraphics[width=0.4\linewidth]{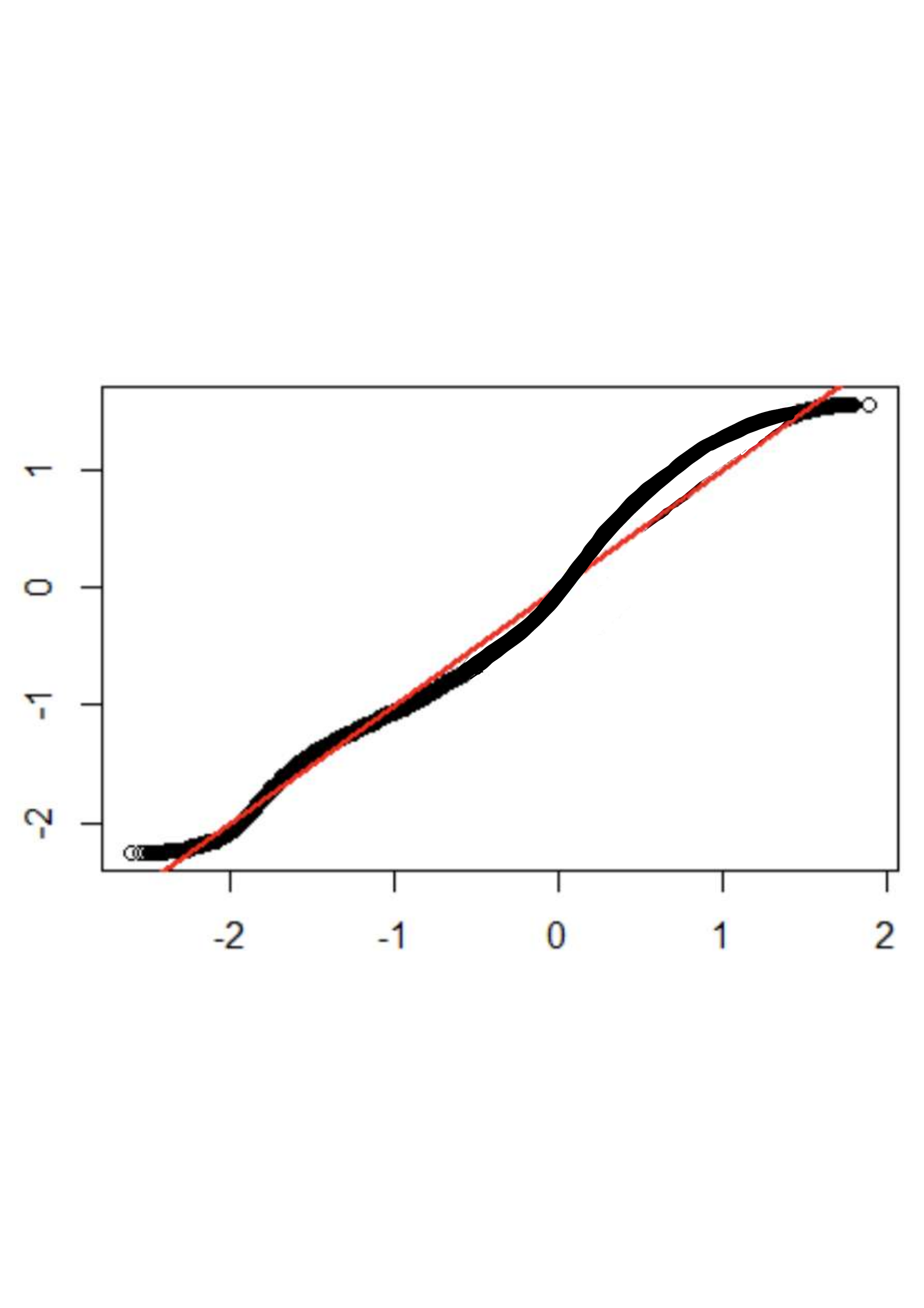}
	\vspace{-2.8cm}
	\end{center}
	{Histogram with estimated PDF (first panel) and QQ plots (second panel)
		for the BGEV distribution based on the Temperature data set.}
	\label{fitpov}
\end{figure}

\section{Conclusions}
\label{sec:8}
\noindent

When extreme value data show bimodality, despite its broad sense applicability
in many fields, the GEV distribution is not suitable.
In this article, we propose a generalization of the GEV distribution, called BGEV distribution, with an additional parameter
which modifies the behavior of the distribution, composing as a alternative models for single
maxima events. The GEV distribution appears as a particular case.
We present important properties of the model and through graphical studies we have shown its wide flexibility. The good performance of the MLEs of the  parameters was tested via Monte Carlo simulation. Applications of the BGEV distribution for two extreme data sets show that the new distribution can be used to provide better adjustments than the GEV to model when the data show bimodality.
We hope this new distribution may attract wider applications for extreme values
analysis.


\end{document}

\section{Appendix A}


\begin{table}[!htbp]
	\centering
	\caption{Empirical means, bias and mean squared errors (MSE) of the
estimates of the parameters for $(\xi, \sigma) = (1, 1)$ and some values of $\mu, \delta$ and $n$.}
\resizebox{\linewidth}{!}{
	\begin{tabular}{cccrrcccrrrrrcccc}
		\hline
		&\multicolumn{6}{c}{Empirical means} && \multicolumn{3}{c}{Bias} && \multicolumn{3}{c}{MSE} \\
		$n$&$\xi$ &  $\widehat{\xi}$  & $\mu$ & $\widehat{\mu}$ &  $\delta$ & $\widehat{\delta}$  && $\widehat{\xi}$ & $\widehat{\mu}$ &  $\delta$  && $\widehat{\xi}$ & $\widehat{\mu}$ &  $\widehat{\delta}$  \\ \hline
		&1 & 1.018& $-$1 &$-$1.017 & 0 & 0.036 && $-$0.018& 0.017&$-$0.036&& 0.049& 0.0214& 0.0171\\
		&1 & 1.038& &$-$1.013& 2 &  2.082 && $-$0.038& 0.013& $-$0.082&& 0.039& 0.0206& 0.1167\\
		&1 &1.034 &$-$1& $-$1.053& 4 & 4.292	&& $-$0.034& 0.053& $-$0.292&& 0.0449& 0.0201 &0.3219\\
50		&1 &1.085 & 0 &0.007 & 0 &0.048	&&$-$0.085 &$-$0.007 &$-$0.048 &&0.0527 &0.0081& 0.0229\\
		&1 &1.085 &0 &0.009 &2 &2.126 &&$-$0.085 &$-$0.009 &$-$0.126 &&0.0433 &0.007& 0.1972 \\
		&1 &1.049 & 0 &0.028 & 4 &4.152 	&&$-$0.049 &$-$0.028 &$-$0.152 &&0.0546 &0.0078& 0.4478\\
		&1 &1.148 &1 &0.97  &0 &0.096 &&$-$0.148 &0.03 &$-$0.096 &&0.0797 &0.029 &   0.0668\\
		&1 &1.083 &1 & 0.995 & 2 &2.14 &&$-$0.083 & 0.005 &$-$0.14 && 0.0867 & 0.0279& 0.4675\\
		&1 &1.125 &1 &0.973 & 4 &4.452	 &&$-$0.125 & 0.027 &$-$0.452&& 0.0954& 0.0295& 2.3334\\ \hline
		&1 & 1.018& $-$1 &$-$1.002& 0 &0.009&& $-$0.018& 0.002& $-$0.009&& 0.0195& 0.0073& 0.0069\\
		&1 & 1.037& $-$1 &$-$1.007& 2 &2.025&& $-$0.037& 0.007& $-$0.025&& 0.0151& 0.0103& 0.0635\\
		&1 & 1.022&$-$1 &$-$1.007 &4 &4.114 &&$-$0.022 &0.007 &$-$0.114 &&0.019 &0.0092& 0.1591\\
100		&1 & 1.038& 0 &0.005  &0  &0.024&& $-$0.038& $-$0.005& $-$0.024&& 0.0189& 0.0027& 0.0086\\
		&1 & 1.07 & 0 &$-$0.005 &2 &2.079 &&$-$0.07 &0.005 &$-$0.079 &&0.021& 0.0034& 0.089\\
		&1 & 1.029& 0 &0.005  &4  &4.079 && $-$0.029& $-$0.005& $-$0.079&& 0.0177 &0.0039& 0.2877\\
		&1 & 1.08 & 1 &0.97   &0  &0.071 &&$-$0.08& 0.03& $-$0.071&& 0.034& 0.0142& 0.0295\\
		&1 & 1.076& 1 &0.978  &2  &2.199 &&$-$0.076& 0.022& $-$0.199&& 0.037& 0.0124& 0.2379\\
		&1 & 1.036& 1 &0.995  &4  &4.199 &&0.036& 0.052& $-$0.159&& 0.037& 0.0124& 0.0237\\ \hline
		&1&  0.999& $-$1&$-$1.003 & 0& 0.003 &&  0.013& 0.001 &0.003  &&$-$0.013 &0.0048 &0.0038\\
		&1&  1.006& $-$1& $-$1.002& 2&  2.026&& $-$0.006& 0.002 &$-$0.026 &&0.0046 &0.0029 &0.0243\\
		&1&  1.018& $-$1& $-$1.01 & 4& 4.051 &&$-$0.018 &0.01   &$-$0.051 &&0.0069 &0.0036 &0.0744\\
250		&1&  1.028& 0 & $-$0.001& 0& 0.019 &&$-$0.028 &0.001  &$-$0.019 &&0.0065 &0.0011 &0.0042\\
		&1&  1.031& 0 & 0.001 & 2& 2.043 &&$-$0.031 &$-$0.001 &$-$0.043 &&0.0078 &0.0012 &0.0318\\
		&1&  1.016& 0 & 0.001 & 4& 4.065 &&$-$0.016 &$-$0.001 &$-$0.065 &&0.0068 &0.0014 &0.1125\\
		&1&  1.044& 1 & 0.983 & 4& 0.039 &&$-$0.044 &0.017  &$-$0.039 &&0.0181 &0.0079 &0.0165\\
		&1&  1.009& 1 & 1.005 & 2& 2.045 &&$-$0.009 &$-$0.005 &$-$0.045 &&0.0138 &0.0067 &0.1171\\
		&1&  1.031& 1 & 0.991 & 4& 4.059 &&$-$0.031 &0.009  &$-$0.059 &&0.0125 &0.0057 &0.2744\\ \hline
		&1    &  1.005  &   1  &   0.996 &  0 & 0.012 &&  $-$0.005  &    0.004  &   $-$0.012&&   0.0028 &   0.0014  &   0.0025      \\
		&1    &  1.011  &   1  &   0.995 &  2 & 2.024 &&  $-$0.011  &    0.005  &  $-$0.024&&  0.0025 &   0.0014  &   0.0213     \\
		&1    &  1.002  &   1  &   1.001 &  4 & 3.998 && $-$0.002  &    $-$0.001  &    0.002&&  0.0029 &   0.0014  &   0.0578      \\
1000	&1    &  1.01   &   0  &   0     &  0 & 0.005 &&  $-$0.010  &    0.000  &   $-$0.005&&   0.002  &    4e$-$04   &  0.0010      \\
		&1    &  1.011  &   0  &  $-$0.003 &  2 & 2.00  &&  $-$0.011  &     0.003  &   $-$0.001&&  0.0019 &    4e$-$04   &  0.0064       \\
		&1    &  1.008  &   0  &   0.001 &  4 & 4.039 && $-$0.008  &    $-$0.001  &  $-$0.039&&  0.0015 &    4e$-$04   &   0.0216       \\
		&1    &  1.001  &  $-$1  &  $-$1.006 &  0 & 0.006 &&   $-$0.001  &    0.006  &   $-$0.006&&  0.0019 &    9e$-$04   &   6e$-$040     \\
		&1    &  0.998  &  $-$1  &  $-$1.002 &  2 & 2.019 &&   0.002  &    0.002  &  $-$0.019&& 0.0011 &    9e$-$04   &  0.0059        \\
		&1    &  1.001  &  $-$1  &  $-$1.001 &  4 & 4.012 && $-$0.001  &    0.001  &   $-$0.012&& 0.0014 &   6e$-$04   &   0.0140    \\	\hline
	\end{tabular}\label{Tablexi1}
}
\end{table}

\begin{table}[!htbp]
	\centering
	\caption{Empirical means, bias and mean squared errors (MSE) of the
estimates of the parameters for $(\xi, \sigma) = (0.5, 1)$ and some values of $\mu, \delta$ and $n$.}
\resizebox{\linewidth}{!}{
	\begin{tabular}{cccrrcccrrrrrcccc}
		\hline
		&\multicolumn{6}{c}{Empirical means} && \multicolumn{3}{c}{Bias} && \multicolumn{3}{c}{MSE} \\
		
		$n$&$\xi$ &  $\widehat{\xi}$  & $\mu$ & $\widehat{\mu}$ &  $\delta$ & $\widehat{\delta}$  && $\widehat{\xi}$ & $\widehat{\mu}$ &  $\delta$  && $\widehat{\xi}$ & $\widehat{\mu}$ &  $\widehat{\delta}$  \\ 		\hline
		&0.5&  0.501 &$-$1& $-$0.999& 0& 0.032&& $-$0.001& $-$0.001& $-$0.032&& 0.0233& 0.024& 0.0129\\
		&0.5&  0.524 &$-$1& $-$1.032& 2& 2.15 &&$-$0.024 &0.032 &$-$0.15 &&0.0268& 0.025& 0.1033\\
		&0.5&  0.486 &$-$1& $-$1.019& 4& 4.215&& 0.014 &0.019 &$-$0.215 &&0.0254& 0.023& 0.2411\\
	50	&0.5&  0.549 &0 & 0.064 & 0& $-$0.049&& 0.028& $-$0.064& 0.0375&& 0.0154& 0.0282 &0321\\
		&0.5&  0.562 &0 & 0.015 & 2& 2.118 &&$-$0.062& $-$0.015& $-$0.118&& 0.045& 0.0175& 0.1552\\
		&0.5&  0.551 &0 & 0.019 & 4& 4.282 &&$-$0.051& $-$0.019& $-$0.282&& 0.0303& 0.0124& 0.4835\\
		&0.5&  0.541 &1 & 1.029 & 0& 0.057 &&$-$0.041& $-$0.029& $-$0.057&& 0.0725& 0. 0255& 0.0515\\
		&0.5&  0.559 &1 & 1.006 & 2& 2.094 &&$-$0.059& $-$0.006& $-$0.094&& 0.0786& 0.0254& 0.3663\\
		&0.5&  0.563 &1 & 0.998 & 4& 4.379 &&$-$0.063& 0.002 &$-$0.379 &&0.083& 0.0238 &1.7225\\ \hline
		&0.5&  0.503& $-$1& $-$1.011& 0& 0.024&& $-$0.003& 0.011& $-$0.024 &&0.0113& 0.0142& 0.0068\\
		&0.5&  0.503& $-$1& $-$1.02 & 2& 2.09 && $-$0.003& 0.02 &$-$0.09 &&0.0088 &0.0128& 0.0469\\
		&0.5&  0.518& $-$1& $-$1.017& 4& 4.145&& $-$0.018& 0.017& $-$0.145&& 0.0094& 0.011& 0.1269\\
	100	&0.5& 0.525 & 0 &  0.005& 0& 0.025&& $-$0.025& $-$0.005& $-$0.025&& 0.0115& 0.0052& 0.0084\\
		&0.5& 0.511 & 0 &  0.021& 2& 2.067&& $-$0.011& $-$0.021& $-$0.067&& 0.0105& 0.0065& 0.0663\\
		&0.5& 0.52  & 0 &  0.012& 4& 4.136&& $-$0.02 &$-$0.012 &$-$0.136 &&0.0131 &0.0057& 0.2212\\
		&0.5& 0.53  & 1 & 1.012 & 0& 0.011&& $-$0.03 &$-$0.012 &$-$0.011 &&0.0281 &0.01 &0.0195\\
		&0.5& 0.53  & 1 & 0.997 & 2& 2.102&& $-$0.03 &0.003 &$-$0.102 &&0.0519 &0.0125 &0.3862\\
		&0.5& 0.516 & 1 & 1.002 & 4& 4.093&& $-$0.016& $-$0.002& $-$0.093&& 0.024 &0.0084& 0.3418\\ 		\hline
		&0.5&  0.513&  $-$1& $-$1.002&0& 0.011&& 0.002 &$-$0.011& 0.0042&& 0.0033& 0.002& 0.232\\
		&0.5&  0.505& $-$1 &$-$1.009 &2& 2.027&& $-$0.005& 0.009& $-$0.027&& 0.0051& 0.004 &0.0164\\
		&0.5&  0.501& $-$1 &$-$1.009 &4& 4.044&& $-$0.001& 0.009& $-$0.044&& 0.0042& 0.0054& 0.0571\\
250		&0.5&  0.511&  0 & 0.009 &0& 0.01 &&$-$0.011 &$-$0.009& $-$0.01 &&0.0033 &0.0024 & 0.0035\\
		&0.5&  0.505&  0 & 0.008 &2& 2.035&& $-$0.005& $-$0.008& $-$0.035&& 0.0053& 0.0026& 0.0315\\
		&0.5&  0.514&  0 &$-$0.005 &4& 4.051&& $-$0.014& 0.005 &$-$0.051&& 0.006 &0.003 &0.0654\\
		&0.5&  0.537&  1 & 0.988 &0& 0.023&& $-$0.037& 0.012 &$-$0.023&& 0.0108 &0.0045& 0.0066\\
		&0.5&  0.511&  1 & 1.005 &2& 2.008&& $-$0.011& $-$0.005& $-$0.008&& 0.0109& 0.0047& 0.0639\\
		&0.5&  0.528&  1 & 0.998 &4& 4.062&& $-$0.028& 0.002 &$-$0.062&& 0.0108 &0.0046 &0.1906\\ \hline
		&0.5  &  0.493  &   1  &   1.003 &  0 &$-$0.003&& 0.007  &   $-$0.003  &    0.003&&0.0026 &    0.0015  &   0.0020 \\
		&0.5  &  0.512  &   1  &   0.992 &  2 & 2.041&&  $-$0.012  &    0.008  &  $-$0.041&& 0.0023 &    0.0013  &   0.0167 \\
		&0.5  &  0.506  &   1  &   0.999 &  4 & 4.023&&$-$0.006  &    0.001  &   $-$0.023 && 0.0025 &    0.0011  &    0.0497\\
1000	&0.5  &  0.507  &   0  &  $-$0.005 &  0 &$-$0.001&& $-$0.007  &    0.005  &    0.001 &&8e$-$04  &    6e$-$04   &    9e$-$04 \\
		&0.5  &  0.501  &   0  &   0.003 &  2 & 2.006&& $-$0.001  &    $-$0.003 &  $-$0.006 && 0.0013 &    7e$-$04   &    0.0061 \\
		&0.5  &  0.505  &   0  &  $-$0.001 &  4 & 3.998&&  $-$0.005  &    0.001  &   0.002 &&9e$-$04  &    7e$-$04   &    0.0135 \\
		&0.5  &  0.501  &  $-$1  &  $-$0.998 &  0 & 0    && $-$0.001  &   $-$0.002  &    0.000 && 0.0011 &    0.0012  &   6e$-$04  \\
		&0.5  &  0.505  &  $-$1  &  $-$1.001 &  2 & 2.004 && $-$0.005  &   0.001   &   $-$0.004 &&  9e$-$04  &    9e$-$04   &   0.0045 \\
		&0.5  &  0.499  &  $-$1  &  $-$0.997 &  4 & 4.007 && 0.001   &  $-$0.003   &   $-$0.007 &&  0.001  &   0.0013  &    0.0149\\
		\hline
	\end{tabular}\label{Tablexi05}
}
\end{table}


\begin{table}[!htbp]
	\centering
	\caption{Empirical means, bias and mean squared errors (MSE) of the
estimates of the parameters for $(\xi, \sigma) = (0.25, 1)$ and some values of $\mu, \delta$ and $n$.}
\resizebox{\linewidth}{!}{
	\begin{tabular}{cccrrcccrrrrrcccc}
		\hline
		&\multicolumn{6}{c}{Empirical means} && \multicolumn{3}{c}{Bias} && \multicolumn{3}{c}{MSE} \\
		
		$n$&$\xi$ &  $\widehat{\xi}$  & $\mu$ & $\widehat{\mu}$ &  $\delta$ & $\widehat{\delta}$  && $\widehat{\xi}$ & $\widehat{\mu}$ &  $\delta$  && $\widehat{\xi}$ & $\widehat{\mu}$ &  $\widehat{\delta}$  \\		\hline
		&0.25&  0.256& $-$1& $-$1.008& 0 & 0.04 && $-$0.045& 0.128& $-$0.04&& 0.0216& 0.0322& 0.0116\\
		&0.25&  0.244& $-$1& $-$1.018& 2 & 2.116&& 0.006 & 0.018& $-$0.116&& 0.0162& 0.0323& 0.0838\\
		&0.25&  0.239& $-$1& $-$1.013& 4 & 4.185&& 0.011 & 0.013& $-$0.185 &&0.0186& 0.0262& 0.2052\\
50		&0.25&  0.273&  0&  0.016& 0 & 0.056&& $-$0.023& $-$0.016& $-$0.056&& 0.0302& 0.0155& 0.0251\\
		&0.25&  0.285&  0& 0.026 & 2 & 2.131&& $-$0.035& $-$0.026& $-$0.131&& 0.035 &0.0168 &0.2199\\
		&0.25&  0.275&  0& 0.02  & 4 & 4.213&& $-$0.025& $-$0.02 &$-$0.213 &&0.0247 &0.0143 &0.4635\\
		&0.25&  0.267&  1& 1.001 & 0 & 0.037&& $-$0.017& $-$0.001& $-$0.037&& 0.0803& 0.029 &0.0414\\
		&0.25&  0.289&  1& 1.017 & 2 &2.197 &&$-$0.039 &$-$0.017 &$-$0.197 &&0.0827 &0.0283 &0.5167\\
		&0.25&  0.292&  1& 1.009 & 4 &4.309 &&$-$0.042 &$-$0.009 &$-$0.309 &&0.0812 &0.0226 &1.3124	\\ \hline
		&0.25& 0.245& $-$1 &$-$1.021& 0& 0.037&& 0.005& 0.021& $-$0.037 &&0.0096& 0.0199& 0.0054\\
		&0.25& 0.247& $-$1 &$-$1.007& 2& 2.071&& 0.003& 0.007& $-$0.071 &&0.0075& 0.0157& 0.048\\
		&0.25& 0.241& $-$1 &$-$1.004& 4& 4.045&& 0.009& 0.004& $-$0.045 &&0.007 &0.0114 &0.1047\\
100		&0.25& 0.255&  0 & 0.027& 0& 0.028&& $-$0.005& $-$0.027& $-$0.028&& 0.0104& 0.0074& 0.007\\
		&0.25& 0.262&  0 & 0.01 & 2& 2.067&& $-$0.012& $-$0.01& $-$0.067 &&0.0107 &0.0087& 0.0579\\
		&0.25& 0.263&  0 & 0.007& 4& 4.115&& $-$0.013& $-$0.007& $-$0.115&& 0.0101& 0.0085& 0.1793\\
		&0.25& 0.249&  1 & 1.009& 0& 0.011&& 0.001 &$-$0.009 &$-$0.011 &&0.025 &0.0108 &0.0124\\
		&0.25& 0.272&  1 & 1.098& 2& 2.058&& $-$0.022&  $-$0.058& 0.0325&& 0.0117& 0.1798& 0.234\\
		&0.25& 0.241&  1 & 1.021& 4& 4.052&& 0.009 &$-$0.021& $-$0.052&& 0.0232& 0.0104& 0.404\\ \hline
		&0.25&  0.2255& $-$1& $-$0.994& 0& 0.007&& $-$0.005& $-$0.006& $-$0.007 &&0.0022 &0.0043& 0.0017\\
		&0.25&  0.243 & $-$1& $-$1.003& 2& 2.053&& 0.007 &0.003 &$-$0.053&& 0.0026& 0.0046& 0.0186\\
		&0.25&  0.254 & $-$1& $-$1.014& 4& 4.072&& $-$0.004& 0.014 &$-$0.072&& 0.003& 0.0052& 0.0436\\
250		&0.25&  0.267 & 0 & 0.002 & 0& 0.02 &&$-$0.017 &$-$0.002 &$-$0.02 &&0.0028& 0.0033& 0.0032\\
		&0.25&  0.255 & 0 & 0.005 & 2& 2.01 &&$-$0.005 &$-$0.005 &$-$0.01 &&0.0034& 0.0026& 0.0229\\
		&0.25&  0.249 & 0 & 0.001 & 4& 4.038 &&0.001 &$-$0.001 &$-$0.038&& 0.0046& 0.0031& 0.0772\\
		&0.25&  0.236 & 1 & 1.01  & 0& $-$0.003&& 0.014& $-$0.01 &0.003 &&0.0079 &0.005 &0.0038\\
		&0.25&  0.241 & 1 & 1.001 &2 & 1.997 &&0.009 &$-$0.001 &0.003 &&0.0068 &0.0037& 0.0469\\
		&0.25&  0.269 & 1 &0.998  &4 & 4.121 &&$-$0.019& 0.002 &$-$0.121 &&0.0087& 0.0049& 0.195\\ 		\hline
		&0.25 &  0.253  &   1  &   1.002 &  0 & 0.002&&$-$0.003  &  $-$0.002  &  $-$0.002&& 0.0015 &    0.001   &   0.0010 \\
		&0.25 &  0.25   &   1  &   0.996 &  2 & 2 &&   0.000  &    0.004  &  0.000  &&  0.002  &    9e$-$04   &   0.0132   \\
		&0.25 &  0.256  &   1  &   0.996 &  4 & 4.018 && $-$0.006  &    0.004  & $-$0.018		&&  0.0018 &    0.0013  &   0.0331 \\
1000	&0.25 &  0.254  &   0  &  $-$0.001 &  0 & 0.002&& $-$0.004  &    0.001  &  $-$0.002 && 9e$-$04  &   8e$-$04   &    6e$-$04 \\
		&0.25 &  0.249  &   0  &   0.004 &  2 & 2.013 && 0.001  &   $-$0.004  &  $-$0.013 && 8e$-$04  &   8e$-$04   &   0.0066\\
		&0.25 &  0.252  &   0  &  $-$0.002 &  4 & 4.012&& $-$0.002  &    0.002  & $-$0.012  && 8e$-$04  &   6e$-$04   &   0.0199\\
		&0.25 &  0.248  &  $-$1  &  $-$1.001 &  0 & 0.002 && 0.002  &     0.001  & $-$0.002 && 6e$-$04  &   0.0014  &   5e$-$04\\
		&0.25 &  0.25   &  $-$1  &  $-$0.998 &  2 & 2.004 && 0.000  &    $-$0.002  & $-$0.004 && 7e$-$04  &   0.0011  &   0.0037\\
		&0.25 &  0.253  &  $-$1  &  $-$1.002 &  4 & 4.015 && $-$0.003  &    0.002  &  $-$0.015 &&  7e$-$04  &   0.001   &   0.0106\\
		\hline
	\end{tabular}\label{Tablexi025}
}
\end{table}


\begin{table}[!htbp]
	\centering
	\caption{Empirical means, bias and mean squared errors (MSE) of the
estimates of the parameters for $(\xi, \sigma) = (-0.25, 1)$ and some values of $\mu, \delta$ and $n$.}
\resizebox{\linewidth}{!}{
	\begin{tabular}{cccrrcccrrrrrcccc}
		\hline
		&\multicolumn{6}{c}{Empirical means} && \multicolumn{3}{c}{Bias} && \multicolumn{3}{c}{MSE} \\
		
$n$		&		$\xi$ &  $\widehat{\xi}$  & $\mu$ & $\widehat{\mu}$ &  $\delta$ & $\widehat{\delta}$  && $\widehat{\xi}$ & $\widehat{\mu}$ &  $\delta$  && $\widehat{\xi}$ & $\widehat{\mu}$ &  $\widehat{\delta}$  \\
		\hline
		&$-$0.25&  $-$0.266& $-$1& $-$0.981& 0& 0.055&& 0.016& $-$0.019& $-$0.055&& 0.0071& 0.0269& 0.0197\\
		&$-$0.25&  $-$0.288& $-$1& $-$1.012& 2& 2.151&& 0.038& 0.012& $-$0.151&& 0.0082& 0.0279& 0.0849\\
		&$-$0.25&  $-$0.296& $-$1& $-$0.997& 4& 4.288&& 0.046& $-$0.003& $-$0.288&& 0.0093& 0.0259& 0.2736\\
50		&$-$0.25&  $-$0.286&  0& 0.002&  0& 0.024&& 0.036& $-$0.002& $-$0.024&& 0.0159& 0.0298& 0.018\\
		&$-$0.25&  $-$0.266&  0& 0.003&  2& 2.094&& 0.016& $-$0.003& $-$0.094&& 0.0125& 0.0243& 0.1833\\
		&$-$0.25 & $-$0.283&  0& 0.031&  4& 4.079 &&0.033& $-$0.031& $-$0.079&& 0.0163& 0.0252& 0.439\\
		&$-$0.25 & $-$0.256		&1& 1.01 &  0&  0.028&& 0.006& $-$0.01& $-$0.028&& 0.0326& 0.0312& 0.0211\\
		&$-$0.25 & $-$0.278&  1& 1.03 &  2& 2.119 &&0.028& $-$0.03 &$-$0.119&& 0.0332& 0.0224& 0.2411\\
		&$-$0.25 & $-$0.257&  1& 1.037&  4& 4.165 &&0.007& $-$0.037 &$-$0.165 &&0.0374& 0.0359& 0.6435		\\ \hline
		&$-$0.25& $-$0.264& $-$1& $-$1.006& 0& 0.027&& 0.014& 0.006& $-$0.027&& 0.0033& 0.0149& 0.0051\\
		&$-$0.25& $-$0.264& $-$1& $-$0.992& 2& 2.056&& 0.014& $-$0.008& $-$0.056&& 0.0032& 0.0169& 0.0435\\
		&$-$0.25& $-$0.272& $-$1& $-$1.002& 4& 4.156&& 0.022& 0.002& $-$0.156&& 0.0041& 0.0116& 0.134\\
100		&$-$0.25& $-$0.258&  0& 0.002 & 0& 0.028&& 0.008& $-$0.002& $-$0.028&& 0.0053& 0.0138& 0.0095\\
		&$-$0.25& $-$0.271&  0& 0.015 & 2& 2.02 &&0.021 & $-$0.015 &$-$0.02&& 0.0056& 0.014& 0.0709\\
		&$-$0.25& $-$0.255&  0& $-$0.001& 4& 4.095&& 0.005& 0.001 &$-$0.095&& 0.005& 0.0118& 0.2409\\
		&$-$0.25& $-$0.256&  1& 1.018 & 0& 0.004&& 0.006& $-$0.018& $-$0.004&& 0.0115& 0.0118& 0.0093\\
		&$-$0.25& $-$0.275&  1& 1.026 &2 &2.007 &&0.025 & $-$0.026& $-$0.007&& 0.0082& 0.013 &0.0599\\
		&$-$0.25& $-$0.274&  1& 1.015 & 4& 4.02 &&0.024 &$-$0.015 &$-$0.02  &&0.0103& 0.0131& 0.234		\\ \hline
		&$-$0.25& $-$0.261&  $-$1& $-$0.994& 0& 0.012&& 0.011& $-$0.006& $-$0.012 &&9e$-$04& 0.0041& 0.0014\\
		&$-$0.25& $-$0.255&  $-$1& $-$0.985& 2& 2.03&& 0.005 &$-$0.015 &$-$0.03  &&0.0011& 0.0043& 0.0166\\
		&$-$0.25& $-$0.259&  $-$1& $-$1.007& 4& 4.037&& 0.009& 0.007 &$-$0.037 &&0.0012& 0.0058& 0.0458\\
250		&$-$0.25& $-$0.257&   0& 0.012 & 0& 0.007&& 0.007&$-$0.012 &0.0018 &&0.0045& 0.0108&0.0769\\
		&$-$0.25& $-$0.257&   0& 0.014 & 2& 2.03 &&0.007 &$-$0.014 &$-$0.03 &&0.0021& 0.0052& 0.027\\
		&$-$0.25& $-$0.258&   0& 0.014 & 4& 4.023&& 0.008& $-$0.014 &$-$0.023&& 0.0022& 0.0067& 0.1011\\
		&$-$0.25& $-$0.258&   1& 1.012 & 0& 0.006&& 0.008& $-$0.012 &$-$0.006&& 0.0044& 0.0046& 0.0035\\
		&$-$0.25& $-$0.268&   1& 0.997 & 2& 1.974&& 0.018& 0.003& 0.026&& 0.0031& 0.0038& 0.026\\
		&$-$0.25& $-$0.256&   1& 1.007 & 4& 4.003&& 0.006& $-$0.007& $-$0.003&& 0.0035& 0.0345& 0.0042\\ \hline
		&$-$0.25 & $-$0.257  &   1  &   1.003 &  0 &$-$0.002&& 0.007  &   $-$0.003  &   0.002&& 9e$-$04  &    0.0013  &   9e$-$04\\
		&$-$0.25 & $-$0.248  &   1  &   1.003 &  2 & 2.01&& $-$0.002  &   $-$0.003  &   $-$0.010&& 8e$-$04  &   0.0013  &   0.0061 \\
		&$-$0.25 & $-$0.253  &   1  &   1     &  4 & 4.003&& 0.003  &    0.000  &   $-$0.003&&  0.001  &    0.0012  &  0.0228 \\
1000	&$-$0.25 & $-$0.251  &   0  &   0     &  0 & 0.002&& 0.001  &     0.000  &  $-$0.002&& 4e$-$04  &    0.0013  &   7e$-$04 \\
		&$-$0.25 & $-$0.252  &   0  &   0     &  2 & 2.004&& 0.002  &     0.000  &  $-$0.004&& 5e$-$04  &   0.0012  &    0.0087 \\
		&$-$0.25 & $-$0.251  &   0  &   0.001 &  4 & 4.003&& 0.001  &   $-$0.001  &   $-$0.003&& 4e$-$04  &     0.0011  &  0.0176 \\
		&$-$0.25 & $-$0.25   &  $-$1  &  $-$1.003 &  0 & 0.003&& 0.000  &   0.003  &   $-$0.003&&  2e$-$04  &    0.003   &    0.0013 \\
		&$-$0.25 & $-$0.253  &  $-$1  &  $-$1     &  2 & 1.998&& 0.003  &   0.000  &    0.002&& 3e$-$04  &   0.0016  &   0.0033\\
		&$-$0.25 & $-$0.254  &  $-$1  &  $-$0.997 &  4 & 4.021&& 0.004  &   $-$0.003  &  $-$0.021 && 2e$-$04  &   0.0012  &   0.0088\\
		\hline
	\end{tabular}\label{Tablexim025}
}
\end{table}


\begin{table}[!htbp]
	\centering
	\caption{Empirical means, bias and mean squared errors (MSE) of the
estimates of the parameters for $(\xi, \sigma) = (-0.5, 1)$ and some values of $\mu, \delta$ and $n$.}
\resizebox{\linewidth}{!}{
	\begin{tabular}{cccrrcccrrrrrcccc}
		\hline
		&\multicolumn{6}{c}{Empirical means} && \multicolumn{3}{c}{Bias} && \multicolumn{3}{c}{MSE} \\
		
$n$		&$\xi$ &  $\widehat{\xi}$  & $\mu$ & $\widehat{\mu}$ &  $\delta$ & $\widehat{\delta}$  && $\widehat{\xi}$ & $\widehat{\mu}$ &  $\widehat{\delta}$  && $\widehat{\xi}$ & $\widehat{\mu}$ &  $\widehat{\delta}$  \\ \hline
		&$-$0.5& $-$0.553& $-$1& $-$0.955& 0& 0.044&& 0.053& $-$0.045& $-$0.044&& 0.0102& 0.0305& 0.0123\\
		&$-$0.5& $-$0.563& $-$1& $-$0.948& 2& 2.229&& 0.063& $-$0.052& $-$0.229&& 0.0129& 0.0319& 0.1927\\
		&$-$0.5& $-$0.579& $-$1& $-$0.952& 4& 4.25 &&0.079 &$-$0.048 &$-$0.25&& 0.0077& 0.0209& 0.3505\\
$50$	&$-$0.5& $-$0.545&  0&  0.04 & 0& 0.017&& 0.045 &$-$0.04& $-$0.017&& 0.012& 0.0259& 0.0197\\
		&$-$0.5& $-$0.565&  0&  0.032& 2& 2.005&& 0.065 &$-$0.032& $-$0.005&& 0.0112& 0.0224& 0.1354\\
		&$-$0.5& $-$0.553& $-$1&$-$0.553&  0&  0.0434&& 4.069& 0.053& $-$0.043&& $-$0.069& 0.0127& 0.0275\\
		&$-$0.5& $-$0.562&  1&  1.021& 0& $-$0.012&& 0.062 & $-$0.021& 0.012&& 0.0258& 0.0248& 0.0211\\
		&$-$0.5& $-$0.532&  1&  1.025& 2& 2.067 &&0.032& $-$0.025& $-$0.067&& 0.0247& 0.0286& 0.1658\\ 	\hline	
		&$-$0.5&  $-$0.536& $-$1& $-$0.964& 0& 0.025&& 0.036& $-$0.036& $-$0.025&& 0.0036& 0.0131& 0.0052\\
		&$-$0.5&  $-$0.528& $-$1& $-$0.982& 2& 2.076&& 0.028& $-$0.018& $-$0.076&& 0.003& 0.0114& 0.0457\\
		&$-$0.5&  $-$0.531& $-$1& $-$0.984& 4& 4.182&& 0.031& $-$0.016& $-$0.182&& 0.0027& 0.0093& 0.1574\\
$100$	&$-$0.5&  $-$0.523&  0&  0.015& 0& 0.015&& 0.023& $-$0.015& $-$0.015&& 0.004& 0.0113& 0.008\\
		&$-$0.5&  $-$0.532&  0&  0.026& 2& 2.053&& 0.032& $-$0.026& $-$0.053&& 0.0042& 0.011& 0.0643\\
		&$-$0.5&  $-$0.513&  0&  0.008& 4& 4.153&& 0.013& $-$0.008& $-$0.153&& 0.0042& 0.0101& 0.2634\\
		&$-$0.5& $-$0.516 &  1&  1.01 & 0& 0.016&& 0.016 &$-$0.01 &$-$0.016&& 0.011 &0.0113& 0.0102\\
		&$-$0.5& $-$0.532 &  1&  1.008& 2& 1.988&& 0.032 &$-$0.008& 0.012 &&0.0095 &0.0116& 0.0731\\
		&$-$0.5& $-$0.527 &  1&  0.983& 4& 3.955&& 0.027 &0.017 &0.045 &&0.0089 &0.0103& 0.1963		\\ \hline
		&$-$0.5 &$-$0.516& $-$1& $-$0.985&  0& 0.018&& 0.016& $-$0.015& $-$0.018&& 7e$-$04& 0.0033& 0.002\\
		&$-$0.5 &$-$0.516& $-$1& $-$0.982&  2& 2.049&& 0.016& $-$0.018& $-$0.049&& 0.0012& 0.0096& 0.0346\\
		&$-$0.5 &$-$0.51 &$-$1 & $-$1.004&  4& 4.096&& 0.01& 0.004& $-$0.096&& 8e$-$04& 0.0035& 0.0767\\
$250$	&$-$0.5 &$-$0.514& 0 & 0.011 &  0& $-$0.003&& 0.014& $-$0.011& 0.003&& 0.0012& 0.0033& 0.003\\
		&$-$0.5 &$-$0.51 & 0 & 0.007 &  2& 2.025 &&0.01& $-$0.007& $-$0.025&& 0.0015& 0.0042& 0.0352\\
		&$-$0.5 &$-$0.509& 0 &$-$0.001 &  4& 4.009 && 0.009& 0.001& $-$0.009&& 0.0012& 0.0038& 0.0823\\
		&$-$0.5 &$-$0.51 & 1 &1.011  &  0& 0.006 &&0.01& $-$0.011& $-$0.006&& 0.0031& 0.0068& 0.0031\\
		&$-$0.5 &$-$0.513& 1 &0.999  &  2& 1.99 &&0.013& 0.001& 0.01&& 0.0028& 0.004& 0.0283\\
		&$-$0.5 &$-$0.511& 1 &1.007  &  4& 3.995 &&0.011& $-$0.007& 0.005&& 0.0033& 0.0061& 0.0712\\ \hline
		&$-$0.5  & $-$0.505  &   1  &   1.006 &  0 & 0&& 0.005  &   $-$0.006  &    0.000 &&  6e$-$04  &   0.0012  &   8e$-$04\\
		&$-$0.5  & $-$0.502  &   1  &   1.003 &  2 & 2.002&& 0.002  &   $-$0.003  &   $-$0.002&&  5e$-$04  &   0.0014  &   0.0063 \\
		&$-$0.5  & $-$0.505  &   1  &   1     &  4 & 4&&  0.005  &    0.000  &   0.000 && 7e$-$04  &   0.0012  &   0.0226 \\
$1000$&$-$0.5  & $-$0.501  &   0  &   0.009 &  0 & 0.017&& 0.001  &   $-$0.009  &   $-$0.017&&0.001  &  0.0048  &    0.0114 \\
		&$-$0.5  & $-$0.505  &   0  &   0.005 &  2 & 2.01&& 0.005  &   $-$0.005  &   $-$0.010&&3e$-$04  &  0.0013  &    0.0065  \\
		&$-$0.5  & $-$0.505  &   0  &   0.004 &  4 & 4.005&&  0.005  &   $-$0.004  &   $-$0.005&&4e$-$04  &  0.0011  &    0.0223  \\
		&$-$0.5  & $-$0.505  &  $-$1  &  $-$0.985 &  0 & 0.023&& 0.005  &   $-$0.015  &   $-$0.023&& 4e$-$04  &  0.0049  &   0.0152\\
		&$-$0.5  & $-$0.503  &  $-$1  &  $-$1     &  2 & 2.02&& 0.003  &    0.000  &   $-$0.020 && 2e$-$04  &  0.0026  &   0.0204\\
		&$-$0.5  & $-$0.505  &  $-$1  &  $-$0.996 &  4 & 4.006 && 0.005  &   $-$0.004  &  $-$0.006&&1e$-$04  &  7e$-$04   &   0.0114\\
		\hline
	\end{tabular}\label{Tablexim05}
}
\end{table}


\begin{thebibliography}{10}
	
\bibitem{AT11}	
Aryal RG, Tsokos PC (2011). Transmuted Weibull Distribution: A Generalization of the Weibull Probability Distribution. European Journal of Pure and Applied Mathematics 4(2): 89-102.

\bibitem{BGS04}	
Beirlant J,  Goegebeur Y,  Segers J, Teugels J (2004) Statistics of Extreme Theory and Applications. John Wiley $\&$ Sons, Hoboken.

\bibitem{BGT87}	
Bingham NH,  Goldie CM, Teugels JL (1987). Regular Variation. Cambridge University Press, Cambridge.


\bibitem{CO01}	
Coles S (2001) An Introduction to Statistical Modeling of Extreme Values. Springer Series in Statistics.

\bibitem{EKM13}	
Embrechts P,  Klüppelberg C, Mikosch T (2013) Modelling Extremal Events: for Insurance and Finance. pringer Science $\&$ Business Media, Berlin. 

\bibitem{Griffiths1947}	
Griffiths L (1947) Introduction to the Theory of Equations. J. Wiley, New York.

\bibitem{HF10}	
 Haan L,  Ferreira A (2010) Extreme Value Theory: An Introduction. Springer Science  $\&$ Business Media.
 
\bibitem{nascimento}	
Nascimento F, Bourguignon, M, Le\~ao  J. (2016) Extended generalized extreme value distribution
with applications in environmental data. Hacettepe Journal of Mathematics and Statistics
45: 1847-1864.

\bibitem{James1966}	
James RC (1966) Advanced Calculus. 
Wadsworth Pub. Co. in Belmont, California.

\bibitem{KN00}	
Kotz SL, Nadarajah S (2000) Extreme Value Distributions: Theory and Applications. Imperial College Press, London.

\bibitem{LO16}	
Longin F (2016) Extreme Events in Finance: A Handbook of Extreme Value Theory and its Applications. ohn Wiley $\&$ Sons.

\bibitem{Nelder65}
 Nelder JA, Mead R (1965) A simplex algorithm for function minimization. Computer Journal 7:308- 31.

\bibitem{OGD16}	
Otiniano CEG, Gonçalves CR, Dorea CCY (2016).
Mixture of extreme-value distributions: identifiability and estimation.
Communications in Statistics - Theory and Methods {46}:6528-6542.


\bibitem{PAIVA20}
Paiva BS (2020)  Distribuição generalizada de valor extremo bimodal.  Dissertation, Universidade de Brasília https://repositorio.unb.br/bitstream/10482/40484/1/2020BiancaSouzadePaiva.pdf.


\bibitem{RS07}
Rathie PN, Swamee PK (2007). Invertible Alternatives to Normal
and Lognormal Distributions. Journal of Hifrologic engineering, 12(2): 218-221.



\bibitem{Eljabri}
Eljabri S, Nadarajah S (2017) The Kumaraswamy GEV distribution, Communications in Statistics - Theory and Methods 46: 10203-10235.

\bibitem{SCH}	
 Scheirer WJ (2017) Extreme Value Theory-Based Methods for Visual Recognition. Morgan $\&$ Claypool Publishers.

\bibitem{Spall2005}	
Spall JC (2005)
Monte Carlo Computation of the Fisher Information Matrix in Nonstandard Settings. Journal of Computational and Graphical Statistics, {14}: 889-909.


 \bibitem{R20}
 R-Team (2020) R: A Language and Environment for Statistical Computing. R Foundation for Statistical Computing, Vienna, Austria.



 \bibitem{TRA16}
 Trapletti A (2016)  R: Box-Pierce and Ljung-Box Tests. stat.ethz.ch. Retrieved 2016-06-05. URL https://stat.ethz.ch/R-manual/R-devel/library/stats/html/box.test.html


\bibitem{fext17}
 Wuertz D, Setz T,  Chalabi Y (2017). Package fExtremes Package: Rmetrics - Modelling Extreme Events in Finance. Retrieved 2017-11-12. URL https://cran.r-project.org/web/packages/fExtremes/fExtremes.pdf.


\bibitem{Xue2012}	
Xue J (2012) Loop Tiling for Parallelism. The Springer International Series in Engineering and Computer Science. Springer, US.
	
\end{thebibliography}
\end{document}